\keywords{Human Neural Networks, Neuronal Networks, Archetypes,
  Spiking Neural Networks, Leaky Integrate-and-Fire Modeling, Theorem Proving, Formal Verification, Coq, Rocq}
\theoremstyle{plain} 
\definecolor{lightblue}{rgb}{0, 0, 0}
\begin{document}

\title{Modelling and Verifying Neuronal Archetypes in Rocq}

\author[A.~Bahrami]{Abdorrahim Bahrami}[a]
\author[R.~Zucchini]{R\'ebecca Zucchini}[a]
\author[E.~De Maria]{Elisabetta De Maria}[b]
\author[A.~Felty]{Amy Felty}[a]

\address{School of Electrical Engineering and Computer Science,
  University of Ottawa, Canada}	
\email{abahrami@uottawa.ca, zucchinii.rebecca@gmail.com, afelty@uottawa.ca}  

\address{Universit\'{e} C\^{o}te d'Azur, CNRS, I3S, 06903 Sophia
  Antipolis Cedex, France}	
\email{Elisabetta.DE-MARIA@univ-cotedazur.fr}  



\begin{abstract}
Formal verification has become increasingly important because of the kinds of guarantees that it can provide for software systems.
Verification of models of biological and medical systems is a promising application of formal verification.
Human neural networks have recently been emulated and studied as a biological system. 
In this paper, we provide a model of some crucial neuronal circuits, called \emph{archetypes}, in the Rocq Prover and prove properties concerning their dynamic behavior.
Understanding the behavior of these modules is crucial because they constitute the elementary building blocks of bigger neuronal circuits.
We consider seven fundamental archetypes (simple series, series with multiple outputs, parallel composition, positive loop, negative loop, inhibition of a behavior, and contralateral inhibition), and prove an important representative property for six of them.
In building up to our model of archetypes, we also provide a general model of \textit{neuronal circuits}, and prove a variety of general properties about neurons and circuits.
In addition, we have defined our model with a longer term goal of modelling the composition of basic archetypes into larger networks, and structured our libraries with definitions and
lemmas useful for proving the properties in this paper as well as those to be proved as future work.
\end{abstract}

\maketitle

\section{Introduction}
\label{sec:intro}
In this work, we apply formal reasoning techniques to the verification of the dynamic behavior of biological human neural networks.
We focus on the level of micro-circuits, and in particular, study \emph{neuronal archetypes}, which are the most elementary circuits, consisting of a few neurons fulfilling a specific computational function.
These archetypes can be coupled to create the elementary building blocks of bigger neuronal circuits.
As an example of a well-known archetype, locomotive motion and other rhythmic behaviors are controlled by specific neuronal circuits called Central  Pattern Generators (CPG) \cite{Matsuoka87BC}, which rely on \emph{contralateral inhibition} (see Section~\ref{subsec:archetypes}).
They can be shown to have various oscillation properties under specific conditions at the circuit level.

The field of systems biology is a recent application area for formal methods, and such techniques have turned out to be very useful so far in this domain \cite{GH15TCS}. 
By modelling and proving properties of a biological system, we open up the potential for deeper understanding of behaviour, disease, effects of medicine, external problems, environmental change impacts, and system recovery of a biological system.
With regard to our particular focus on archetypes, it would be extremely difficult to prove the properties that are expected to hold based on the biological theory through real biological experiments. Exploiting formal methods, and in particular, theorem proving, is an important novel aspect of our work.

As far as the modelling of biological systems is concerned, one approach is to model such systems as graphs whose nodes represent the different possible configurations of a system and whose edges encode meaningful configuration changes.
It is then possible to define and prove properties concerning the temporal evolution of the biological species involved in the system~\cite{FSC04JBPC,RCB04CMSB}.
This often allows deep insight into the biological system at issue, in particular concerning the biological transitions governing it, and the reactions the system will have when confronted with external factors such as disease, medicine, and environmental changes~\cite{DFRS11TCS,TK17CMSB}.
Overall, the literature includes both qualitative and quantitative approaches to model biological systems \cite{wileybook}.
To express the qualitative nature of dynamics, some commonly used formalisms include Thomas' discrete models~\cite{TTK95BMB}, Petri nets~\cite{RML93ISMB}, $\pi$-calculus~\cite{RSS01PSB}, bio-ambients~\cite{RPSCS04TCS}, and reaction rules~\cite{CCDFS04TCS}.
To capture the dynamics from a quantitative point of view, ordinary or stochastic differential equations have been used extensively.
More recent approaches include hybrid Petri nets~\cite{HT98ISB} and hybrid automata~\cite{ABIKMPRS01HSCC}, stochastic $\pi$-calculus~\cite{PC07CMSB}, and rule-based languages with continuous/stochastic dynamics such as Kappa~\cite{DL04TCS}.
Relevant properties concerning the obtained models are then often expressed using a formalism called temporal logic and verified thanks to model checkers such as NuSMV~\cite{CCGR99CAV}, SPIN \cite{H04Addison} or PRISM~\cite{KNP11CAV}.

In~\cite{DDF14FMMB}, the authors propose the use of modal linear logic as a unified framework to encode both biological systems and temporal properties of their dynamic behavior.
They focus on a model of the P53/Mdm2 DNA-damage repair mechanism and they prove some desired properties using theorem proving techniques.
In~\cite{RHST17PO}, the authors advocate the use of higher-order logic to formalize reaction kinetics and exploit the HOL Light theorem prover to verify some reaction-based models of biological networks.
As another example, the Porgy system is introduced in~\cite{FKPINAUD19MSCS}.
It is a visual environment which allows modelling of biochemical systems as rule-based models.
Rewriting strategies are used to choose the rules to be applied.

As far as human neural networks are concerned, there is recent work that has focused on their formal verification.
In~\cite{DLGRG17CSBIO,DMGRG16HSB}, the authors consider the synchronous paradigm to model and verify some specific graphs composed of a few biological neurons.
They introduce and consider most of the archetypes that we cover here.
In that work, some model checkers such as Kind2~\cite{kind2} are employed to automatically verify properties concerning the dynamics of six basic archetypes and their coupling.
However, model checkers prove properties for some given parameter intervals, and do not handle inputs of arbitrary length.
In our work, we use the Rocq Prover~\cite{BC04book,CoqRefManual} to prove important properties of neurons and archetypes.
Rocq implements a highly expressive higher-order logic in which we can directly introduce datatypes modelling neurons and archetypes, and express properties about them.
As a matter of fact, one of the main advantages of using Rocq is the generality of its proofs.
Using such a system, we can prove properties about arbitrary values of parameters, such as any length of time, any input sequence, or any number of neurons.
We use Rocq’s general facilities for structural induction and case analysis, as well as Rocq’s standard libraries that help in reasoning about rational numbers and functions on them.
We believe the approach presented in this paper for reasoning about neural networks is very promising, because it can be exploited for the verification of other kinds of biological networks, such as gene regulatory, metabolic, or environmental networks.

A long term goal of this work is to identify a complete set of circuits at the level of archetypes from which all larger circuits in the brain can be modeled by composing these basic building blocks.
From an electronic perspective, we consider archetypes as biologically inspired logical operators, which are easily adjustable by playing with very few parameters.
Here, we take a significant step toward our goal, and prove a number of properties that have been identified during extensive discussions with neurophysiologists~\cite{DLGRG17CSBIO,DMGRG16HSB}.
This paper can be considered as an extended journal version of~\cite{BDF18CSBIO}, where we presented the first properties and their proofs about single-input neurons, which are simple neurons with only one input.
We also extended one of these properties to the \emph{simple series} archetype.
In~\cite{DBLFGRG22FCS}, we proved an additional property of single-input neurons in the context of comparing our theorem proving approach to model checking.
In~\cite{DDFLOB23ISTE}, we also discussed some of these properties within the more general context of our work on computational logic applied to systems biology.
In terms of proving properties in Rocq about archetypes, those papers considered only one archetype, while here we consider six, including properties that are significantly more complex.
The main contributions of this paper are as follows.
\begin{itemize}
\item We present a formal model of neurons and \emph{archetypes} in Rocq; this model is more general and flexible than those in our previous work~\cite{BDF18CSBIO,DBLFGRG22FCS}.
We define the model in stages.
    \begin{itemize}
    \item We define neurons and important properties such as equivalence and change of state over time.
    \item We model \emph{neuronal circuits}, which include a set of neurons and their internal connections, as well as external connections to sources of input.
    \item We define archetypes as a special form of circuit.
    \end{itemize}
\item We prove important representative properties of the majority of these archetypes.
\item In doing so, we build a large Rocq library of definitions and lemmas about neurons and neuronal circuits, which are crucial for proving these properties and for achieving our future work goals.
\end{itemize}

The paper is organized as follows.
In Section~\ref{sec:background}, we introduce the state of the art relative to neural network modelling, we describe the computational model we have chosen, the Leaky Integrate-and-Fire model (LI\&F), and we briefly introduce the basic archetypes that we consider.
In Section~\ref{sec:model}, we introduce the Rocq Prover and present our model of neuronal circuits in Rocq, which includes definitions of neurons and operations on them, as well as definitions for combining them into general circuits and specific archetypes.
In Section~\ref{sec:neuronprops}, we present and discuss several important properties of single neurons, starting with properties of multiple-input neurons and the relation between the input and output, and then considering the particular case of single-input neurons.
In Section~\ref{sec:circuitprops}, we present some properties of general neuronal circuits.  Because archetypes are circuits, these properties will hold of all archetypes also.
In Section~\ref{sec:archetypeprops}, we present properties of archetypes, moving toward more complex properties that express interactions between neurons and behaviors of neuronal circuits in general.
Finally, in Section~\ref{sec:concl}, we conclude and discuss future work.
The accompanying Rocq code can be found at https://github.com/afelty/NeuronalArchetypesAppendix.git.\footnote{The files currently run in Rocq V9.1.0.}

\section{Background}
\label{sec:background}

We start in Section~\ref{subsec:neural} with some background on neural network modelling.
In Section~\ref{subsec:LIF}, we present the details of the model we use in this work, the Leaky Integrate-and-Fire model (LI\&F).
Finally, in Section~\ref{subsec:archetypes}, we present the seven basic neuronal circuits, or \emph{archetypes}, that we consider.

\subsection{Neural and Neuronal Network Modelling}
\label{subsec:neural}
Neurons are the smallest unit of a neural network~\cite{doi:10.1086/504005}.
They are basically just a single cell.
We can consider them simply as a function with one or more inputs and a single output.
A human neuron receives its inputs via its dendrites.
Dendrites are short extensions connected to the neuron body, which is called a soma.
Inputs are provided in the form of electrical pulses (spikes).
For each neuron there is another extension, called the axon, which plays the role of output.
This extension is also connected to the cell body, but it is longer than the dendrites.
Each neuron has its own activation threshold which is coded somehow inside the soma. The dynamics of each neuron is
characterized through its (membrane) potential value,
which represents the difference of electrical potential
across the cell membrane. The potential value depends on the current input spikes received by the neuron through its dendrites, as well as the decayed value of the previous potential.
When the potential passes its threshold, the neuron fires a spike in the axon. 
Neurons can be connected to other neurons.
Connections happen between the axon of a neuron and a dendrite of another neuron.
Theses connections are called synaptic connections and the location of the connection is called a synapse.
They are responsible for transmitting signals between neurons.

In this paper, we consider third generation models of neural networks.
They are called spiking neural networks~\cite{Maass97,Moisy12,S22IEEE} and they are known for their functional similarity to the biological neural network \cite{YV22BS}.
These biologically realistic models of neurons can carry out efficient computations and are widely employed in the fields of multimedia (tasks such as image classification \cite{BouletIJCNN19}, object detection \cite{Miramond24, SCB23BC}, sound classification \cite{Wu2018}), and robotics \cite{Marrero2024}.
They have been proposed in the literature with different complexities and capabilities.
In this work we focus on the Leaky Integrate-and-Fire (LI\&F) model originally proposed in~\cite{lapicque1907}.
It is a computationally efficient approximation of a single-compartment model~\cite{izhikevich04} and is abstract enough to be able to apply formal verification techniques.
In such a model, neurons integrate present and past inputs in order to update their membrane potential values.
Whenever the potential exceeds a given threshold, an output signal is fired.

As far as spiking neural networks are concerned, in the literature there are a few attempts at giving formal models for them.
In~\cite{AC16TCS}, a mapping of spiking neural P systems into timed automata is proposed.
In that work, the dynamics of neurons are expressed in terms of evolution rules and durations are given in terms of the number of rules applied.
Timed automata are also exploited in~\cite{DD18BIOSTEC, naco20} to model LI\&F networks.
This modelling is substantially different from the one proposed in~\cite{AC16TCS} because an explicit notion of duration of activities is given.
Such a model is formally validated against some crucial properties defined as temporal logic formulas and is then exploited to find an assignment for the synaptic weights of neural networks so that they can reproduce a given behavior.
In \cite{Partha2023}, spiking neural networks are soundly mapped into timed automata and several biologically plausible behaviors of individual spiking neurons are formally verified for three classes of spiking models (LI\&F, Quadratic Integrate and Fire, and Izhikevich).
 
The work mentioned earlier that employs model checking~\cite{DLGRG17CSBIO,DMGRG16HSB} also uses the LI\&F model, in this case modelling LI\&F neurons and some basic small circuits using the synchronous language Lustre.
Such a language is dedicated to the modelling of reactive systems, i.e., systems which constantly interact with the environment and which may have an infinite duration.
It relies on the notion of logical time: time is considered as a sequence of discrete instants, and an instant is a point in time where external input events can be observed, computations can be done, and outputs can be emitted.
Lustre is used not only to encode neurons and some basic archetypes (simple series, parallel composition, etc.), but also some properties concerning their dynamic evolution.
As mentioned, Kind2 was employed, and in particular, these properties were automatically proved for some given parameter intervals.

LI\&F networks extended with probabilities are formalized as discrete-time Markov chains in~\cite{DGRR18BIOSTEC}.
The proposed framework is then exploited to propose an algorithm which reduces the number of neurons and synaptic connections of input networks while preserving their dynamics.

\subsection{Discrete Leaky Integrate-and-Fire Model}
\label{subsec:LIF}
\ \

In this section, we introduce a discrete (Boolean) version of LI\&F modeling.
Notice that discrete modeling is well suited to this kind of model because neuronal activity, as with any recorded physical event, is only known through discrete recording (the recording sampling rate is usually set at a significantly higher resolution than the one of the recorded system, so that there is no loss of information).
We first present the basic biological knowledge associated to the modeled phenomena and then we detail the adopted model.

When a neuron receives a signal at one of its synaptic connections, it produces an excitatory or an inhibitory \textit{post-synaptic potential} (PSP) caused by the opening of selective ion channels according to the post-synaptic receptor nature.
An inflow of cations in the cell leads to an activation; an inflow of anions in the cell corresponds to an inhibition.
This local ions flow modify the membrane potential either through a depolarization (excitation) or a hyperpolarization (inhibition).
Such variations of the membrane potential are progressively transmitted to the rest of the cell.
The potential difference is called \textit{membrane potential}.
In general, several to many excitations are necessary for the membrane potential of the post-synaptic neuron to exceed its \textit{depolarization threshold}, and thus to emit an \textit{action potential} at its axon hillock to transmit the signal to other neurons.

Two phenomena allow the cell to exceed its depolarization threshold: the \textit{spatial summation} and the \textit{temporal summation} \cite{doi:10.1086/504005}.
Spatial summation allows to sum the PSPs produced at different areas of the membrane.
Temporal summation allows to sum the PSPs produced during a finite time window.
This summation can be done thanks to a property of the membrane that behaves like a capacitor and can locally store some electrical loads (\textit{capacitive property}).

The neuron membrane, due to the presence of leakage channels, is not a perfect conductor and capacitor and can be compared to a resistor inside an electrical circuit.
Thus, the range of the PSPs decreases with time and space (\emph{resistivity} of the membrane).

A LI\&F neuronal network is represented with a weighted directed graph where each node stands for a neuron soma and each edge stands for a synaptic connection between two neurons.
The associated weight for each edge is an indicator of the weight of the connection on the receiving neuron: a positive (resp. negative) weight is an activation (resp. inhibition).

The depolarization threshold of each neuron is modeled via the \textit{firing threshold } $\tau$, which is a numerical value that the neuron membrane potential $p$ shall exceed at a given time $t$ to emit an action potential, or \textit{spike}, at the time $t+1$.

The membrane resistivity is symbolized with a numerical coefficient called the \textit{leak factor} $r$, which allows to decrease the range of a PSP over time.

Spatial summation is implicitly taken into account.
In our model, a neuron $u$ is connected to another neuron $v$ via a single synaptic connection of weight $w_{uv}$.
This connection represents the entirety of the shared connections between $u$ and $v$.
Spatial summation is also more explicitly taken into account with the fact that, at each instant, the neuron sums each signal received from each input neuron.
The temporal summation is done through a sliding integration window of length $\sigma$ for each neuron to sum all PSPs.
Older PSPs are decreased by the leak factor $r$.
This way, the biological properties of the neuron are respected and the computational load remains limited.
This allows us to obtain finite state sets, and thus to easily apply model checking techniques.

More formally, the following definition can be given:

\begin{defi}\textbf{Boolean Spiking Integrate-and-Fire Neural Network.}
			\label{def.snn}
			\emph{A spiking Boolean integrate and fire neural network} is a tuple $(V,\, E,\, w)$, where: \begin{itemize}
				\item $V$ are Boolean spiking integrate and fire neurons,
				\item $E \subseteq V \times V$ are synapses,
				\item $w: E \rightarrow \mathbb{Q} \cap [-1,1]$ is the synapse weight function associating to each synapse $(u,\, v)$ a weight $w_{uv}$.
			\end{itemize}
            \emph{A spiking Boolean integrate and fire neuron} is a tuple $(\tau, r, p, y)$, where:
            \begin{itemize}
				\item $\tau \in \mathbb{Q}^+$ is the \emph{firing threshold},
                \item $r \in \mathbb{Q} \cap [0, 1]$ is the \emph{leak factor},

                \item $p: \mathbb{N} \rightarrow \mathbb{Q}$ is the [membrane] \emph{potential} function defined as

                \[
p(t) =\left\{ \begin{array}{ll}
\sum_{i=1}^{m} w_i \cdot x_i(t) & \mathrm{if\ } p(t-1) \geqslant \tau\\
\sum_{i=1}^{m} w_i \cdot x_i(t) + r \cdot p(t - 1) \quad &\mathrm{otherwise}
\end{array}\right.
\]
		where $p(0)=0$, $m$ is the number of inputs of the neuron, $w_i$ is the weight of the synapse connecting the $i^{th}$ input neuron to the current neuron, and  $x_i(t) \in \{0, 1\}$ is the signal received at the time $t$ by the neuron through its $i^{th}$ input synapse (observe that, after the potential exceeds its threshold, it is reset to $0$),
                \item $y: \mathbb{N} \rightarrow \{0, 1\}$ is the neuron output function,  defined as
    \[
			y(t) = \left\{ \begin{array}{ll}
				1 \quad & \mathrm{if\ }p(t) \geqslant \tau \\
				0 & \mathrm{otherwise}.
			\end{array}\right.
                        \]
	    \end{itemize}	
\end{defi}

\subsection{Archetypes}
\label{subsec:archetypes}

As mentioned, in neural networks, it is possible to identify some mini-circuits with a relevant topological structure.
Each one of these small modules, which are often referred to as archetypes in the literature (e.g.,~\cite{DMGRG16HSB}), displays a specific class of behaviors.
They are shown in Figure~\ref{fig:archetypes}.
\begin{figure}[htb]
\includegraphics[scale=.45]{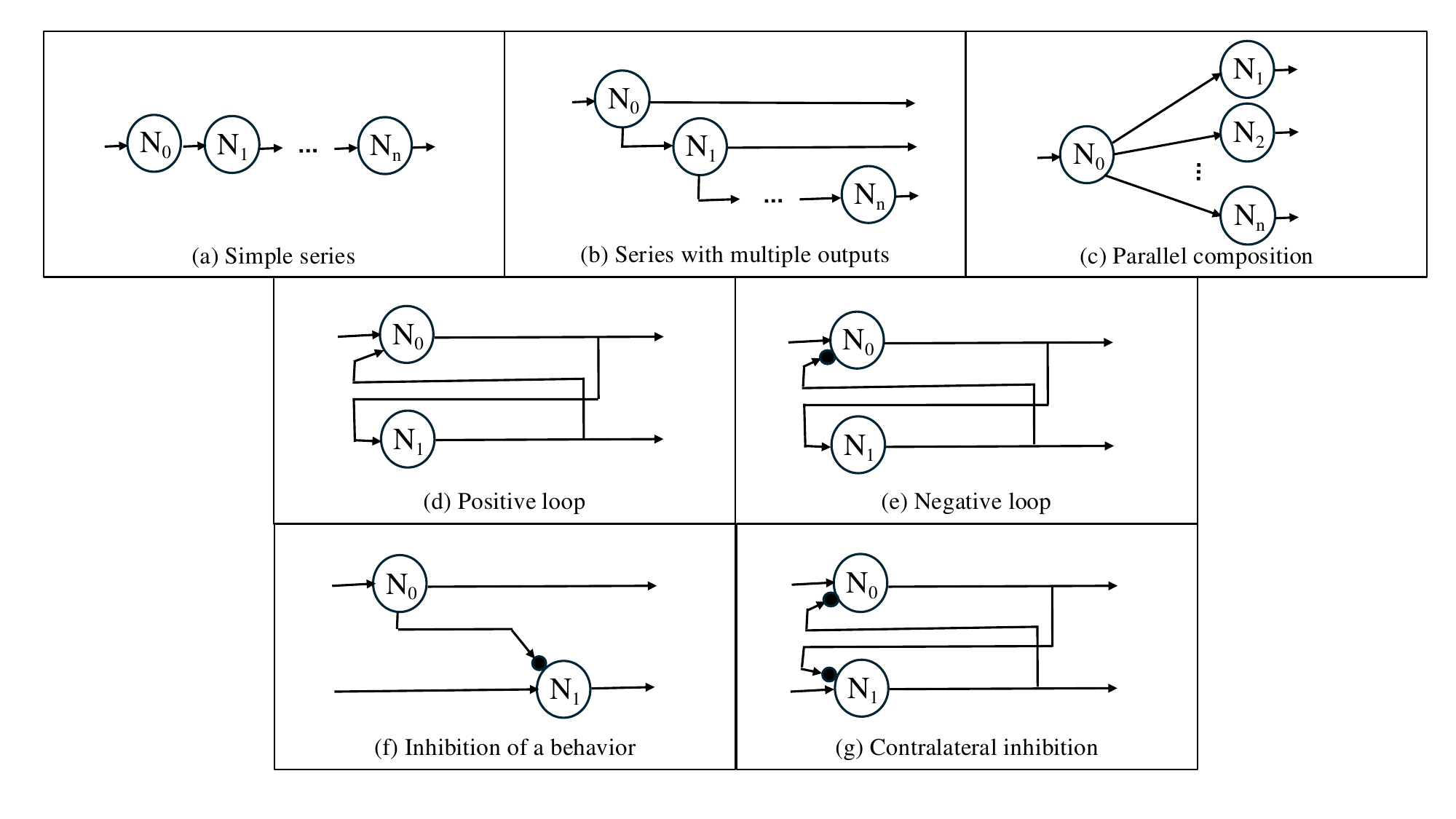}
\caption{The basic neuronal archetypes}
\label{fig:archetypes}
\end{figure}
We have added an additional one, the \emph{positive loop}, appearing as (d) in the figure, since it is also relevant and has its own interesting properties.
\begin{enumerate}[label=(\alph*)]
\item {\emph{Simple series}} is a sequence of neurons where each element of the chain receives as input the output of the preceding one.
\item {\emph{Series with multiple outputs}} is a series where, at each time unit, we are interested in knowing the outputs of all the neurons (i.e., all the neurons are considered as output neurons).
\item {\emph{Parallel composition}} is a set of neurons receiving as input the output of a given neuron.
\item {\emph{Positive loop}} is a loop consisting of two neurons: the two neurons activate each other.
\item {\emph{Negative loop}} is a loop consisting of two neurons: the first neuron activates the second one while the latter inhibits the former.
\item {\emph{Inhibition of a behavior}} consists of two neurons, the first one inhibiting the second one.
\item {\emph{Contralateral inhibition}} consists of two or more neurons, each one inhibiting the other(s).
\end{enumerate}
The arrows entering neuron $N_0$ in the archetypes can be considered as output coming from a neuron outside the circuit, i.e., some neuron other than the ones shown in the figure.  We call these neurons \emph{external sources of input}.  Note that the inhibition and contralateral inhibition archetypes have a second external source of input linked to $N_1$.  
We define the \emph{environment of a ciruit} to be all of the neurons in the circuit plus all of the neurons serving as external sources of input.

\section{Modelling Neurons and their Properties in Rocq}
\label{sec:model}

The Rocq Prover is an interactive theorem prover that implements a highly expressive logic called the Calculus of Inductive Constructions (CIC)~\cite{BC04book}.
It has been widely used in the verification of programs and systems.
Proofs of correctness of computer systems and (in our case) biological systems involve many technical details that are easy to get wrong; formal proof helps to automate the repetitive cases as well as guarantee that no cases are omitted.

Our development does not depend on advanced features of Rocq like dependent types or the hierarchy of universes, and thus while we take into account everything about the model and properties, our model is not difficult to understand, even for readers with a basic knowledge of theorem proving, and could likely be translated fairly easily to other interactive theorem provers such as Nuprl~\cite{Nuprl86}, the PVS Specification and Verification System \cite{pvs}, and Isabelle/HOL \cite{isabelle}.

\subsection{Introduction to Rocq}
\label{coqintro}
In this section, we introduce the main Rocq features that we exploit for neural network modeling.
More complete documentation of Rocq can be found in~\cite{BC04book,CoqRefManual}.
Expressions in CIC include a functional programming language.
This language is typed (every Rocq expression has a type).
For instance, \texttt{X:nat} says that variable \texttt{X} takes its value in the domain of natural numbers.
The types we employ in our model include \texttt{nat}, \texttt{Q}, \texttt{bool}, and \texttt{list} which denote natural numbers, rational numbers, booleans, and lists of elements respectively.
These types are available in Rocq's standard libraries.
All the elements of a list must have the same type.
For example, \texttt{L:list nat} expresses that \texttt{L} is a list of natural numbers.
An empty list is denoted by \texttt{[]} or \texttt{nil} in Rocq.
Functions are the basic components of functional programming languages. The general form of a Rocq function is shown in Figure~\ref{fig:Coqfunction}.
\begin{figure}[htb]
\begin{tabular}{l}
\texttt{Definition}/\texttt{Fixpoint} Function\_Name\\
\quad (Input$_1$: Type of Input$_1$) $\dots$
      (Input$_n$: Type of Input$_n$): Output Type :=\\
\quad\quad Body of the function.
\end{tabular}
\caption{General form of a function definition in Rocq}
\label{fig:Coqfunction}
\end{figure}
The Rocq keywords \texttt{Definition} and \texttt{Fixpoint} are used to define non-recursive and recursive functions, respectively.
These keywords are followed by the function name.
After the function name, there are the input arguments and their corresponding types.
Inputs having the same type can be grouped.
For instance, \texttt{(X Y Z: Q)} states all variables \texttt{X}, \texttt{Y}, and \texttt{Z} are rational numbers.
Inputs are followed by a colon, which is followed by the output type of the function.
Finally, there is a Rocq expression representing the body of the function, followed by a dot.

In Rocq, pattern matching is exploited to perform case analysis.
This useful feature is used, for instance, in recursive functions, for discriminating between base cases and recursive cases.
For example, it is employed to distinguish between empty and nonempty lists.
A non-empty list consists of the first element of the list (the \emph{head}), followed by a double colon, followed by the rest of the list (the \emph{tail}).
The tail of a list itself is a list and its elements have the same type as the head element.
For instance, let $L$ be the list \texttt{(5::2::9::nil)} containing three natural numbers.
In Rocq, the list $L$ can also be written as \texttt{[5;2;9]}, where the head is \texttt{5} and the tail is \texttt{[2;9]}.
Thus, non-empty lists in Rocq often follow the general pattern \texttt{(h::t)}.
In addition, there are two functions in the Rocq library called \texttt{hd} and \texttt{tl} that return the head and the tail of a list, respectively.
For example, \texttt{(hd d l)} returns the head of the list \texttt{l}.
Here, \texttt{d} is a default value returned if \texttt{l} is an empty list and thus does not have a head.
Also, \texttt{(tl l)} returns the tail of the list \texttt{l} and returns \texttt{nil} if there is no tail.

A natural number (\texttt{nat}) is either \texttt{0} or the successor of another natural number, written \texttt{(S n)}, where \texttt{n} is a natural number.
For instance, \texttt{1} is represented as \texttt{(S 0)}, \texttt{2} as \texttt{(S (S 0))}, etc.
In Figure~\ref{fig:patterns}, some patterns for lists and natural numbers are shown using Rocq's \texttt{match$\dots$with$\dots$end} pattern matching construct.
\begin{figure}[htb]
  \begin{tabular}{l}
    \texttt{match X with}\\
    $\mid$ \texttt{0} $\Rightarrow$ calculate something when
      \texttt{X} = \texttt{0}\\
    $\mid$ \texttt{S n} $\Rightarrow$ calculate something when
      \texttt{X} is successor of \texttt{n}\\
      \texttt{end}\\ \\
      \texttt{match L with}\\
      $\mid$ \texttt{[]} $\Rightarrow$ calculate something when
      \texttt{L} is an empty list\\
      $\mid$ \texttt{h::t} $\Rightarrow$ calculate something when
      \texttt{L} has head \texttt{h} followed by tail \texttt{t}\\
      \texttt{end}\\
  \end{tabular}
  \caption{Rocq pattern matching forms for natural numbers and lists}
  \label{fig:patterns}
\end{figure}

Other examples of definitions and functions found in Rocq libraries that we use in this paper include the \texttt{if} statement on booleans, the \texttt{length}, \texttt{map}, and $\mathop{++}$ (append), and \texttt{In} (list membership) functions on lists.

In addition to the data types that are defined in Rocq libraries, new data types can be introduced.
One way to do so is using Rocq's facility for defining records.
Records can have different fields with different types.
For instance, we can define a record with three fields \texttt{Fieldnat}, \texttt{FieldQ}, and \texttt{ListField}, which have types natural number, rational number, and list of natural numbers, respectively, as illustrated in Figure~\ref{fig:record}.
\begin{figure}[htb]
\begin{verbatim}
          Record Sample_Record := MakeSample {
            Fieldnat: nat;
            FieldQ: Q;
            ListField: list nat;
            CR: Fieldnat > 7 }.

         S: Sample_Record
\end{verbatim}
\caption{Example definition and variable declaration for a record type}
\label{fig:record}
\end{figure}
Fields in Rocq records can also represent constraints on other fields.
For instance, field \texttt{CR} in Figure~\ref{fig:record} expresses that field \texttt{Fieldnat} must be greater than 7.

A record is a type like any other type in Rocq, and so variables can have the new record type.
For example, variable \texttt{S} with type \texttt{Sample\_Record} in the figure is an example.
When a variable of this record type gets a value, all the constraints in the record have to be satisfied.
For example, \texttt{Fieldnat} of \texttt{S} cannot be less than or equal to 7.

\subsection{Defining Neurons and their Properties in Rocq}

We start illustrating our formalization of neuronal networks in Rocq with the code in Figure~\ref{fig:Coqneuron}.
\begin{figure}[htb]
\begin{verbatim}
Record NeuronFeature :=
  MakeFeat {
      Id : nat;
      Weights : nat -> Q; 
      Leak_Factor : Q;
      Tau : Q;
      LeakRange : 0 <= Leak_Factor <= 1;
      PosTau : 0 < Tau;
      WRange : forall x, -1 <= Weights x <= 1;
      WId : Weights Id = 0;
    }.

Record Neuron :=
  MakeNeuron {
      Output : list bool;
      CurPot : Q; 
      Feature : NeuronFeature;
      CurPot_Output : (Tau Feature <=? CurPot) = hd false Output;
    }.

Definition SetNeuron (nf : NeuronFeature): Neuron:=
  MakeNeuron [false] 0 nf (setneuron_curpot_output _).
\end{verbatim}
\caption{Rocq definition of a neuron}
\label{fig:Coqneuron}
\end{figure}
To define a neuron, we use Rocq's basic record structures. At this level, the neuron is considered within an environment of other neurons but not yet as part of a circuit. The concept of boundaries between neurons within the circuit and external input sources is not introduced at this stage; this notion is discussed in Section~\ref{subsec:circuits-props}.  We define it in two parts.  
The first part, \texttt{NeuronFeature}, consists of eight fields with their corresponding types.
The first four fields contain the data that models a neuron.
The \texttt{Id} field is a natural number identifier. Each neuron in an environment will have a unique identifier.
The weights linked to the inputs of the neuron (\texttt{Weights}) are defined as a function from identifiers to rational numbers, where each input is the identifier of a neuron in the environment.
As we will see, when a circuit has $n$ neurons, the identifiers will be in the range $0,\ldots,n-1$, the identifiers of the external sources will be consecutive number(s) starting at $n$, and all other arguments to the \texttt{Weights} function will not be used.
The next two fields,  \texttt{Leak\_Factor} and \texttt{Tau} represent the leak factor and firing threshold, respectively; they are both rational numbers.

The next four fields in the \texttt{NeuronFeature} record represent constraints that the first four fields must satisfy according to the LI\&F model defined in Section~\ref{subsec:LIF}.
The first three conditions---\texttt{LeakRange}, \texttt{PosTau}, and \texttt{WRange}---are the Rocq representation for the conditions $r \in \mathbb{Q} \cap [0, 1]$, $\tau \in \mathbb{Q}^+$, $w: E \rightarrow \mathbb{Q} \cap [-1,1]$, respectively, from Definition \ref{def.snn}.
Since we do not consider neurons that have self-connections,\footnote{In some cases a neuron can be connected to itself, self-connections are called autapses. However their exact function is not fully understood and biologists do not often consider these connections.} the last condition (\texttt{WId}) assigns a default weight of $0$ to the neuron's own identifier.

The \texttt{NeuronFeature} record just described represents the static information about a neuron.
The next definition in Figure~\ref{fig:Coqneuron} is the \texttt{Neuron} record, which has four fields, one of which is a \texttt{NeuronFeature},
two that record dynamic information representing the state of a neuron, and one that imposes a constraint on the dynamic information.
The \texttt{Output} records the output of the neuron since time $0$ up until the current time.  Every output is $0$ (no spike) or $1$ (spike), and we use booleans to represent them.
There is one entry for each time step, represented in reverse order;
the last element in the list is the output at time $0$.
Storing the values in reverse order helps make proofs by induction over lists easier in Rocq.
\texttt{CurPot} stores the most recent membrane potential.
\texttt{Output} and \texttt{CurPot} store the values computed by the $y$ and $p$ functions, respectively, from Definition~\ref{def.snn}. Rocq functions that perform these calculations will be defined below.
\texttt{CurPot\_Output} is a constraint that indicates whether or not a neuron fires at the current time, as defined in Definition~\ref{def.snn}. The head element of \texttt{Output} represents the output at the current time and is a boolean value indicating whether the \texttt{CurPot} value has reached the threshold, defined by \texttt{(Tau Feature)}.

We make a few remarks about our choices of representation.
Note that rational numbers $\mathbb{Q}$ are central to Definition~\ref{def.snn}.
In Rocq, there are several choices or representing them.  
We chose the type \texttt{Q} because the accompanying Rocq library is well-developed; we added only a small number of lemmas that helped to simplify our proofs.  
In addition, we were able to use this library to our advantage in developing formal proofs that closely followed the informal ones.  
Note also that we represent some functions from Definition~\ref{def.snn}, such as the weights function $w$, as Rocq functions, while others, such as the output function $y$, are represented using lists.  We follow the standard practice and use lists whenever they are convenient for representing finite parts of functions.
With regard to the weights function, another possible representation would be as a matrix, especially in the case of multiple-input neurons.  
We decided against such a representation because matrix multipication would complicate some of the formal reasoning, especially with regard to our longer term goal of modelling the composition of basic archetypes into larger networks.

If \texttt{NF} is a \texttt{NeuronFeature}, \texttt{(Id NF)}  denotes its first field, and similarly for the other fields and records.
A new neuron with values \texttt{id}, \texttt{W}, \texttt{L}, and \texttt{T} for the fixed fields and \texttt{O}, \texttt{C}, and \texttt{C\_O} for the dynamic fields, all with the suitable types, and proofs $\mathtt{P1},\dots,\mathtt{P4}$ of the four constraints, is written \texttt{(MakeNeuron O C (MakeFeat id W L T P1 P2 P3 P4) C\_O)}.
This notation will appear in Rocq code.
In presenting examples, properties, and proofs, we will adopt several conventions for clarity.  If \texttt{N} is a neuron, we write $\mathit{id}_N$ to denote \texttt{(Id (Feature N))}, and similarly $w_N$, $\mathit{lk}_N$, $\tau_N$ for the other three fields, respectively, that represent the static information of a neuron.  Recall that $w_N$ is a function.
We use $w_N(id)$ to denote \texttt{(Weights (Feature N) id)} where \texttt{id} is the identifier of some neuron in the environment.
Finally, given neuron \texttt{N},
we denote \texttt{(Output N)}, \texttt{(Feature N)}, \texttt{(CurPot N)}, and \texttt{(CurPot\_Output N)} as $\mathit{Output}(N)$, $\mathit{Feature}(N)$, $\mathit{CurPot}(N)$, and $\mathit{CurPot\_Output(N)}$, respectively.

\begin{exas}
Consider an example simple series archetype from Figure~\ref{fig:archetypes}(a) of length 3 with neurons denoted $N_0$, $N_1$, and $N_2$ with the external input coming from some neuron $N_3$.  
The subscripts represent the value of the \texttt{Id} field of these neurons.
The weight $w_{N_1}(0)$ is the weight assigned to the edge between $N_0$ and $N_1$.  Similarly $w_{N_2}(1)$ is the weight on the edge between $N_1$ and $N_2$, and $w_{N_0}(3)$ is the weight on the external input.
All other values of the weight functions with arguments in the range $0,\ldots,3$ will be $0$.
Leaving out the values outside this range, the weight functions are:
$$\begin{array}{rcllll}
w_{N_0} & = & \{0\mapsto 0,&1\mapsto 0,&2\mapsto 0,&3\mapsto w_{N_0}(3)\}\\
w_{N_1} & = & \{0\mapsto w_{N_1}(0),&1\mapsto 0,&2\mapsto 0,&3\mapsto 0\}\\
w_{N_2} & = & \{0\mapsto 0,&1\mapsto w_{N_2}(1),&2\mapsto 0,&3\mapsto 0\}.
\end{array}$$
Note that for $n=0,\ldots,2$, $w_{N_n}(n)=0$; a neuron is never connected to itself.

Modifying this example slightly, suppose there is also an edge from $N_2$ back to $N_1$ with weight $w_{N_1}(2)$.  In this case, $w_{N_1}$ would become:
$$\begin{array}{rcllll}
w_{N_1} & = & \{0\mapsto w_{N_1}(0),&1\mapsto 0,&2\mapsto w_{N_1}(2),&3\mapsto 0\}.
\end{array}$$
Figure~\ref{fig:example1} contains a diagram illustrating this example.  Weights associated with each neuron are displayed on the incoming arrows. The modification is illustrated with a dotted arrow.
\label{exas:weights}
\end{exas}

\begin{figure}[htb]
\vspace{-2.3cm}
\includegraphics[scale=.5,angle=90]{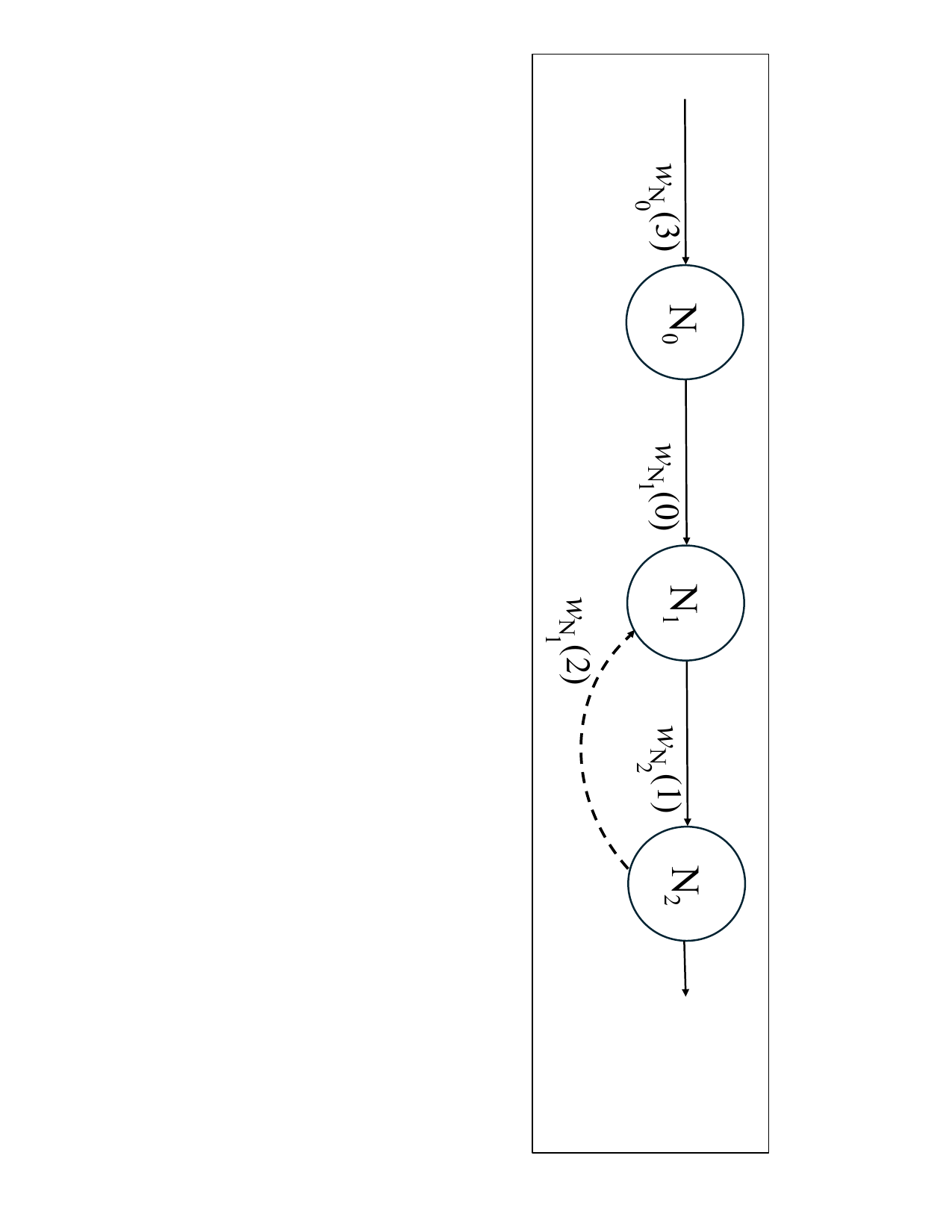}
\vspace{-6.3cm}
\caption{Example archetype with weights}
\label{fig:example1}
\end{figure}

Figure~\ref{fig:Coqneuron} contains one more definition.
The \texttt{SetNeuron} function takes any \texttt{NeuronFeature} and creates an \textit{initial} neuron at time $0$, with only one element in the output, which represents $0$ at time $0$, and a default current potential of $0$.
In our model, we also have a notion of a well-formed neuron.\footnote{See definition \texttt{well\_formed\_neuron} in the Rocq code.}
A neuron is \textit{well-formed} when its output at time $0$ is $0$, i.e., the last element in the output list is \texttt{false}.
When we define functions on neurons, we include lemmas stating that well-formedness is preserved.
We could have included well-formedness information directly as a constraint inside the \texttt{Neuron} record definition in Figure~\ref{fig:Coqneuron}.
We leave it out in order to keep our model as general as possible.
For example, it may be useful to represent neurons starting at a time other than $0$, or to represent other information such as flow of input and output.  
In such cases, the constraints would be different.

We next consider the Rocq representation of the functions computing the current potential and the output of a neuron from Figure~\ref{def.snn}.
\begin{figure}[htb]
\begin{verbatim}
Fixpoint potential (ws : nat -> Q) (inp : nat -> bool) (len : nat) : Q :=
  match len with
  | O => 0
  | S m => if inp m
           then (potential ws inp m) + ws m
           else (potential ws inp m)
  end.

Definition NextPotential (N : Neuron) (inp : nat -> bool) (len : nat) : Q :=
  if Tau (Feature N) <=? CurPot N
  then (potential (Weights (Feature N)) inp len)
  else (potential (Weights (Feature N)) inp len) + 
       (Leak_Factor (Feature N)) * (CurPot N).

Definition NextOutput (N : Neuron) (p : Q) := Tau (Feature N) <=? p.
\end{verbatim}
\caption{Rocq functions for computing membrane potential and output}
\label{fig:Coqpotential}
\end{figure}
The first definition in Figure~\ref{fig:Coqpotential} computes the weighted sum of the inputs of a neuron, which is fundamental for the calculation of the potential.
In this recursive function, there are three arguments: \texttt{ws} is a \texttt{Weights} function, \texttt{inp} is an input function, which is a function from input identifiers to input values, and \texttt{len} represents the number of  neurons in the environment.
In particular, the \texttt{len} input values are $(\mathtt{inp}~0),\ldots,(\mathtt{inp}~(\texttt{len}-1))$ and the corresponding weights on those inputs are $(\mathtt{ws}~0),\ldots,(\mathtt{ws}~(\texttt{len}-1))$. 
The function returns an element of type \texttt{Q}.
The \texttt{NextPotential} function completes the definition of the $p$ function from Definition \ref{def.snn} for a specific neuron, calling \texttt{potential} and looking up the values of \texttt{Tau}, \texttt{CurPot}, and \texttt{Leak\_Factor} as needed in the calculation.
The \texttt{NextOutput} function is the Rocq definition of the $y$ function from Definition \ref{def.snn}, assuming that the input \texttt{p} is the value of the current potential.

\begin{exas}
We continue Example~\ref{exas:weights} and illustrate the potential function in Figure~\ref{fig:Coqpotential}.
We focus on $N_1$ from Example~\ref{exas:weights}, and
take the second version of $w_{N_1}$, which is the version with the arrow from $N_2$ back to $N_1$, resulting in 2 inputs to $N_1$.  We add some sample input functions also, as follows:
$$\begin{array}{rcllll}
w_{N_1} & = & \{0\mapsto w_{N_1}(0),&1\mapsto 0,&2\mapsto w_{N_1}(2),&3\mapsto 0\}\\
\mathit{inp}_1 & = & \{0\mapsto \mathit{true},&1\mapsto \mathit{true},&2\mapsto \mathit{true},&3\mapsto \mathit{true}\}\\
\mathit{inp}_2 & = & \{0\mapsto \mathit{true},&1\mapsto \mathit{false},&2\mapsto \mathit{true},&3\mapsto \mathit{false}\}\\.
\end{array}$$
We illustrate in Figure~\ref{fig:example2}, adding explicitly to the diagram the leak factor, the threshold and the inputs to $N_1$ from $N_0$ and $N_2$; $\mathit{inp}_1$ and $\mathit{inp}_2$ have the same values coming from these two neurons.
There are 4 neurons in the environment, so in a call to the $\mathit{potential}$ function the argument $\mathit{len}$ will be $4$.  Consider the two calls below, with recursive calls unwound:
$$\begin{array}{rclllllll}
\mathit{potential}(w_{N_1},\mathit{inp}_1,4) & = & w_{N_1}(0) & + & w_{N_1}(1) & + & w_{N_1}(2)& + & w_{N_1}(3)\\
\mathit{potential}(w_{N_1},\mathit{inp}_2,4) & = & w_{N_1}(0) & &  & + & w_{N_1}(2)& & 
\end{array}$$
Since there is no link from $N_1$ to itself, and no link from $N_3$ to $N_1$ in this example, the values of both $w_{N_1}(1)$ and $w_{N_1}(3)$ are $0$, so the value of both calls is $w_{N_1}(0) + w_{N_1}(2)$; in the second call, the values of $w_{N_1}(1)$ and $w_{N_1}(3)$ are not accessed because the values of $\mathit{inp}_2(1)$ and $\mathit{inp_2}(3)$ are both $\mathit{false}$.
\label{exas:potential}
\end{exas}

\begin{figure}[htb]
\vspace{-1.7cm}
\includegraphics[scale=.5,angle=90]{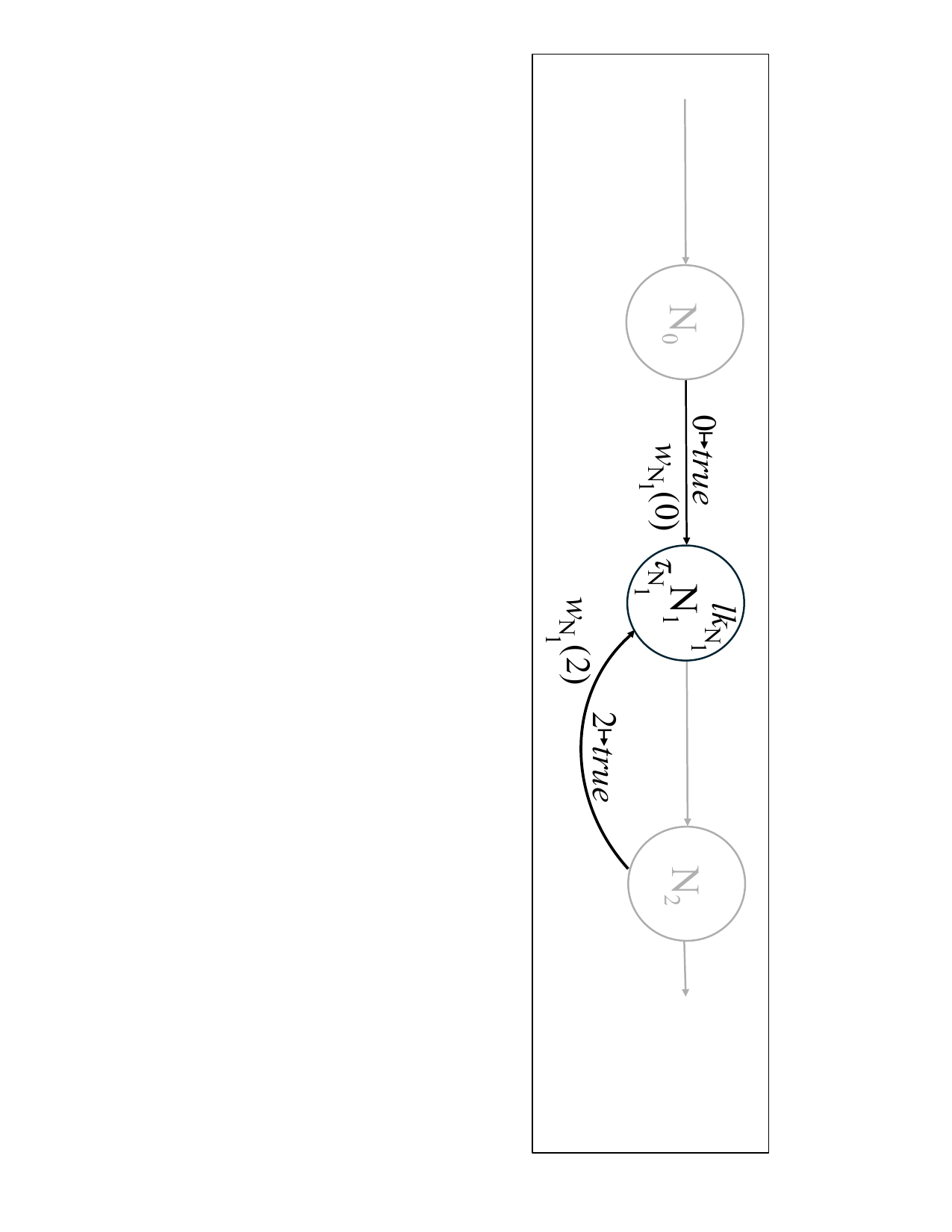}
\vspace{-6.3cm}
\caption{Example data for computing potential for $N_1$}
\label{fig:example2}
\end{figure}

Below is an important property about the \texttt{potential} function in the case when the weights of all identifiers (of neurons) in the environment are non-negative.
In this case, the potential will also be non-negative.
The interested reader can find the corresponding property and proof in the Rocq code, for this and all other properties, using the name given in brackets.
The proofs in the paper follow closely the structure of the Rocq proofs, presenting the main inductions as well as other proof strategies and lemmas.
We adopt several more conventions for readability.
In particular, we use more standard mathematical and logical notation in stating properties and presenting proofs.
For instance, arguments to predicates and functions will be in parentheses, separated by commas, e.g., \texttt{(potential w inp m)} will be written $\mathit{potential}(w,\mathit{inp},\mathit{len})$.
Also, although Rocq uses different syntax for inequalities between elements of different types, we will overload and use the standard mathematical notation, e.g., $\mathit{potential}(w,\mathit{inp},\mathit{len})\geq0$.

\begin{lem}[Always Non-Negative Potential]
\texttt{[potential\_nneg\_w]}\\\ \\
$\begin{array}{l}
  \forall (\mathit{w}:\mathit{nat}\rightarrow\mathit{Q})  (\mathit{inp}:\mathit{nat}\rightarrow\mathit{bool}) (\mathit{len}:\mathit{nat}),\\
  \quad (\forall (\mathit{id}:\mathit{nat}), \mathit{id} < \mathit{len} \rightarrow w(\mathit{id})\geq 0)\rightarrow\\
  \quad\quad \mathit{potential}(w, \mathit{inp}, \mathit{len})\geq 0.
  \end{array}$
\label{lem:nonnegpot}
\end{lem}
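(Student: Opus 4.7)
The plan is a straightforward induction on $\mathit{len}$, following the recursive structure of the $\mathit{potential}$ function.

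For the base case $\mathit{len} = 0$, the definition immediately gives $\mathit{potential}(w,\mathit{inp},0) = 0$, so the inequality holds trivially.

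For the inductive step, assume $\mathit{len} = S\,m$ and the induction hypothesis: for any $w$ with $w(\mathit{id}) \geq 0$ whenever $\mathit{id} < m$, we have $\mathit{potential}(w,\mathit{inp},m) \geq 0$. Under the assumption that $w(\mathit{id}) \geq 0$ for all $\mathit{id} < S\,m$, in particular this holds for all $\mathit{id} < m$, so the induction hypothesis applies and gives $\mathit{potential}(w,\mathit{inp},m) \geq 0$. I would then do case analysis on $\mathit{inp}(m)$, mirroring the \texttt{if} in the definition. If $\mathit{inp}(m) = \mathit{false}$, the result reduces to $\mathit{potential}(w,\mathit{inp},m) \geq 0$, which is exactly the induction hypothesis. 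If $\mathit{inp}(m) = \mathit{true}$, I need $\mathit{potential}(w,\mathit{inp},m) + w(m) \geq 0$; this follows because the induction hypothesis gives the first summand is non-negative, and instantiating the weight hypothesis at $\mathit{id} = m$ (using $m < S\,m$) gives $w(m) \geq 0$, so the sum of two non-negative rationals is non-negative.

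I do not expect any serious obstacles. The only mild subtlety is that the induction hypothesis must be stated in a form general enough in $w$, which it already is since the statement universally quantifies over $w$ (the induction proceeds cleanly over $\mathit{len}$ alone). Everything else is arithmetic on $\mathbb{Q}$, which Coq handles via its standard rational library (lemmas such as \texttt{Qplus\_le\_0\_compat} or manual application of \texttt{Qle\_trans} with $0$). In the Coq development this should amount to a short proof: \texttt{induction len}, unfold \texttt{potential}, \texttt{destruct (inp m)}, and close each branch by applying the induction hypothesis together with the weight-positivity assumption instantiated appropriately.
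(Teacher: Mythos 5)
Your proposal is correct and matches the paper's proof essentially step for step: induction on $\mathit{len}$, a trivial base case, and a case split on $\mathit{inp}(m)$ in the inductive step, closing each branch with the induction hypothesis and the instantiated weight hypothesis $w(m)\geq 0$. No differences worth noting.
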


\begin{proof} 
The proof proceeds by induction on the structure of the number of sources in the environment $\mathit{len}$.\\
\textbf{Case $\mathit{len} = 0$.} $\mathit{potential}(w, \mathit{inp}, 0)$ is null by definition of the function $\mathit{potential}$. The first case is verified.\\
\textbf{Case $\mathit{len} = S(n)$ (successor of n)}. We assume that the property holds for $n$, and now we aim to show that it also holds for $\mathit{len}$. By definition, \[
\mathit{potential}(w, \mathit{inp}, \mathit{len}) = 
\begin{cases} 
\mathit{potential}(w, \mathit{inp}, n) + w(n), & \mathit{if\ } \mathit{inp}(n) = \mathit{true}, \\
\mathit{potential}(w, \mathit{inp}, n), & \mathit{otherwise}.
\end{cases}
\]

\textbf{Sub-case $\mathit{inp}(n)$ is true}. By the induction hypothesis, $\mathit{potential}(w, \mathit{inp}, \mathit{len})$ is non-negative, since the weights from $0$ to $n-1$ are non-negative.
$w(n)$ is non-negative by the definition. We conclude that $\mathit{potential}(w, \mathit{inp}, \mathit{len})$ is non-negative.

\textbf{Sub-case $\mathit{inp}(n)$ is false}. We conclude that $\mathit{potential}(w, \mathit{inp}, \mathit{len})$ is non-negative by a similar argument as in the previous case.
\end{proof}
Conversely, we have a theorem for cases where the weights are non-positive.
\begin{lem}[Always Non-Positive Potential]
\texttt{[potential\_npos\_w]}\\\ \\
$\begin{array}{l}
  \forall (\mathit{w}:\mathit{nat} ~ \rightarrow\mathit{Q})  (\mathit{inp}: \mathit{nat}\rightarrow\mathit{bool}) (\mathit{len}:\mathit{nat}),\\
  \quad (\forall (\mathit{id}:\mathit{nat}), \mathit{id} < \mathit{len} \rightarrow w(\mathit{id})\leq 0)\rightarrow\\
  \quad\quad \mathit{potential}(w, \mathit{inp}, \mathit{len})\leq0.
  \end{array}$
\label{lem:nonpospot}
\end{lem}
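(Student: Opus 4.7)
The plan is to mirror almost exactly the structure of the proof of Lemma~\ref{lem:nonnegpot}, proceeding by induction on $\mathit{len}$, since the recursive definition of $\mathit{potential}$ unfolds on that argument. The only real change is that every inequality flips direction, and the arithmetic facts we rely on are the dual ones: that the sum of two non-positive rationals is non-positive, and that $0 \leq 0$.

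First I would do the base case $\mathit{len} = 0$. By the first branch of the \texttt{match} in the definition of $\mathit{potential}$, we have $\mathit{potential}(w,\mathit{inp},0) = 0$, and $0 \leq 0$ gives the result immediately.

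Next, the inductive step with $\mathit{len} = S(n)$. I assume the induction hypothesis: for any $w$ such that $w(\mathit{id}) \leq 0$ for all $\mathit{id} < n$, we have $\mathit{potential}(w,\mathit{inp},n) \leq 0$. Before applying it, I would need to observe that the weight hypothesis for $S(n)$ (namely $\forall \mathit{id}, \mathit{id} < S(n) \rightarrow w(\mathit{id}) \leq 0$) entails the weaker hypothesis $\forall \mathit{id}, \mathit{id} < n \rightarrow w(\mathit{id}) \leq 0$, since $\mathit{id} < n$ implies $\mathit{id} < S(n)$. This licenses the use of the IH to conclude $\mathit{potential}(w,\mathit{inp},n) \leq 0$. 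Then I split on $\mathit{inp}(n)$: in the true sub-case, $\mathit{potential}(w,\mathit{inp},S(n)) = \mathit{potential}(w,\mathit{inp},n) + w(n)$, which is non-positive because both summands are (the second by instantiating the hypothesis at $\mathit{id} = n$, which satisfies $n < S(n)$); in the false sub-case, $\mathit{potential}(w,\mathit{inp},S(n)) = \mathit{potential}(w,\mathit{inp},n)$, which is non-positive by the IH directly.

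There is no real obstacle here; the argument is purely dual to Lemma~\ref{lem:nonnegpot}. The only small point to be careful about in Coq is using the appropriate lemma for preservation of $\leq 0$ under addition of rationals (e.g.\ \texttt{Qplus\_le\_compat} or similar), but that is a routine library lookup rather than a conceptual difficulty.
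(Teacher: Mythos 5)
Your proposal is correct and follows exactly the route the paper takes: the paper proves this lemma by noting it is dual to Lemma~\ref{lem:nonnegpot}, whose proof is the same induction on $\mathit{len}$ with a case split on $\mathit{inp}(n)$, and your flipped inequalities and explicit weakening of the weight hypothesis for the induction hypothesis are just the details of that dualization spelled out.
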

The proof is similar to the one for Lemma~\ref{lem:nonnegpot}.

A neuron changes its state by processing its inputs.
\begin{figure}[htb]
\begin{verbatim}
Definition NextNeuron (inp : nat -> bool) (len : nat) (n : Neuron) :=
  let next_potential := NextPotential n inp len in
  MakeNeuron (NextOutput n next_potential :: Output n)
             next_potential (Feature n) (eq_refl _).

Fixpoint AfterNstepsNeuron (N : Neuron) (inp : list (nat -> bool))
         (len: nat) :=
  match inp with
  | nil => N
  | h::tl => NextNeuron h len (AfterNstepsNeuron N tl len) 
  end.
\end{verbatim}
\caption{Updating the dynamic fields of a neuron}
\label{fig:updateneuron}  
\end{figure}
A single-step state change occurs by applying the \texttt{NextNeuron} function in Figure~\ref{fig:updateneuron} to a neuron, with one value for each of its inputs.
The \texttt{NextPotential} function from Figure~\ref{fig:Coqpotential} is applied, resulting in value \texttt{next\_potential}.  A new neuron is created with \texttt{next\_potential} as the new value of \texttt{CurPot}.  In particular, this value appears as the second argument to \texttt{MakeNeuron}.
More specifically, recall that \texttt{(CurPot N)} used in the body of \texttt{NextPotential} is the most recent value of the current potential of the neuron, or $p(t-1)$; calling \texttt{NextPotential} in the body of \texttt{NextNeuron} returns the value of $p(t)$, which is stored as \texttt{next\_potential}.
The value \texttt{next\_potential} is also used to calculate the next output value, which then appears at the head in the new value of \texttt{Output}, i.e., the first argument to \texttt{MakeNeuron} is \texttt{(NextOutput N next\_potential::Output N)}.
The static part of the neuron is just copied over directly to the new neuron, as \texttt{(Feature n).}
The constraint \texttt{CurPot\_Output} is verified by the definitions of \texttt{NextPotential} and \texttt{NextOutput}. Here, \texttt{eq\_refl} serves as a proof that both sides of an equality are identical.

The \texttt{AfterNstepsNeuron} function repeatedly calls \texttt{NextNeuron} on a list of inputs, processing them all.
The result is a new neuron whose output is increased in length equal to the length of \texttt{inp}, and whose value for \texttt{CurPot} is the last output after processing all inputs.
We assume that the argument \texttt{inp} is in reverse order of time.  
Thus both input and output lists are stored in reverse order in order to, as mentioned, simplify the model and proofs.

Again for clarity, we introduce some notation.  We represent the current potential and output of a neuron \texttt{N} that has received inputs \texttt{inp} in an environment with \texttt{len} sources as $\mathit{CurPot}_N(inp, len)$ and $Output_N(inp, len)$, corresponding to the Rocq expressions \texttt{(CurPot (AfterNstepsNeuron N inp len))} and \texttt{(Output (AfterNstepsNeuron N inp len))}, respectively. 
Additionally, we introduce notation for comparison operators whose result is in type $\mathit{bool}$; such operators have $\mathop{?}$ as a subscript.

\begin{figure}[htb]
\vspace{-2.0cm}
\includegraphics[scale=.5,angle=90]{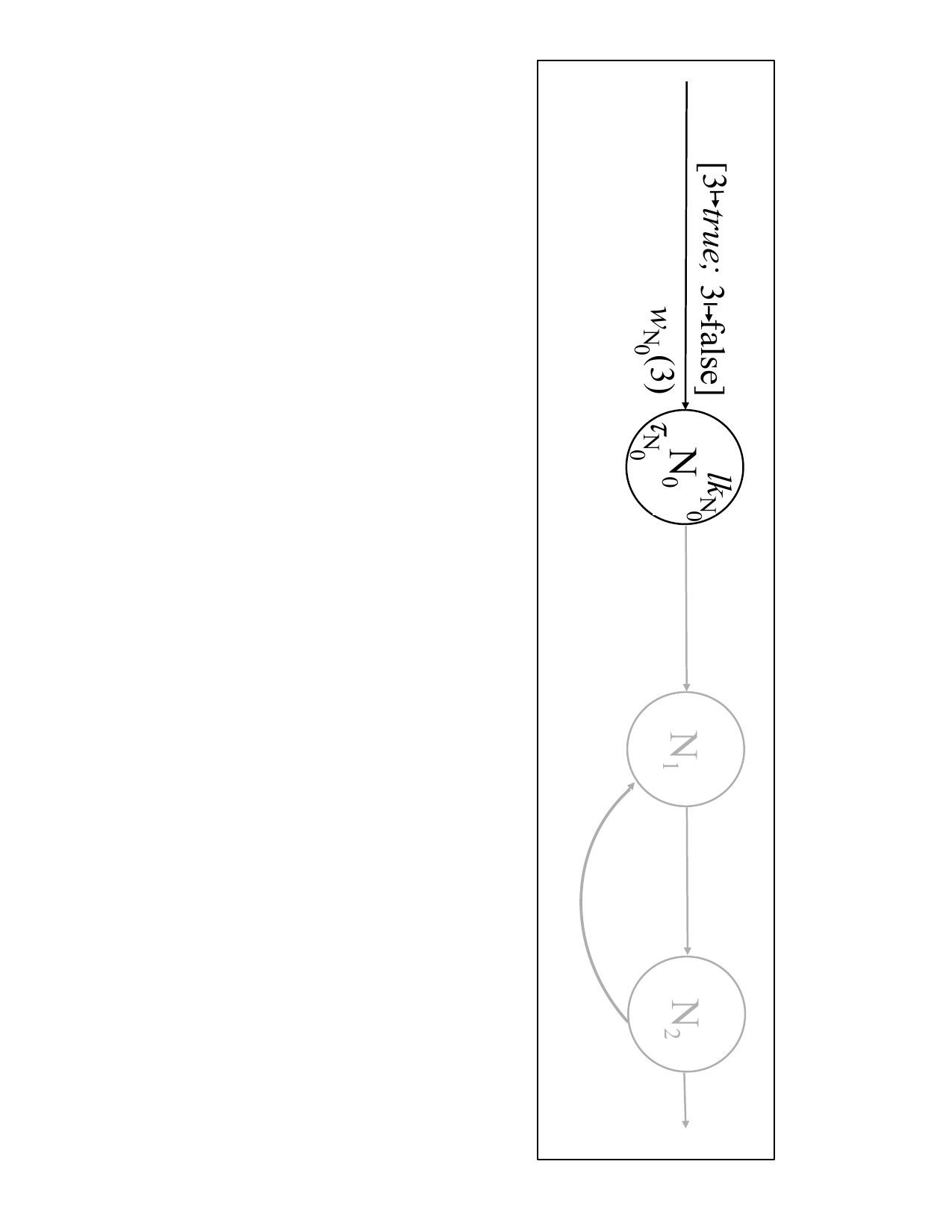}
\vspace{-6.3cm}
\caption{Example with two time steps for $N_0$}
\label{fig:example3}
\end{figure}

\begin{exas}
We again continue Examples~\ref{exas:weights} and~\ref{exas:potential} and illustrate parts of the calculation of the $\mathit{AfterNstepsNeuron}$ function applied to $N_0$.
We consider two time steps starting at time $0$, with an input of $\mathit{true}$ and then $\mathit{false}$, as illustrated in Figure~\ref{fig:example3}.
The inputs come from external source $N_3$, so only argument $3$ of the input functions will matter.  In particular, we have:
$$\begin{array}{rcllll}
w_{N_0} & = & \{0\mapsto 0,&1\mapsto 0,&2\mapsto 0,&3\mapsto w_{N_0}(3)\}\\
\mathit{inp}_1 & = & \{0\mapsto \cdots,&1\mapsto \cdots,&2\mapsto\cdots,&3\mapsto \mathit{true}\}\\
\mathit{inp}_2 & = & \{0\mapsto \cdots,&1\mapsto \cdots,&2\mapsto\cdots,&3\mapsto \mathit{false}\}\\
\end{array}$$
Expanding the call 
$\mathit{AfterNstepsNeuron}(N_1,[\mathit{inp}_2;\mathit{inp}_1],4)$, we get:
$$\mathit{NextNeuron}(\mathit{inp}_2,4,\mathit{NextNeuron}(\mathit{inp}_1,4,N_0))$$
We name the neurons resulting from each of the calls to $\mathit{NextNeuron}$ starting with the innermost call:
$$\begin{array}{rlc}
N_0^1 & = & \mathit{NextNeuron}(\mathit{inp}_1,4,N_0)\\[4pt]
N_0^2 & = & \mathit{NextNeuron}(\mathit{inp}_2,4,N_0^1)\\[4pt]
\end{array}$$
Starting with $N_0^1$, we have:
$$\begin{array}{rcl}
N_0^1 &=&
\mathit{let}~\mathit{np}_1 := \mathit{NextPotential}(N_0,inp_1,4)~\mathit{in}\\
&&\mathit{MakeNeuron}(\mathit{NextOutput}(N_0,\mathit{np}_1)\mathop{::}\mathit{Output}(N_0),\mathit{np}_1,\mathit{Feature}(N_0),\ldots)
\end{array}$$
We elide the last argument of $\mathit{MakeNeuron}$.  Expanding the definition of $\mathit{NextPotential}$, we have:
$$\begin{array}{rcl}
N_0^1 & = & \mathit{let}~\mathit{np}_1 := 
\mathit{if}~\tau_{N_0}\le \mathit{Curpot(N_0)}\\
&& \qquad\qquad\qquad\mathit{then}~\mathit{potential}(w_{N_0},inp_1,4)\\
&& \qquad\qquad\qquad\mathit{else}~\mathit{potential}(w_{N_0},inp_1,4)+\mathit{lk}_{N_0}\cdot\mathit{CurPot}(N_0)~\mathit{in}\\
&&\mathit{MakeNeuron}(\mathit{NextOutput}(N_0,\mathit{np}_1)\mathop{::}\mathit{Output}(N_0),\mathit{np}_1,\mathit{Feature}(N_0),\ldots)
\end{array}$$
Since we start at time 0, $\mathit{Output}(N_0)$ is $[\mathit{false}]$, $\mathit{Curpot}(N_0)$ is 0, and $\tau_{N_0}\le \mathit{Curpot}(N_0)$ is false.  Thus $\mathit{lk}_{N_0}\cdot\mathit{CurPot}(N_0)=0$ and we can simplify:
$$\begin{array}{rcl}
N_0^1 &=&
\mathit{let}~\mathit{np}_1 :=
\mathit{potential}(w_{N_0},inp_1,4)~\mathit{in}\\
&&\mathit{MakeNeuron}([\mathit{NextOutput}(N_0,\mathit{np}_1);0],\mathit{np}_1,\mathit{Feature}(N_0),\ldots)
\end{array}$$
We calculate the potential as we did in Example~\ref{exas:potential}, and obtain $\mathit{potential}(w_{N_0},inp_1,4)=w_{N_0}(3)$.
Note that $\mathit{NextOutput}(N_0,\mathit{np}_1) = \tau_{N_0} \le_? w_{N_0}(3)$. Thus we can further simplify and obtain:
$$N_0^1 =\mathit{MakeNeuron}([\tau_{N_0} \le_? w_{N_0}(3);0],w_{N_0}(3),\mathit{Feature}(N_0),\ldots)$$
Next, we consider $N_0^2$:
$$\begin{array}{rcl}
N_0^2 & = & \mathit{let}~\mathit{np}_2 := 
\mathit{if}~\tau_{N_0^1}\le \mathit{Curpot(N_0^1)}\\
&& \qquad\qquad\qquad\mathit{then}~\mathit{potential}(w_{N_0^1},inp_2,4)\\
&& \qquad\qquad\qquad\mathit{else}~\mathit{potential}(w_{N_0^1},inp_2,4)+\mathit{lk}_{N_0^1}\cdot\mathit{CurPot}(N_0^1)~\mathit{in}\\
&&\mathit{MakeNeuron}(\mathit{NextOutput}(N_0^1,\mathit{np}_2)\mathop{::}\mathit{Output}(N_0^1),\mathit{np}_2,\mathit{Feature}(N_0^1),\ldots)
\end{array}$$
Note that $\mathit{Curpot(N_0^1)}=w_{N_0}(3)$ and $\mathit{Output}(N_0^1)=[\tau_{N_0} \le_? w_{N_0}(3);0]$.
Also, $\mathit{Feature}(N_0)=\mathit{Feature}(N_0^1)$ and thus $\tau_{N_0^1}=\tau_{N_0}$, $w_{N_0^1}=w_{N_0}$, and $\mathit{lk}_{N_0^1}=\mathit{lk}_{N_0}$.  
In addition, since $\mathit{inp}_2(3)=\mathit{false}$, we have $\mathit{potential}(w_{N_0},inp_2,4)=0$.  
Also $\mathit{NextOutput}(N_0^1,\mathit{np}_2) = \tau_{N_0} \le_? \mathit{np}_2$.
Simplifying, we get:
$$\begin{array}{rcl}
N_0^2 & = & \mathit{let}~\mathit{np}_2 := 
\mathit{if}~\tau_{N_0}\le w_{N_0}(3)~\mathit{then}~0~\mathit{else}~\mathit{lk}_{N_0}\cdot w_{N_0}(3)~\mathit{in}\\
&&\mathit{MakeNeuron}([\tau_{N_0} \le_? \mathit{np}_2;\tau_{N_0} \le_? w_{N_0}(3);0],\mathit{np}_2,\mathit{Feature}(N_0),\ldots)
\end{array}$$
Note that if $\mathit{np}_2=0$, then $\tau_{N_0} \le_? \mathit{np}_2$ is false.  Thus we have:
$$N_0^2 = 
\begin{cases} 
\mathit{MakeNeuron}([0;1;0],0,\mathit{Feature}(N_0),\ldots)\qquad & \mathit{if\ } \tau_{N_0}\le w_{N_0}(3), \\
\mathit{MakeNeuron}([0;0;0], &\mathit{otherwise}.\\
\qquad\qquad\qquad\mathit{lk}_{N_0}\cdot w_{N_0}(3),\mathit{Feature}(N_0),\ldots)&
\end{cases}
$$
\end{exas}

The following lemma captures some of the behaviour described above of the functions \texttt{NextNeuron} and \texttt{AfterNstepsNeuron} in Figure~\ref{fig:updateneuron}.
\begin{lem}[Properties of \texttt{AfterNstepsNeuron}]
~~\\[-5pt]
\begin{enumerate}
\setlength{\itemsep}{5pt}
\item Link between \textit{CurPot} and \textit{Output} after $n$ steps:\\
$\begin{array}{l}
\forall (N:\mathit{Neuron}) (i:\mathit{nat}\rightarrow\mathit{bool}) (\mathit{inp}:\mathit{list}~(\mathit{nat}\rightarrow\mathit{bool}))(\mathit{len}:\mathit{nat}),\\
  \mbox{If}~\tau_N \leq_? \mathit{CurPot}_N(\mathit{inp},\mathit{len})\\
  \quad\mbox{then}~\mathit{CurPot}_N(i::\mathit{inp},\mathit{len})\equiv
  \mathit{potential}(\mathit{w}_N,i,\mathit{len})\\
  \quad\mbox{else}~\mathit{CurPot}_N(i::\mathit{inp},\mathit{len})K\equiv
  \mathit{potential}(\mathit{w}_N,i,\mathit{len}) +
  \mathit{lk}_N \cdot \mathit{CurPot}_N(\mathit{inp},\mathit{len})
  \end{array}$\\
\texttt{[AfterNstepsNeuron\_curpot\_cons]}
\item Format of \textit{Output} after $n$ steps:\\
$\begin{array}{l}
\forall (N:\mathit{Neuron}) (i:\mathit{nat}\rightarrow\mathit{bool}) (\mathit{inp}:\mathit{list}~(\mathit{nat}\rightarrow\mathit{bool}))(\mathit{len}:\mathit{nat}),\\
  \mathit{Output}_N(i::\mathit{inp},\mathit{len})=
  (\tau_N \leq_? \mathit{CurPot}_N(i::\mathit{inp},\mathit{len}))::
  \mathit{Output}_N(\mathit{inp},\mathit{len})
  \end{array}$\\
\texttt{[AfterNSN\_curpot\_output\_unfold]}
\end{enumerate}
\label{lem:nsteps-props}
\end{lem}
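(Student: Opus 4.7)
The plan is to prove both parts essentially by unfolding the definitions of $\mathit{AfterNstepsNeuron}$, $\mathit{NextNeuron}$, $\mathit{NextPotential}$, and $\mathit{NextOutput}$ from Figures~\ref{fig:updateneuron} and~\ref{fig:Coqpotential}. Unlike Lemma~\ref{lem:nonnegpot}, no induction on $\mathit{len}$ or $\mathit{inp}$ is required for the lemma itself, because only one unfolding step of the outermost $\mathit{AfterNstepsNeuron}$ is needed on each side of the stated equalities.

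First, I would introduce an abbreviation $N' := \mathit{AfterNstepsNeuron}(N,\mathit{inp},\mathit{len})$ so that the pattern-matching clause for a non-empty input list gives $\mathit{AfterNstepsNeuron}(N,i\mathbin{::}\mathit{inp},\mathit{len}) = \mathit{NextNeuron}(i,\mathit{len},N')$. For both parts, the right-hand side of the target equality refers to quantities computed from $N$'s static \emph{Feature} (namely $\tau_N$, $w_N$, $\mathit{lk}_N$), whereas the unfolded left-hand side refers to $\mathit{Feature}(N')$. The one prerequisite I would extract as an auxiliary observation (by a trivial induction on $\mathit{inp}$) is that $\mathit{Feature}$ is preserved by $\mathit{NextNeuron}$ and hence by $\mathit{AfterNstepsNeuron}$: $\mathit{Feature}(N') = \mathit{Feature}(N)$. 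This is immediate from the fact that $\mathit{NextNeuron}$ copies $\mathit{Feature}(n)$ verbatim into its result. With this in hand, $\tau_{N'} = \tau_N$, $w_{N'} = w_N$, and $\mathit{lk}_{N'} = \mathit{lk}_N$.

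For Part 1, I unfold $\mathit{CurPot}_N(i\mathbin{::}\mathit{inp},\mathit{len})$ to the second argument of the $\mathit{MakeNeuron}$ call inside $\mathit{NextNeuron}$, which is precisely $\mathit{NextPotential}(N',i,\mathit{len})$. Unfolding $\mathit{NextPotential}$ produces an $\mathit{if}$-expression testing $\tau_{N'} \leq_? \mathit{CurPot}(N')$. Substituting $\tau_{N'}=\tau_N$, $\mathit{CurPot}(N') = \mathit{CurPot}_N(\mathit{inp},\mathit{len})$, and using Feature preservation to rewrite the weight and leak fields, both branches of the $\mathit{if}$ match the two branches of the stated goal, and a case split on the value of $\tau_N \leq_? \mathit{CurPot}_N(\mathit{inp},\mathit{len})$ finishes the case analysis. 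For Part 2, I similarly unfold $\mathit{Output}_N(i\mathbin{::}\mathit{inp},\mathit{len})$ to the first argument of the $\mathit{MakeNeuron}$ call inside $\mathit{NextNeuron}$, which is by definition $\mathit{NextOutput}(N',\mathit{np})\mathbin{::}\mathit{Output}(N')$ where $\mathit{np} = \mathit{NextPotential}(N',i,\mathit{len})$. Unfolding $\mathit{NextOutput}$ turns the head into $\tau_{N'} \leq_? \mathit{np}$; but $\mathit{np}$ is exactly the quantity Part 1 identifies as $\mathit{CurPot}_N(i\mathbin{::}\mathit{inp},\mathit{len})$, so using Feature preservation yields the claimed form.

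There is no real obstacle here: these are definitional unfolding lemmas whose purpose is to expose the recursion step in a clean form for use later. The only mildly delicate point is keeping track of the distinction between $\mathit{Feature}(N)$ and $\mathit{Feature}(N')$, which is handled by the one-line auxiliary preservation lemma. I expect the Coq proofs to be short and proceed mostly by \texttt{simpl} followed by case analysis on the Boolean comparison, possibly rewriting with the Feature-preservation auxiliary lemma in order to match the goal against the statement's notation.
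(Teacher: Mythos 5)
Your proposal is correct and takes essentially the approach the paper relies on: the paper states this lemma without a written proof precisely because both parts follow by a single unfolding of \texttt{AfterNstepsNeuron}, \texttt{NextNeuron}, \texttt{NextPotential}, and \texttt{NextOutput}, combined with the observation that \texttt{Feature} is copied verbatim and hence preserved through the iteration. Your isolation of that \texttt{Feature}-preservation fact (proved by a trivial induction on the input list) as the only non-definitional ingredient, followed by a case split on the Boolean comparison, is exactly what the corresponding Coq proofs do.
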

\noindent
The Rocq statements of these theorems use \texttt{=} for syntactic equality in Rocq and \texttt{==} for equivalence between rational numbers, which is defined in Rocq's rational number library.
When presenting properties, we write $\mathop{=}$ and $\mathop{\equiv}$, respectively, to represent these Rocq operators.

Figure~\ref{fig:wf-equiv} contains definitions expressing an equivalence relation on neurons.
\begin{figure}[htb]
\begin{verbatim}
Definition EquivFeature (nf1 nf2: NeuronFeature) len : Prop :=
  Id nf1 = Id nf2 /\
    (forall id, id < len -> Weights nf1 id == Weights nf2 id) /\
    Leak_Factor nf1 == Leak_Factor nf2 /\
    Tau nf1 == Tau nf2.

Definition EquivNeuron (n1 n2: Neuron) len : Prop :=
  EquivFeature (Feature n1) (Feature n2) len /\
    Output n1 = Output n2 /\ CurPot n1 == CurPot n2.
\end{verbatim}
\caption{Equivalence of neurons}
\label{fig:wf-equiv}
\end{figure}
Two neurons are \textit{equivalent} if all of the values of the static and dynamic fields are the same.
This definition is needed because of the use of rational numbers in the definition of a neuron.  It extends the equivalence defined in Rocq's rational number library to equivalence on neurons.
Note that the definitions do not mention the fields of \texttt{NeuronFeature} and \texttt{Neuron} that represent constraints;
proofs of the constraints do not have to be the same.
In these definitions, the third argument, \texttt{len}, is again the number of neurons in the environment.
We abbreviate \texttt{(EquivNeuron n1 n2 len)} as $n_1\equiv_{len} n_2$.

Lemma~\ref{lem:equivprops} below states some important properties of neuron equivalence, including the preservation of equivalence under the \texttt{NextNeuron} and \texttt{AfterNstepsNeuron} operations.
\begin{lem}[Properties of $\mathop{\equiv_{len}}$]
~~\\[-5pt]
\begin{enumerate}
\setlength{\itemsep}{5pt}
\item The $\mathop{\equiv_{len}}$ relation is reflexive, symmetric, and transitive.\\
\texttt{[EquivNeuron\_refl,EquivNeuron\_sym,EquivNeuron\_trans]}
\item \textit{NextNeuron} preserves equivalence:
  $\forall (N_1~N_2:\mathit{Neuron})  (\mathit{inp}:\mathit{nat}\rightarrow\mathit{bool}) (\mathit{len}:\mathit{nat}),\\
  N_1\equiv_{len} N_2\rightarrow \mathit{NextNeuron}(\mathit{inp},\mathit{len},N_1)\equiv_{len} \mathit{NextNeuron}(\mathit{inp},\mathit{len},N_2)$.\\
\texttt{[nextneuron\_equiv]}
\item \textit{AfterNstepsNeuron} preserves equivalence:\\
  $\forall (N_1~N_2:\mathit{Neuron})  (\mathit{In}:\mathit{list}~(\mathit{nat}\rightarrow\mathit{bool})) (\mathit{len}:\mathit{nat}),\\
  N_1\equiv_{len} N_2\rightarrow \mathit{AfterNstepsNeuron}(N_1,\mathit{In},\mathit{len})\equiv_{len} \mathit{AfterNstepsNeuron}(N_2,\mathit{In},\mathit{len})$.\\
\texttt{[afternstepsneuron\_equiv]}
\end{enumerate}
\label{lem:equivprops}
\end{lem}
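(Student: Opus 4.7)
The plan is to handle the three items in order, exploiting the fact that $\equiv_{len}$ is defined componentwise and that each component either uses ordinary equality on $\mathit{nat}$ and $\mathit{list}~\mathit{bool}$, or the rational number equivalence $\equiv$ from Coq's standard library, which is itself known to be reflexive, symmetric, and transitive. Throughout, I will ignore the proof terms inhabiting the constraint fields of $\mathit{NeuronFeature}$ and $\mathit{Neuron}$, since $\equiv_{len}$ does not mention them.

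For part (1), I would unfold $\mathit{EquivNeuron}$ and $\mathit{EquivFeature}$ and simply assemble the reflexivity/symmetry/transitivity facts for $=$ on $\mathit{nat}$, $\mathit{list}~\mathit{bool}$, and for $\equiv$ on $\mathbb{Q}$, component by component. The quantified weight clause $\forall\, \mathit{id},\ \mathit{id}<\mathit{len}\rightarrow w_{n_1}(\mathit{id})\equiv w_{n_2}(\mathit{id})$ is handled pointwise: reflexivity is immediate, symmetry applies the symmetry of $\equiv$ under the universal, and transitivity chains two hypotheses at a fixed $\mathit{id}$.

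For part (2), I would first prove an auxiliary lemma stating that $\mathit{potential}$ is invariant under pointwise equivalence of weights up to $\mathit{len}$: if $\forall\, \mathit{id}<\mathit{len},\ w_1(\mathit{id})\equiv w_2(\mathit{id})$, then $\mathit{potential}(w_1,\mathit{inp},\mathit{len})\equiv \mathit{potential}(w_2,\mathit{inp},\mathit{len})$. This is a straightforward induction on $\mathit{len}$: in the $S(n)$ case, the induction hypothesis (applied after observing $n<\mathit{len}$) handles the recursive subterm, and the branch on $\mathit{inp}(n)$ produces either an equivalence by compatibility of $+$ with $\equiv$, using $w_1(n)\equiv w_2(n)$, or the bare recursive equivalence. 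From this I would derive that $\mathit{NextPotential}$ preserves equivalence: unfolding its definition, the test $\tau_{N}\leq_? \mathit{CurPot}(N)$ gives the same boolean on both sides because $\tau$ and $\mathit{CurPot}$ are equivalent (and $\leq_?$ respects $\equiv$ on $\mathbb{Q}$), so both branches line up, and compatibility of $+$ and $\cdot$ with $\equiv$ finishes the potential equivalence. Then $\mathit{NextOutput}$ of two equivalent potentials and equal $\tau$ yields equal booleans, so consing onto the equal $\mathit{Output}$ lists preserves list equality. The $\mathit{Feature}$ of the new neurons is just copied, hence still equivalent. Assembling these three pieces gives the required $\equiv_{len}$ on the $\mathit{NextNeuron}$ outputs.

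For part (3), I would induct on the input list $\mathit{In}$. The $\mathit{nil}$ case is immediate since $\mathit{AfterNstepsNeuron}(N,\mathit{nil},\mathit{len})=N$. In the $h::\mathit{tl}$ case, the inductive hypothesis gives $\mathit{AfterNstepsNeuron}(N_1,\mathit{tl},\mathit{len})\equiv_{len}\mathit{AfterNstepsNeuron}(N_2,\mathit{tl},\mathit{len})$, and part (2) applied with input function $h$ concludes. The main obstacle, and the place where the most care is needed, is the auxiliary potential lemma and its packaging into $\mathit{NextPotential}$: I have to avoid treating $\equiv$ as $=$ on $\mathbb{Q}$, use only the rewriting lemmas that the standard library provides for $\equiv$ under $+$, $\cdot$, and $\leq_?$, and carry the side condition $\mathit{id}<\mathit{len}$ correctly through the induction so that the quantifier in $\mathit{EquivFeature}$ is always instantiable.
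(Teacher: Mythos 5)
Your proposal is correct, and it follows the only natural route given the definitions: componentwise reflexivity/symmetry/transitivity of \texttt{=} and \texttt{==} for part (1), a congruence lemma for \texttt{potential} (by induction on \texttt{len}, carrying the side condition $\mathit{id}<\mathit{len}$) lifted through \texttt{NextPotential} and \texttt{NextOutput} for part (2), and structural induction on the input list reducing to part (2) for part (3). The paper gives no prose proof of this lemma (it defers entirely to the named Coq lemmas), but your decomposition --- including the care taken that \texttt{Output} uses syntactic equality of boolean lists and hence needs $\leq_?$ to respect \texttt{==} --- matches what the Coq development must do, so there is nothing to flag.
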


\subsection{Defining Circuits and their Properties in Rocq}
\label{subsec:circuits-props}

In our previous work, we encoded each archetype directly as a Rocq record.  
Here, we generalize our work, and include a general definition of neuronal circuits, which we later specialize to define archetypes.
The Rocq record that models a general circuit is shown in Figure~\ref{fig:circuit}.
\begin{figure}[htb]
\begin{verbatim}
Record NeuroCircuit :=
  MakeCircuit {
      Time : nat;
      ListNeuro : list Neuron;
      SupplInput : nat;
      IdNeuroDiff : forall n m l1 l2 l3,
        ListNeuro = l1 ++ n :: l2 ++ m :: l3 ->
        Id (Feature n) <> Id (Feature m);        
      IdInfLen : forall n,
        In n ListNeuro -> Id (Feature n) < length ListNeuro;        
      TimeNeuro : forall n,
        In n ListNeuro -> length (Output n) = Time + 1; 
    }.

Definition is_initial_neuro n len := 
  exists (nf: NeuronFeature), EquivNeuron n (SetNeuron nf) len.

Definition is_initial (nc : NeuroCircuit) := 
 forall n, In n (ListNeuro nc) -> 
   exists nf, 
     EquivNeuron n (SetNeuron nf) (length (ListNeuro nc) + SupplInput nc).
\end{verbatim}
\caption{Rocq definition of general and initial neuronal circuits}
\label{fig:circuit}
\end{figure}
This record contains three data fields and three constraints.  
At this level of generality, the main field is the list of neurons contained in the circuit (\texttt{ListNeuro}).  
We also record the current time (\texttt{Time}).
As mentioned earlier, the input(s) to some neurons in a circuit will be the output(s) of other neurons in the circuit.  
The \texttt{SupplInput} field records the number of external sources of input to the circuit.
The first constraint (\texttt{IdNeuroDiff}) ensures that all the identifiers of the neurons in the circuit are distinct.  
The second (\texttt{IdInfLen}) states that the identifiers appear in a sequence from $0$ up to the number of neurons minus one.  
The last constraint (\texttt{TimeNeuro}) requires that the output lists of each of the neurons in the circuit have the same length, which is greater than the value of \texttt{Time}.  
The extra value will always be an output of \texttt{false} at time $0$, as mentioned earlier.

Figure~\ref{fig:circuit} also contains definitions describing initial neurons (\texttt{is\_initial\_neuro}) and circuits (\texttt{is\_initial}).  
A neuron is \emph{initial} if it is equivalent to a neuron created by the \texttt{SetNeuron} function in Figure~\ref{fig:Coqneuron}. 
As mentioned earlier, such a neuron is at a stage that corresponds to a neuron at time 0.
A circuit is \emph{initial} if all neurons in the circuit are initial.  
Since equivalence depends on the number of neurons in the environment, these definitions do also.
Note that in the body of \texttt{is\_initial}, this is expressed as the sum of the number of neurons in the circuit (the \texttt{ListNeuro} field) and the number of external sources of input (the \texttt{SupplInput} field).

If \texttt{NC} is a \texttt{NeuroCircuit}, we write $t_\mathit{NC}$, $\mathit{ln}_\mathit{NC}$, and $\mathit{si}_\mathit{NC}$ to represent, respectively, \texttt{(Time NC)}, \texttt{(ListNeuro NC)}, and \texttt{(SupplInput NC)}.
In addition, we write $\mathit{is\_initial}_\mathit{Neur}(N, \mathit{len})$ to denote \texttt{(is\_initial\_neuro N len)} and $\mathit{is\_initial}_\mathit{Cir}(\mathit{NC})$ to denote \texttt{(is\_initial NC)}.

We extend the notion of well-formed neurons to circuits.  A circuit is \emph{well-formed}
if all of the neurons in the circuit are well-formed neurons.   In particular, all of the neurons in field \texttt{ListNeuro} must be well-formed.\footnote{See definition \texttt{well\_formed\_circuit} in the Rocq code.}

Below are some important properties that follow from the definitions in Figure~\ref{fig:circuit}.  We abbreviate \texttt{(In n L)} for list membership as $n\in L$.  In addition, here and in the rest of the paper, the booleans are represented by $0$ and $1$ to simplify the reading.
We also sometimes refer to the $0$ value as \emph{null}.
\begin{lem}[Properties of Circuits and of Initial Neurons and Circuits]
~~\\[-5pt]
\begin{enumerate}
\item $\forall (\mathit{id}:\mathit{nat})(\mathit{NC}:\mathit{NeuroCircuit}),\\
\mathit{id}<\mathit{length}(\mathit{ln}_\mathit{NC})\rightarrow
\exists (N:\mathit{Neuron}),N\in\mathit{ln}_\mathit{NC}\land\mathit{id}_N = \mathit{id}.$\\
\texttt{[len\_inf\_in\_listneuro]}
\item $\forall (N_1~N_2:\mathit{Neuron})(\mathit{NC}:\mathit{NeuroCircuit}),\\
N_1,N_2\in \mathit{ln}_\mathit{NC}\rightarrow
\mathit{id}_{N_1} = \mathit{id}_{N_2}\rightarrow N_1 \equiv_{length(ln_{NC})+ si_{NC}}
N_2.$\\
\texttt{[same\_id\_same\_neuron]}
\item $\forall (N:\mathit{Neuron}) (\mathit{len}:\mathit{nat}),\\\quad\mathit{is\_initial}_\mathit{Neur}(N,\mathit{len})\rightarrow \mathit{CurPot}(N)\equiv 0\land \mathit{Output}(N) = [0]$\\
\texttt{[is\_initial\_neuro\_curpot, is\_initial\_neuro\_output]}
\item $\forall (\mathit{NC}:\mathit{NeuroCircuit}),\;
\mathit{is\_initial}_\mathit{Cir}(\mathit{NC})\rightarrow
0 < \mathit{length}(\mathit{ln}_\mathit{NC})\rightarrow
\mathit{t}_\mathit{NC} = 0.$\\
\texttt{[is\_initial\_time]}
\end{enumerate}
\label{lem:circprops}
\end{lem}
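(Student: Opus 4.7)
The plan is to handle each of the four statements separately, since they follow from different combinations of the constraints in the \texttt{NeuroCircuit} record. For part (1), I would exploit the combination of \texttt{IdInfLen} and \texttt{IdNeuroDiff}: together these constraints say that the identifiers of the neurons in $\mathit{ln}_\mathit{NC}$ form a set of $\mathit{length}(\mathit{ln}_\mathit{NC})$ pairwise distinct natural numbers, each strictly less than $\mathit{length}(\mathit{ln}_\mathit{NC})$. By a pigeonhole-style cardinality argument, the set of identifiers must then be exactly $\{0, 1, \ldots, \mathit{length}(\mathit{ln}_\mathit{NC})-1\}$, so any $\mathit{id}$ strictly below the length is realised by some neuron in the list. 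In Coq this is most natural by induction on $\mathit{ln}_\mathit{NC}$ (or on its length), tracking how removing one neuron reduces both the length and the set of identifiers by exactly one.

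For part (2), \texttt{IdNeuroDiff} does most of the work: if $N_1$ and $N_2$ appeared at distinct positions in $\mathit{ln}_\mathit{NC}$ we could split the list as $\ell_1 \mathop{++} N_1 \mathop{::} \ell_2 \mathop{++} N_2 \mathop{::} \ell_3$ (or with the roles reversed), and the constraint would force $\mathit{id}_{N_1}\neq\mathit{id}_{N_2}$, contradicting the hypothesis. So $N_1$ and $N_2$ must coincide as the same list element, and equivalence is then immediate from the reflexivity clause of Lemma~\ref{lem:equivprops}(1). The formalisation amounts to a case analysis on the positions of $N_1$ and $N_2$ obtained by unfolding $\mathop{\in}$ via the standard \texttt{in\_split} lemma twice.

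For part (3), I would unfold $\mathit{is\_initial}_\mathit{Neur}$ to obtain some $\mathit{nf}$ with $N \equiv_\mathit{len} \mathit{SetNeuron}(\mathit{nf})$, observe that $\mathit{SetNeuron}(\mathit{nf})$ by construction has $\mathit{CurPot}$ equal to $0$ and $\mathit{Output}$ equal to $[0]$, and then read off both facts from the definition of $\mathop{\equiv_{len}}$ in Figure~\ref{fig:wf-equiv}, which preserves the dynamic fields. Part (4) then follows quickly: from $0 < \mathit{length}(\mathit{ln}_\mathit{NC})$ pick any $N \in \mathit{ln}_\mathit{NC}$; by $\mathit{is\_initial}_\mathit{Cir}(\mathit{NC})$ and part (3) we get $\mathit{Output}(N) = [0]$, hence $\mathit{length}(\mathit{Output}(N)) = 1$; the \texttt{TimeNeuro} constraint then gives $1 = \mathit{t}_\mathit{NC} + 1$, i.e.\ $\mathit{t}_\mathit{NC} = 0$. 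The main obstacle is part (1): it is the only statement whose Coq proof is not essentially a matter of unfolding and rewriting, and getting the pigeonhole argument cleanly encoded---so that the induction hypothesis applies after removing a neuron and reindexing---will require some care with the \texttt{NeuroCircuit} invariants, which are stated over the full list rather than over a generic sub-list.
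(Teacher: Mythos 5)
Your proposal is correct, and there is nothing in the paper to diverge from: the paper states Lemma~\ref{lem:circprops} without any prose proof, deferring entirely to the named lemmas in the accompanying Coq development. Your decomposition is exactly what the record constraints support --- the pigeonhole argument from \texttt{IdInfLen} plus \texttt{IdNeuroDiff} for (1), the position-splitting contradiction for (2), unfolding \texttt{SetNeuron} and the equivalence definition for (3), and combining (3) with \texttt{TimeNeuro} for (4) --- and you correctly identify (1) as the only part requiring a genuine inductive argument rather than unfolding.
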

\noindent
In Lemma~\ref{lem:circprops}, (1) states that if circuit has $n$ neurons, every number in the range $0,\ldots,n-1$ is an identifier of one of the $n$ neurons;
(2) states that all neurons in a circuit having the same identifier are equivalent; 
(3) states that an initial neuron has a value of $0$ for the current potential, and an output list containing only $0$ (i.e., $\mathtt{false}$), which is the output value at time $0$;
and (4) states that in an initial circuit containing at least one neuron, the \texttt{Time} field must have value $0$.

We next consider Rocq definitions of useful functions on circuits.
\begin{figure}[htb]
\begin{verbatim}
Definition output_neuron_init nc id :=
  match find (fun x => Id (Feature x) =? id) (ListNeuro nc) with
  |Some x => Output x
  |None => []
  end.

Definition NextNeuroInNC (nc : Neurocircuit) (inp : nat -> bool) :=
  NextNeuron (fun x => if (x <? length (ListNeuro nc))
                       then hd false (output_neuron_init nc x)
                       else inp x)
             (length (ListNeuro nc) + SupplInput nc).
             
Definition NextListNeuro (nc : Neurocircuit) (inp : nat -> bool) :=
  map (NextNeuroInNC nc inp) (ListNeuro nc).
  
Definition NextStepC (nc : Neurocircuit) inp :=
  let listneuro := NextListNeuro nc inp in
  Makecircuit (Time nc + 1) listneuro (SupplInput nc)
              (id_neuro_diff_next nc inp)
              (id_inf_numb_neuron_next nc inp)
              (next_time _ _).
              
Fixpoint Nstepscircuit (nc : Neurocircuit) (inp : list (nat -> bool)) :
  Neurocircuit :=
  match inp with
  | nil => nc
  | h::tl => (NextStepC (Nstepscircuit nc tl) h) 
  end.
\end{verbatim}
\caption{Processing input to circuits}
\label{fig:circuitinputs}
\end{figure}
The first definition in Figure~\ref{fig:circuitinputs} returns the output of a neuron with identifier \texttt{id} in a circuit \texttt{nc} if there is such a neuron; otherwise an empty list (no output) is returned.
It is used in the next function and only applied to identifiers of neurons that belong to the circuit.
In \texttt{NextNeuroInNC}, the second argument (\texttt{inp}) is an input function where only the values of the identifiers of neurons serving as external sources of inputs matter.
The body of this function is a call to \texttt{NextNeuron} where only two of the three arguments are given.
The first argument to \texttt{NextNeuron} is an input function that takes as argument the identifier of a neuron, and if the neuron is in the circuit, it looks up the last value of its output, which will serve as input to the neurons it is connected to.
If the neuron is an external source, then the input function \texttt{inp} is used to determine the input value.  The second argument is the number of neurons in the environment, which is the number of neurons in the circuit plus the number of external sources.
A call to \texttt{NextNeuroInNC} creates a function of type \texttt{Neuron -> Neuron} such that when it is applied to a neuron, will create a new neuron with one more output value and an updated value of \texttt{CurPot}.

The next function in Figure~\ref{fig:circuitinputs} (\texttt{NextStepC}), applies \texttt{NextNeuroInNC} to all the neurons in a circuit, creating a new list of neurons that have all been updated by processing input at one more time step.  
It creates a new circuit with this new list of neurons, the time increased by $1$, and the value of the \texttt{SupplInput} field unchanged.
The last three arguments to \texttt{Makecircuit} in the body of \texttt{NextStepC} are calls to functions which update the proofs of the three constraints.
The definitions of these functions are omitted.

The last function in Figure~\ref{fig:circuitinputs} (\texttt{Nstepscircuit}) takes a list of input functions of some length $n$, and updates a circuit one step at a time, increasing the time and the length of the output lists of each of the neurons by $n$.

Figure~\ref{fig:circuitfuns} contains two simple functions extracting information from a circuit after it has been updated by processing a list of inputs using \texttt{Nstepscircuit}.
\begin{figure}[htb]
\begin{verbatim}
Definition output_neuron nc inp id :=
  match find (fun x => Id (Feature x) =? id)
             (ListNeuro (Nstepscircuit nc inp)) with
  |Some x => Output x
  |None => []
  end.

Definition curpot_neuron nc inp id :=
  match find (fun x => Id (Feature x) =? id)
            (ListNeuro (Nstepscircuit nc inp)) with
  |Some x => CurPot x
  |None => 0
  end.
\end{verbatim}
\caption{Functions extracting information from circuits}
\label{fig:circuitfuns}
\end{figure}
They both take a circuit, a list of inputs, and a neuron identifier as arguments.
After processing the list of inputs, the first function returns the new output of the neuron with the given identifier, if there is one; otherwise it returns an empty list.
The second function is similar, but returns the new value of \texttt{CurPot}.
The following lemma expresses that the output and current potential, respectively, of each neuron in a circuit is unchanged in the case when the second argument to these functions (the input list) is empty, i.e., the output and potential do not change after $0$ steps.
If \texttt{NC} is a \texttt{NeuroCircuit}, \texttt{N} is a neuron of \texttt{NC} with identifier \texttt{id}, and \texttt{inp} is a list of input functions, we write $\mathit{output}_{\mathit{NC}}(N, \mathit{inp})$ and $\mathit{curpot}_{\mathit{NC}}(N, \mathit{inp})$ to represent, respectively, \texttt{(output\_neuron NC inp id)} and \texttt{(output\_neuron NC inp id}).
\begin{lem}[Output and CurPot Unchanged with Empty Input]
~~\\[-5pt]
\begin{enumerate}
\item $\forall (N:\mathit{Neuron}) (\mathit{NC}:\mathit{NeuroCircuit}),
N\in \mathit{ln}_\mathit{NC}\rightarrow
\mathit{output}_\mathit{NC}(N,[\ ])=\mathit{Output}(N)$\\
\texttt{[output\_neuron\_Output]}
\item $\forall (N:\mathit{Neuron}) (\mathit{NC}:\mathit{NeuroCircuit}),
N\in \mathit{ln}_\mathit{NC}\rightarrow
\mathit{curpot}_\mathit{NC}(N,[\ ])=\mathit{CurPot}(N)$\\
\texttt{[curpot\_neuron\_CurPot]}
\end{enumerate}
\label{lem:nstepscorrect}
\end{lem}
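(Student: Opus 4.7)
The plan is to prove both parts together, since they share the same structure and differ only in which field is extracted from the located neuron. First I would unfold $\mathit{Nstepscircuit}(\mathit{NC}, [\ ])$ using the nil branch of its Fixpoint definition, which simply returns $\mathit{NC}$. This reduces $\mathit{output}_{\mathit{NC}}(N, [\ ])$ to the match expression $\mathrm{match\ find}\,(\lambda x.\,\mathit{id}_x =_? \mathit{id}_N)\,(\mathit{ln}_{\mathit{NC}})\ \mathrm{with}\ \mathit{Some}~x \Rightarrow \mathit{Output}(x) \mid \mathit{None} \Rightarrow [\ ]$, and analogously for $\mathit{curpot}_{\mathit{NC}}(N, [\ ])$ with $\mathit{CurPot}(x)$ and default $0$.

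The key step is a helper lemma: if $N \in L$ and the identifiers of the neurons in $L$ are pairwise distinct, then $\mathrm{find}\,(\lambda x.\,\mathit{id}_x =_? \mathit{id}_N)\,L = \mathit{Some}\ N$. I would prove this by induction on $L$. The nil case contradicts $N \in L$. In the cons case $L = h :: t$, either $N = h$, so the test $\mathit{id}_h =_? \mathit{id}_N$ evaluates to true and $\mathrm{find}$ returns $\mathit{Some}\ h$; or $N \in t$, and then pairwise distinctness forces $\mathit{id}_h \neq \mathit{id}_N$, so the test fails on $h$ and the induction hypothesis applies to $t$ (whose identifiers remain pairwise distinct).

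Instantiating this helper with $L = \mathit{ln}_{\mathit{NC}}$, using the $\mathit{IdNeuroDiff}$ constraint of the circuit record to supply the distinctness hypothesis, yields $\mathrm{find}\,(\lambda x.\,\mathit{id}_x =_? \mathit{id}_N)\,(\mathit{ln}_{\mathit{NC}}) = \mathit{Some}\ N$. The two match expressions then reduce directly to $\mathit{Output}(N)$ and $\mathit{CurPot}(N)$, closing both goals.

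The main obstacle I anticipate is bridging the two formulations of distinctness. The $\mathit{IdNeuroDiff}$ field is phrased in terms of arbitrary splittings $\mathit{ListNeuro} = l_1 \mathop{++} n :: l_2 \mathop{++} m :: l_3$, whereas the inductive argument above needs, in the cons step, clean facts of the form ``$\mathit{id}_h$ differs from the identifier of every neuron appearing in $t$'' and ``$t$ itself still satisfies the splitting property''. Both are obtained by choosing $l_1 = [\ ]$ and $l_1 = h :: l_1'$ respectively in suitable instances of $\mathit{IdNeuroDiff}$, but they must be stated and discharged as small auxiliary lemmas. Once these are in place, everything else is routine reduction from the Fixpoint and match definitions.
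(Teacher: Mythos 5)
Your proposal is correct: once $\mathit{Nstepscircuit}(\mathit{NC},[\ ])$ reduces to $\mathit{NC}$ by the nil branch of the fixpoint, everything hinges on showing that \texttt{find} returns \texttt{Some N} exactly (not merely some neuron with the same identifier, which is all that Lemma~\ref{lem:circprops}(2) would give and would only yield equivalence rather than the stated syntactic equality), and your helper lemma derives this correctly from \texttt{IdNeuroDiff} by instantiating the splitting with $l_1=[\ ]$ to separate the head from a tail occurrence and with $l_1=h::l_1'$ to pass distinctness to the tail for the induction hypothesis. The paper states this lemma without presenting a proof in the text, and your argument is precisely the routine definitional unfolding plus uniqueness-of-identifiers reasoning that the accompanying Coq development must carry out, so there is nothing to contrast.
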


\subsection{Defining Archetypes and their Properties in Rocq}
\label{subsec:archetypes-props}

Figure~\ref{fig:arch-series} contains the Rocq definition of the simple series archetype, as seen in Figure~\ref{fig:archetypes}(a).
\begin{figure}[htb]
\begin{verbatim}
Record Series (c : NeuroCircuit) :=
  Make_Series
    {
      OneSupplementS : SupplInput c = 1;
      ExistsNeuronS : (0 < length (ListNeuro c));
      FirstNeuronS : forall n, 
        In n (ListNeuro c) -> Id (Feature n) = 0 ->
        (forall id' : nat, (id' < length (ListNeuro c)) ->
          Weights (Feature n) id' == 0) /\
          0 < Weights (Feature n) (length (ListNeuro c));
      UniqueWeightS : forall n m, 
        In n (ListNeuro c) -> Id (Feature n) = S m ->
        (forall id' : nat, m <> id' -> (id' < length (ListNeuro c) + 1) ->
          Weights (Feature n) id' == 0) /\
        0 < Weights (Feature n) m
    }.
\end{verbatim}
\caption{Rocq representation of archetype (a) in Figure~\ref{fig:archetypes}}
\label{fig:arch-series}  
\end{figure}
In general, our approach is to encode the particular structure of each archetype as a Rocq record.  
Records can have input parameters, similar to functions in Rocq, and here the input parameter is a \texttt{NeuroCircuit}, as defined in Figure~\ref{fig:circuit}.
Here, \texttt{Series} is actually a predicate taking one argument of type \texttt{NeuroCircuit}.
All fields in this record are constraints that must hold of the circuit in order to give it the structure of a simple series.  
In particular, if \texttt{c} is a \texttt{NeuroCircuit}, then \texttt{(Series n)} expresses that \texttt{c} is a circuit satisfying all of these constraints.
All other Rocq definitions of archetypes will have a similar structure, i.e., have a \texttt{NeuroCircuit} parameter and define structural constraints.

The first constraint (\texttt{OneSupplementS}) states that there is exactly one external input to the circuit.  
The second constraint (\texttt{ExistsNeuronS}) states that a simple series must have at least one neuron.  
The next two constraints (\texttt{FirstNeuronS} and \texttt{UniqueWeightS}) express properties of the first neuron in the series, and the rest of the neurons in the series, respectively.
The first conjunct of \texttt{FirstNeuronS} states that the weights of all neurons in the list of neurons must be $0$ because none on them have outputs that are connected to the first neuron.  
The second conjunct says that the weight on the external input must be positive.  
This positivity condition holds of all inputs to all neurons in the archetypes of Figure~\ref{fig:archetypes} except those with inhibiting edges, which must have negative weights.
The first conjunct of \texttt{UniqueWeightS} states that given any neuron except the first one, i.e., a neuron whose identifier is $m+1$ for some $m$, all neurons whose identifier is not $m$ has a weight of $0$.
In other words, no other neuron except neuron $m$ has an output that serves as an input to neuron $m+1$.
Note that this conjunct includes the restriction that the external source of input is not connected to any neuron other than the first one.
The second conjunct expresses that the weight on the input to neuron $m+1$ coming from neuron $m$ must be positive.

Note that the definition in Figure~\ref{fig:arch-series} also covers the archetype in Figure~\ref{fig:archetypes}(b) because we can always compute and look up the output of any of the neurons in a circuit.

The Rocq definition of the parallel composition archetype in Figure~\ref{fig:archetypes}(c) appears in Figure~\ref{fig:arch-para-comp}.
\begin{figure}[htb]
\begin{verbatim}
Record ParallelComposition (c : NeuroCircuit) :=
  Make_ParaComp
    {
      OneSupplementPC : ...
      ExistsNeuronPC : ...
      FirstNeuronPC : ...
      N1NmPC : forall n m,
        In n (ListNeuro c) -> Id (Feature n) = S m ->
        (forall id' : nat, 0 <> id' -> id' < length (ListNeuro c) + 1 ->
          Weights (Feature n) id' == 0) /\
        0 < Weights (Feature n) 0
    }.
\end{verbatim}
\caption{Rocq representation of archetype (c) in Figure~\ref{fig:archetypes}}
\label{fig:arch-para-comp}  
\end{figure}
We elide the definitions of \texttt{OneSupplementPC}, \texttt{ExistsNeuronPC}, and \texttt{FirstNeuronPC} because they are exactly the same as the definitions of \texttt{OneSupplementS}, \texttt{ExistsNeuronS}, and \texttt{FirstNeuronS} in Figure~\ref{fig:arch-series}, respectively.
The last constraint is also similar, except that two occurrences of \texttt{m} in \texttt{UniqueWeightS} are replaced by \texttt{0} in \texttt{N1NmPC}.
The first conjunct is thus about neurons other than the one with identifier $0$, in particular, all neurons with identifiers in the range $1,\ldots,i$, where $i$ is the number of identifiers in the environment.
This includes the external source of output, which has identifier $i$, plus all the neurons that occur in parallel, which have identifiers $1,\ldots,i-1$.  
For every identifier $\mathit{id}'$ in this range, we have $w_n(\mathit{id}')=0$.
Thus, the neurons that occur in parallel have no connections between them, and the external output also has no connections to any of them.
The second conjunct is about neuron $0$, which is connected to all other neurons in the circuit.  These connections are expressed by stating that the weight on the input from $0$ to every other neuron is positive.

The Rocq definition of the positive loop archetype is in Figure~\ref{fig:arch-pos-loop}.
\begin{figure}[htb]
\begin{verbatim}
Record PositiveLoop (c : NeuroCircuit) :=
  Make_PosLoop
    {
      OneSupplementPL : SupplInput c = 1
      ListNeuroLengthPL : length (ListNeuro c) = 2;
      FirstNeuronPL : forall n,
        In n (ListNeuro c) -> Id (Feature n) = 0 ->
        0 < Weights (Feature n) 1 /\ 0 < Weights (Feature n) 2;
      SecondNeuroPL : forall n,
        In n (ListNeuro c) -> Id (Feature n) = 1 ->
        0 < Weights (Feature n) 0 /\ Weights (Feature n) 2 == 0;
    }.
\end{verbatim}
\caption{Rocq representation of archetype (d) in Figure~\ref{fig:archetypes}}
\label{fig:arch-pos-loop}  
\end{figure}
In a positive loop, there is again one external source of input (constraint \texttt{OneSupllementPL}).  
Here the number of neurons in the circuit is exactly $2$ (constraint \texttt{ListNeuroLengthPL}).  
The neuron with identifier $0$ in our Rocq definition of this archetype represents the top neuron in Figure~\ref{def.snn}, which has 2 inputs, an external one, and one coming from the other neuron in the circuit.
In the Rocq definition, this other neuron has identifier $1$.
The neuron with identifier $1$ has one input coming from neuron $0$. 
The identifier of the external input is $2$.
The details of these connections are expressed by the last two constraints (\texttt{FirstNeuronPL} and \texttt{SecondNeuroPL)}, which state that the weights of these three inputs must be positive, and the weight where there is no connection (neuron $2$ to $1$) must be $0$.

The records for the remaining archetypes from Figure~\ref{fig:archetypes} are similar to those presented so far.
We describe them briefly here, and the reader is referred to Appendix~\ref{app:archetypes} for the full details.  
All records defining archetypes have four constraints.
First, consider the negative loop, which is very similar to the positive loop.  
The only difference is that the connection from neuron $1$ to neuron $0$ must have a negative weight.  
In the Rocq definition, 3 of the 4 constraints are exactly the same;
the only difference is in the third constraint.  
The first conjunct in \texttt{FirstNeuronPL} in Figure~\ref{fig:arch-pos-loop} defining the positive loop, which is:
\begin{verbatim}
0 < Weights (Feature n) 1
\end{verbatim}
is replaced by the following in constraint \texttt{FirstNeuronNL} in record \texttt{NegativeLoop}:
\begin{verbatim}
Weights (Feature n) 1 < 0.
\end{verbatim}

We use our mathematical notation when discussing the last two archetypes, which are inhibition and contralateral inhibition (\texttt{Inhibition} and \texttt{ContraInhib} in Rocq); the reader is again referred to the appendix for the Rocq code.
Both of these archetypes have two external sources of input.
Given $c$ of type $\mathit{NeuroCircuit}$, the first constraint is thus $\mathit{SupplInput}(c) = 2$.  
Like the positive and negative loops, these two archetypes have exactly $2$ neurons, so the second constraint does not change.
The two neurons in the circuit again have identifiers $0$ and $1$ for the top and bottom neurons in Figure~\ref{fig:archetypes}, respectively; we refer to them as $N_0$  and $N_1$.
The neurons providing input to $N_0$ and $N_1$ have identifiers $2$ and $3$, respectively, and thus we refer to them as $N_2$ and $N_3$.
The third constraint for these new archetypes again expresses the connections for the top neuron, $N_0$.
For the inhibition archetype, these connections are defined as:
$$w_{N_0}(1) = 0\land 0 < w_{N_0}(2)\land w_{N_0}(3) = 0.$$
The only input to $N_0$ comes from external source $N_2$, and it must be positive.

The connections for $N_1$, expressed as the fourth constraint in the Rocq definition, are:
$$w_{N_1}(0) < 0\land w_{N_1}(2) = 0\land 0 < w_{N_1}(3).$$
The connection coming from $N_0$ must be negative, and the one coming from external source $N_3$ must be positive.

For contralateral inhibition, the last two constraints are:
$$\begin{array}{ccccc}
w_{N_0}(1) < 0 & \land & 0 < w_{N_0}(2) & \land & w_{N_0}(3) = 0\\
w_{N_1}(0) < 0 & \land & w_{N_1}(2) = 0 & \land & 0 < w_{N_1}(3).
\end{array}$$
Compared to inhibition, in contralateral inhibition, $N_0$ has one additional connection coming from $N_1$ and it must be negative.  $N_1$ has the same connections in both of these archetypes, and so the last constraint is the same.

\section{Properties of Neurons and their Proofs}
\label{sec:neuronprops}

In Section~\ref{sec:neuronprops}, we present six properties related to the behaviors of neurons based on their features and the inputs they receive from the environment.
These properties pertain to the neuron's output and current potential behavior when it is in an initialized state and receives a series of inputs from an environment.
At this stage, we do not differentiate between inputs from internal neurons and those from external sources, as the circuit boundaries have not been defined.
In Section~\ref{subsec:multipleinput}, we concentrate on two properties of multiple-input neurons, i.e., neurons receiving inputs from potentially different sources.
Finally, in Section~\ref{subsec:singleinput}, we describe four properties about single-input neurons, which are neurons that receive inputs from at most one source.

\subsection{Properties of Multiple-Input Neurons}
\label{subsec:multipleinput}
The first two properties examine the behavior of a neuron \texttt{N} with potentially multiple sources providing inputs, where all incoming edge weights are either uniformly non-negative or uniformly non-positive. These properties demonstrate the impact of multiple excitatory and inhibitory neurons on the neuron's activity. Excitatory neurons promote the firing of a neuron by increasing the potential, driving it into the non-negative range. In contrast, inhibitory neurons hinder firing by decreasing the potential value.

\subsubsection{Property: Always Non-Negative}
In the case of the neuron having only excitatory neurons transmitting inputs to an initialized neuron, in our model, the current potential (as expressed by the $\mathit{CurPot}_N$) function within the neuron is always non-negatively charged.
\begin{lem}[$\mathit{CurPot}_N$ Always Non-Negative for Multiple-Input Neuron] \texttt{[Always\_N\_Neg]}\\\ \\
$\begin{array}{l}
  \forall (N:\mathit{Neuron})  (\mathit{inp}:\mathit{list} ~ (\mathit{nat}\rightarrow\mathit{bool})) (\mathit{len}:\mathit{nat}),\\
  \quad\mathit{is\_initial}_\mathit{Neur}(\mathit{N, len)} ~ \land ~
  (\forall (\mathit{id}:\mathit{nat}), \mathit{id} < \mathit{len} \rightarrow w_N(\mathit{id})\geq 0)\rightarrow\\
  \quad\quad \mathit{CurPot}_N(\mathit{inp}, \mathit{len})\geq0.
  \end{array}$
\label{lem:nonnegcurr}
\end{lem}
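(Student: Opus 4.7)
The plan is to proceed by induction on the list of inputs $\mathit{inp}$, keeping the neuron $N$ and the environment size $\mathit{len}$ fixed. Note that the hypothesis $\mathit{is\_initial}_\mathit{Neur}(N,\mathit{len})$ refers only to the original neuron (not a neuron produced after some steps), and the weight hypothesis is a static condition on $\mathit{Feature}(N)$, so both hypotheses survive intact across the induction on $\mathit{inp}$.

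For the base case $\mathit{inp} = [\,]$, the definition of $\mathit{AfterNstepsNeuron}$ gives $\mathit{CurPot}_N([\,],\mathit{len}) = \mathit{CurPot}(N)$. By Lemma~\ref{lem:circprops}(3), applied to the hypothesis $\mathit{is\_initial}_\mathit{Neur}(N,\mathit{len})$, we have $\mathit{CurPot}(N) \equiv 0$, which is trivially non-negative.

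For the inductive case $\mathit{inp} = i :: \mathit{tl}$, I would apply Lemma~\ref{lem:nsteps-props}(1) to rewrite $\mathit{CurPot}_N(i::\mathit{tl},\mathit{len})$, then split on the Boolean test $\tau_N \leq_? \mathit{CurPot}_N(\mathit{tl},\mathit{len})$. In the \emph{then} branch the value reduces to $\mathit{potential}(w_N, i, \mathit{len})$, which is non-negative by Lemma~\ref{lem:nonnegpot} applied to the hypothesis that every weight index below $\mathit{len}$ is non-negative. In the \emph{else} branch the value reduces to $\mathit{potential}(w_N, i, \mathit{len}) + \mathit{lk}_N \cdot \mathit{CurPot}_N(\mathit{tl},\mathit{len})$; here the first summand is again non-negative by Lemma~\ref{lem:nonnegpot}, the factor $\mathit{lk}_N$ is non-negative by the $\mathtt{LeakRange}$ constraint ($0 \leq \mathit{lk}_N \leq 1$), and $\mathit{CurPot}_N(\mathit{tl},\mathit{len}) \geq 0$ is exactly the induction hypothesis. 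A product of two non-negatives is non-negative, and a sum of two non-negatives is non-negative, giving the result.

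The main obstacle I anticipate is not conceptual but a matter of bookkeeping in Coq: carefully unfolding $\mathit{CurPot}_N$ through $\mathit{AfterNstepsNeuron}$ and $\mathit{NextPotential}$, handling the Boolean-to-Prop conversion in the branch on $\tau_N \leq_? \mathit{CurPot}_N(\mathit{tl},\mathit{len})$, and choosing the right lemmas from Coq's rational number library (e.g.\ monotonicity of addition and non-negativity of products of non-negatives under $\equiv$) so that the rewriting with $\equiv$ rather than syntactic equality goes through cleanly. None of these should block the proof, but they explain why a seemingly short argument requires a non-trivial amount of formal setup.
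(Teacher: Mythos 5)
Your proposal is correct and follows essentially the same route as the paper's proof: induction on the input list, the base case discharged via Lemma~\ref{lem:circprops}(3), and the inductive case split on whether the neuron fired, using Lemma~\ref{lem:nsteps-props}(1), Lemma~\ref{lem:nonnegpot}, the \texttt{LeakRange} constraint, and the induction hypothesis. The only (cosmetic) difference is that you correctly write the potential as $\mathit{potential}(w_N, i, \mathit{len})$ on the new input $i$, where the paper's prose slightly abuses notation.
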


\begin{proof}
The proof proceeds by induction on the structure of the input sequence.\\
\textbf{Base case}: $\mathit{inp}=[\ ]$ (the empty list).

When there is no input in the input sequence, the current potential of a neuron is the same as the one at time 0, i.e., $\mathit{CurPot}_N([\ ], len) = \mathit{CurPot}(N)$.
By the properties of $\mathit{is\_initial}_\mathit{Neur}$ in Lemma~\ref{lem:circprops} (3): $CurPot(N) \equiv 0$. Hence, $CurPot_N([\ ], len)$ is non-negative.\\
\textbf{Induction case}: We assume that the property holds for the input sequence $\mathit{inp}$, and we must show that it also holds for $\mathit{i::inp}$, where $i$ is an additional input value.\\
With Lemma~\ref{lem:nsteps-props} (1), we unfold the function $\mathit{CurPot}_N$:
\[
\mathit{CurPot}_N(i::\mathit{inp}, \mathit{len}) \equiv 
\begin{cases}
\mathit{potential}(w_N, \mathit{inp}, \mathit{len}) & \mathit{if}~ \tau_N \leq \mathit{CurPot}_N(\mathit{inp}, \mathit{len}), \\
\mathit{potential}(w_N, \mathit{inp}, \mathit{len})~\mathop{+} & \mathit{otherwise}.\\
\quad \mathit{lk}_N \cdot \mathit{CurPot}_N(\mathit{inp}, \mathit{len}) & \\
\end{cases}
\]
We divide the proof in the two sub-cases: one where the neuron $\mathit{N}$ fires after processing the input sequence $\mathit{inp}$ or not.

\textbf{Sub-case $\tau_N \leq \mathit{CurPot}_N(\mathit{inp},\mathit{len})$} (firing). 
By Lemma~\ref{lem:nonnegpot}, $\mathit{potential}(w_N, \mathit{inp}, \mathit{len})\geq 0$, since the weights are non-negative. We conclude that $\mathit{CurPot}_N(i::\mathit{inp},\mathit{len})$ is non-negative.

\textbf{Sub-case $\tau_N > \mathit{CurPot}_N(\mathit{inp},\mathit{len})$} (no firing).
With the same reasoning as the previous case, $\mathit{potential}(w_N, \mathit{inp}, \mathit{len})$ is non-negative. By the induction hypothesis, we have $\mathit{CurPot}_N(\mathit{inp},len) \geq 0$ and by definition, $\mathit{lk}_N\geq 0$.
We can conclude that $\mathit{CurPot_N}(i::\mathit{inp}, \mathit{len}) \geq 0$.
\end{proof}

\subsubsection{Property: Inhibitor Effect}
A initialized neuron with inputs coming exclusively from inhibitory neurons never fires, i.e., the neuron is blocked in the state of non-firing.
\begin{lem}[Inhibitor Effect for a Multiple-Input Neuron] \texttt{[Inhibitor\_Effect]}\\\ \\
$\begin{array}{l}
  \forall (N:\mathit{Neuron})  (\mathit{inp}:\mathit{list}~ (\mathit{nat}\rightarrow\mathit{bool})) (\mathit{len}:\mathit{nat}),\\
  \quad \mathit{is\_initial}_\mathit{Neur}(N, \mathit{len}) ~\land ~
  (\forall (\mathit{id}:\mathit{nat}), \mathit{id} < \mathit{len} \rightarrow w_N(\mathit{id})\leq 0)\rightarrow\\
  \quad\quad\forall (\mathit{a}:\mathit{bool}), a \in \mathit{Output}_N(\mathit{inp}, \mathit{len})\rightarrow a = 0.
  \end{array}$
\label{lem:inhibit}
\end{lem}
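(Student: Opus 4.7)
The plan is to reduce the claim about outputs to a claim about potentials. By Lemma~\ref{lem:nsteps-props}~(2), each element of $\mathit{Output}_N(\mathit{inp},\mathit{len})$ is either a stored output from an earlier time or a Boolean test of the form $\tau_N \leq_? \mathit{CurPot}_N(\ldots, \mathit{len})$. Since $\tau_N > 0$ by the PosTau constraint, it will suffice to show that the current potential stays non-positive at every step; then every such test evaluates to $0$ (false), and so every element of the output list is $0$.

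The first step is therefore to establish a companion to Lemma~\ref{lem:nonnegcurr}: under the hypotheses of Lemma~\ref{lem:inhibit}, $\mathit{CurPot}_N(\mathit{inp}, \mathit{len}) \leq 0$ for every input sequence $\mathit{inp}$. I would prove this by induction on $\mathit{inp}$, dually to the proof of Lemma~\ref{lem:nonnegcurr}. The base case uses Lemma~\ref{lem:circprops}~(3) to give $\mathit{CurPot}(N) \equiv 0$. In the step case, unfolding via Lemma~\ref{lem:nsteps-props}~(1) splits into two sub-cases: in the firing sub-case $\mathit{CurPot}_N(i::\mathit{inp},\mathit{len}) \equiv \mathit{potential}(w_N,i,\mathit{len})$, which is $\leq 0$ by Lemma~\ref{lem:nonpospot}; in the non-firing sub-case we additionally have the term $\mathit{lk}_N \cdot \mathit{CurPot}_N(\mathit{inp}, \mathit{len})$, which is non-positive since $\mathit{lk}_N \in [0,1]$ (by LeakRange) and the induction hypothesis gives $\mathit{CurPot}_N(\mathit{inp},\mathit{len})\leq 0$.

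With that companion lemma in hand, the main claim is proved by induction on $\mathit{inp}$ combined with list-membership case analysis. For $\mathit{inp} = [\ ]$, Lemma~\ref{lem:circprops}~(3) gives $\mathit{Output}_N([\ ], \mathit{len}) = [0]$, so if $a \in [0]$ then $a = 0$. For $\mathit{inp} = i::\mathit{inp}'$, Lemma~\ref{lem:nsteps-props}~(2) rewrites the output as
\[
\mathit{Output}_N(i::\mathit{inp}',\mathit{len}) = (\tau_N \leq_? \mathit{CurPot}_N(i::\mathit{inp}',\mathit{len})) :: \mathit{Output}_N(\mathit{inp}',\mathit{len}).
\]
If $a$ lies in the tail, the induction hypothesis yields $a = 0$. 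If $a$ is the head, the companion lemma gives $\mathit{CurPot}_N(i::\mathit{inp}',\mathit{len}) \leq 0$, while PosTau supplies $\tau_N > 0$; hence the Boolean test is false and $a = 0$.

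The main obstacle is the companion non-positivity lemma for $\mathit{CurPot}_N$: its induction requires juggling the two branches of the LI\&F update rule and being careful with rational-number reasoning, using $\mathop{\equiv}$ from Coq's rationals rather than syntactic equality. Once that lemma is in place, the remainder is a routine structural induction whose shape is essentially parallel to that of Lemma~\ref{lem:nonnegcurr}, with an extra membership case split on the cons in the output list.
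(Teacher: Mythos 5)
Your proof is correct, and its overall skeleton matches the paper's: induction on the input list, unfolding via Lemma~\ref{lem:nsteps-props}, the base case settled by Lemma~\ref{lem:circprops}~(3), and the head of the output forced to $0$ because the potential cannot reach the strictly positive threshold. The one genuine structural difference is that you factor out a companion invariant, $\mathit{CurPot}_N(\mathit{inp},\mathit{len})\leq 0$, proved by its own induction as the exact dual of Lemma~\ref{lem:nonnegcurr}, and then derive the output claim from it. The paper instead proves everything in a single induction on the output statement: in the non-firing sub-case it never establishes that the previous potential is non-positive, but only uses the sub-case hypothesis $\mathit{CurPot}_N(\mathit{inp},\mathit{len})<\tau_N$ together with $\mathit{lk}_N\in[0,1]$ to bound $\mathit{lk}_N\cdot\mathit{CurPot}_N(\mathit{inp},\mathit{len})<\tau_N$, and then adds the non-positive $\mathit{potential}$ term. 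Your decomposition proves a strictly stronger invariant than is needed, which costs an extra lemma but yields a cleaner, reusable dual to \texttt{Always\_N\_Neg} and keeps the rational-number reasoning localized; the paper's version is more economical but leans on a slightly more delicate inline inequality (one must check it for both signs of the previous potential). Both arguments go through.
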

\begin{proof}
The proof is similar to the one of Lemma~\ref{lem:nonnegcurr}. It proceeds by induction on the structure of the input sequence.\\
\textbf{Base case}: $\mathit{inp}=[\ ]$ (the empty list).

Since there is no input in the input sequence and by the properties of $\mathit{is\_initial}_\mathit{Neur}$ (Lemma~\ref{lem:circprops}): $Output_N([\ ], len) = Output(N) = [0]$. The property is verified for the base case.\\
\textbf{Induction case}: We assume that the property holds for the input sequence $\mathit{inp}$, and we must show that it also holds for $\mathit{i::inp}$, where $i$ is an additional input value.\\
By Lemma~\ref{lem:nsteps-props}(2): 
\[
\mathit{Output}_N(i::\mathit{inp}, \mathit{len}) = 
\begin{cases}
1 \mathop{::} \mathit{Output}_N (\mathit{inp}, \mathit{len})& \mathit{if}~ \tau_N \leq \mathit{CurPot}_N(i::\mathit{inp}, \mathit{len}), \\
0 \mathop{::} \mathit{Output}_N (\mathit{inp}, \mathit{len}) & \mathit{otherwise}.\\
\end{cases}
\]
By the induction hypothesis, $\forall a \in \mathit{Output}_N (\mathit{inp}, \mathit{len}) \rightarrow a = 0$.
We remark that the property is verified if and only if $\tau_N > \mathit{CurPot}_N(i::\mathit{inp}, \mathit{len})$.\\
By Lemma~\ref{lem:nsteps-props} (1): 
\[
\mathit{CurPot}_N(i::\mathit{inp}, \mathit{len}) \equiv 
\begin{cases}
\mathit{potential}(w_N, \mathit{inp}, \mathit{len}) & \mathit{if}~ \tau_N \leq \mathit{CurPot}_N(\mathit{inp}, \mathit{len}), \\
\mathit{potential}(w_N, \mathit{inp}, \mathit{len})~\mathop{+} & \mathit{otherwise}.\\
\quad \mathit{lk}_N \cdot \mathit{CurPot}_N(\mathit{inp}, \mathit{len}) & \\
\end{cases}
\]

We divide the proof in the two sub-cases to show that we always have: $\tau_N >\mathit{CurPot}_N(i::\mathit{inp},\mathit{len})$. 

\textbf{Sub-case $\tau_N \leq \mathit{CurPot}_N(\mathit{inp},\mathit{len})$} (firing). 
By Lemma~\ref{lem:nonpospot}, $\mathit{potential}(w_N, \mathit{inp}, \mathit{len})\leq 0$ since the weights are non-positive. Hence, $\mathit{CurPot}_N(i::\mathit{inp},\mathit{len})$ is non-positive. By design of a neuronal feature, $\tau_N>0$. We conclude that $\mathit{CurPot}_N(i::\mathit{inp},\mathit{len})<\tau_N$.

\textbf{Sub-case $\tau_N > \mathit{CurPot}_N(\mathit{inp},\mathit{len})$} (no firing).
Using the same reasoning as the previous case, $\mathit{potential}(w_N, \mathit{inp}, \mathit{len}) \leq 0$. By definition of the leak factor, we have $0 < lk_N < 1$, and by the hypothesis of this subcase, we have $\mathit{CurPot}_N(\mathit{inp},\mathit{len}) < \tau_N$. 
Hence, $lk_N\cdot \mathit{CurPot}_N(\mathit{inp},\mathit{len}) < \tau_N$.
We can conclude that $\mathit{CurPot}_N(i::\mathit{inp}, \mathit{len}) < \tau_N$.
\end{proof}

\subsection{Properties of Single-Input Neurons}
\label{subsec:singleinput}
In this subsection, we analyze the behavior of a neuron with at most a single source of input, starting in its initialized state. We examine cases where the source's weight is either greater or less than the activation threshold. We also study the effect of having $1$s in the input sequences on the neuron’s output, particularly on the number of firings.

In Figure \ref{fig:oneinput}, we introduce the definition of \texttt{One\_input\_Or\_Less} to characterize single-input neurons.
\begin{figure}[htb]
\begin{verbatim}
Definition One_Input_Or_Less (nf : NeuronFeature) (id len : nat) :=  
  id < len /\
  (forall id', id <> id' -> id' < len -> Weights nf id' == 0).
\end{verbatim}
\caption{Rocq definitions for one-input neurons}
\label{fig:oneinput}  
\end{figure}
The property takes three arguments: \texttt{nf}, represents the neuronal feature of the neuron of interest, \texttt{id} is the identifier of a potential input source for that neuron, and \texttt{len} corresponds to the number of neurons in the environment.
Recall that each element in the environment is numbered sequentially, starting from 0 with a step of 1.
To ensure that the identifier \texttt{id} is within the environment, the first part of the definition requires \texttt{id} to be strictly less than the number of elements in the environment \texttt{len}.
At this stage, no restrictions are imposed on the neuron's identifier itself, contrary to a neuron in a circuit.
The second part of the property verifies that every neuron in the environment other than the one with the identifier \texttt{id} is not a source of input to the neuron of interest with feature \texttt{nf}, i.e., the value of \texttt{(Weights nf)} is null for every identifier other than \texttt{id}.

The notation we adopt for this single-input property is $\mathit{One\_input}(N, \mathit{id}, \mathit{len})$ where the neuron is $N$,  its one source of input has identifier $\mathit{id}$, and it is in an environment of $\mathit{len}$ neurons.

Before we introduce the properties on the behaviours of the current potential or the output of a single neuron (as expressed by the $\mathit{CurPot}_N$ and $\mathit{Output}_N$ functions, respectively), we first consider some properties about the \texttt{potential} function from Figure~\ref{fig:Coqpotential}; these properties are necessary to understand the proofs of our properties of interest.

The value of the potential function is reduced to two simple cases depending only on the value of the weight of the single source of input, here $\mathit{id}$, and the value of the input from $\mathit{id}$. 
\begin{lem}[Simplifed $\mathit{potential}$ Function for a Single-Input Neuron]\texttt{[potential\_oi]}\\\ \\
$\begin{array}{l}
  \forall (N:\mathit{Neuron}) (\mathit{id}:\mathit{nat}) (\mathit{inp}:\mathit{nat}\rightarrow\mathit{bool}) (\mathit{len}:\mathit{nat}),\\
  \quad \mathit{One\_input}(N, \mathit{id}, \mathit{len})\rightarrow\\
   \quad\mathit{potential}(w_N, \mathit{inp}, \mathit{len})\equiv\begin{cases}
  w_N (\mathit{id}) & \mathit{if}~\mathit({inp}~\mathit{id}) = \mathit{true}, \\
  0 & \mathit{otherwise}.\\
\end{cases}
  \end{array}$
\label{lem:potoi}
\end{lem}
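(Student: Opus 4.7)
The plan is to prove this by induction on $\mathit{len}$, the number of sources in the environment. This matches the recursive structure of the $\mathit{potential}$ function and parallels the induction used for Lemma~\ref{lem:nonnegpot}.

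For the base case $\mathit{len}=0$, the hypothesis $\mathit{One\_input}(N,\mathit{id},0)$ requires $\mathit{id}<0$, which is impossible, so the implication holds vacuously. For the inductive case $\mathit{len}=S(m)$, I would first unfold the definition of $\mathit{potential}$ to get either $\mathit{potential}(w_N,\mathit{inp},m)+w_N(m)$ or $\mathit{potential}(w_N,\mathit{inp},m)$ depending on $\mathit{inp}(m)$. From the hypothesis $\mathit{id}<S(m)$, I would then perform a case split on whether $\mathit{id}=m$ or $\mathit{id}<m$.

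In the sub-case $\mathit{id}=m$, all weights $w_N(k)$ for $k<m$ are $0$ (since $k\neq\mathit{id}$ and $k<\mathit{len}$, so the second conjunct of $\mathit{One\_input}$ applies). This forces $\mathit{potential}(w_N,\mathit{inp},m)\equiv 0$; I would obtain this either by a small auxiliary lemma (potential is null when all weights in range are null), or by invoking Lemmas~\ref{lem:nonnegpot} and~\ref{lem:nonpospot} together and concluding by antisymmetry. The conclusion then follows by direct case analysis on $\mathit{inp}(\mathit{id})=\mathit{inp}(m)$. In the sub-case $\mathit{id}<m$, I would invoke the induction hypothesis after verifying that $\mathit{One\_input}(N,\mathit{id},m)$ holds (it does, since shrinking $\mathit{len}$ only weakens the universal constraint on weights). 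I would also use the hypothesis to deduce $w_N(m)\equiv 0$ (from $\mathit{id}\neq m$ and $m<S(m)$). Then both branches of the $\mathit{if}\;\mathit{inp}(m)$ reduce to $\mathit{potential}(w_N,\mathit{inp},m)$, and the induction hypothesis delivers the desired value.

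The main obstacle will likely be the "all weights null implies potential null" argument in the sub-case $\mathit{id}=m$: it requires slight care because the predicate $\mathit{One\_input}$ gives information about indices strictly below $\mathit{len}=S(m)$, and one must restrict that information to indices below $m$ before folding it into the recursion. If an appropriate lemma is not already present in the library, I would add a short helper \texttt{potential\_null\_w} established by a parallel induction on $\mathit{len}$, and then use it here; otherwise the squeeze between Lemmas~\ref{lem:nonnegpot} and~\ref{lem:nonpospot} gives the result directly since $w_N(k)\equiv 0$ is both non-negative and non-positive on the relevant range.
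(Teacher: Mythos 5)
Your proposal is correct, and it takes the approach the paper intends: the paper gives no explicit proof of this lemma beyond the remark that every weight other than $w_N(\mathit{id})$ is null, and your induction on $\mathit{len}$ (mirroring the paper's proof of Lemma~\ref{lem:nonnegpot}, with the case split on $\mathit{id}=m$ versus $\mathit{id}<m$ and the squeeze via Lemmas~\ref{lem:nonnegpot} and~\ref{lem:nonpospot} for the all-weights-null subcase) is the natural formalization of exactly that observation. Your attention to restricting the $\mathit{One\_input}$ hypothesis from indices below $S(m)$ to indices below $m$ before applying the induction hypothesis is precisely the right care to take.
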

\noindent
Every weight other than the one for $\mathit{id}$ is null, leaving only $w_N(id)$ to be potentially non-null.

The properties below follow from Lemma~\ref{lem:potoi}.
\begin{lem}[Properties of the $\mathit{potential}$ Function for Single-Input Neurons]
~~\\[-5pt]
\begin{enumerate}
\item $\forall (N:\mathit{Neuron}) (\mathit{id}:\mathit{nat}) (\mathit{inp}:\mathit{nat}\rightarrow\mathit{bool}) (\mathit{len}:\mathit{nat}) (\mathit{m}:\mathit{nat}),\\
  \quad \mathit{One\_input}(N, \mathit{id}, \mathit{len})\land 0 < m\land m\leq w_N(\mathit{id})\rightarrow\\
   \quad\quad (m \leq_? \mathit{potential}(w_N, \mathit{inp}, \mathit{len})) = \mathit{inp}~\mathit{id}$.\\ \texttt{[potential1N\_w\_greater\_n]}
\item $\forall (N:\mathit{Neuron}) (\mathit{id}:\mathit{nat}) (\mathit{inp}:\mathit{nat}\rightarrow\mathit{bool}) (\mathit{len}:\mathit{nat}) (\mathit{m}:\mathit{nat}),\\
  \quad \mathit{One\_input}(N, \mathit{id}, \mathit{len})\land 0 < m\land m > w_N(\mathit{id})\rightarrow\\
   \quad\quad m > \mathit{potential}(w_N, \mathit{inp}, \mathit{len})$.\\ \texttt{[potential1N\_w\_less\_n]}
\end{enumerate}
\label{lem:sineurprops}
\end{lem}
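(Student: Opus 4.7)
The plan is to derive both parts directly from Lemma~\ref{lem:potoi}, which rewrites $\mathit{potential}(w_N,\mathit{inp},\mathit{len})$ as a two-case expression depending on the Boolean value of $\mathit{inp}~\mathit{id}$. For each part I would first invoke Lemma~\ref{lem:potoi} using the $\mathit{One\_input}(N,\mathit{id},\mathit{len})$ hypothesis, then perform a case analysis on $\mathit{inp}~\mathit{id}$, and finally close each branch by a direct rational-arithmetic argument from the hypothesis relating $m$ and $w_N(\mathit{id})$.

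For part (1), after rewriting with Lemma~\ref{lem:potoi}, suppose first that $\mathit{inp}~\mathit{id} = \mathit{true}$. Then $\mathit{potential}(w_N,\mathit{inp},\mathit{len}) \equiv w_N(\mathit{id})$, and the hypothesis $m \leq w_N(\mathit{id})$ gives $m \leq \mathit{potential}(w_N,\mathit{inp},\mathit{len})$, so the Boolean comparison $m \leq_? \mathit{potential}(w_N,\mathit{inp},\mathit{len})$ evaluates to $\mathit{true}$, matching $\mathit{inp}~\mathit{id}$. In the other branch $\mathit{inp}~\mathit{id} = \mathit{false}$, the potential equivalates $0$, and since $0 < m$ the Boolean comparison evaluates to $\mathit{false}$, again matching $\mathit{inp}~\mathit{id}$.

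For part (2), the same case split works. If $\mathit{inp}~\mathit{id} = \mathit{true}$, then Lemma~\ref{lem:potoi} gives $\mathit{potential}(w_N,\mathit{inp},\mathit{len}) \equiv w_N(\mathit{id})$, and the strict hypothesis $m > w_N(\mathit{id})$ concludes immediately. If $\mathit{inp}~\mathit{id} = \mathit{false}$, the potential equivalates $0$, and $0 < m$ finishes the case.

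The main obstacle here is not mathematical depth but rather the bookkeeping between Coq's rational equivalence $\equiv$ and the propositional and Boolean comparison operators on $\mathbb{Q}$. Lemma~\ref{lem:potoi} yields an $\equiv$-equivalence, so one must rewrite under the Boolean operator $\leq_?$ using the appropriate morphism lemmas from Coq's rational library (and reflect between $\leq$ and $\leq_?$ via the standard reflection lemmas) before the arithmetic becomes trivial. Once these rewriting steps are in place, both parts reduce to two immediate sub-cases each, so the overall structure should be short.
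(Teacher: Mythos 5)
Your proposal is correct and matches the paper's approach: the paper likewise derives both parts directly from Lemma~\ref{lem:potoi} (stating only that ``the properties below follow from Lemma~\ref{lem:potoi}''), with the implicit case split on $\mathit{inp}~\mathit{id}$ and the arithmetic comparisons against $m$ that you spell out. Your added remarks on handling $\equiv$ under $\leq_?$ via morphism and reflection lemmas are an accurate account of the only real work in the Coq formalization.
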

\noindent
In the case where the weight of the identifier $\mathit{id}$ is greater than or equal to a positive integer $m$, the potential is only greater than $m$ if the input for $\mathit{id}$ is $1$.
If the weight for $\mathit{id}$ is smaller than $m$, the potential is always smaller than $m$.

Since there is only one input at each time step in the single-input case, we simplify the type for the inputs and write $\mathit{list}~\mathit{bool}$ instead of $\mathit{list}~(\mathit{nat} \rightarrow \mathit{bool})$ moving forward.

\begin{rem}
In the case of a single-input neuron, the current potential and the output functions ($\mathit{CurPot}_N$ and $\mathit{Output}_N$, respectively) can be further simplified when the weight for single-input source reaches the threshold.

Let's start with the definition of the current potential given by Lemma \ref{lem:nsteps-props} (1) for an input sequence $i::\mathit{inp}$.  Since this input is non-empty, the current time after processing the input will be greater than $0$, and thus can be expressed as $n+1$.\\[2pt] 
$\mathit{CurPot_N}(i::\mathit{inp}, \mathit{len}) \equiv
\begin{cases}
 \mathit{potential}(w_N, i, \mathit{len}) & \mathit{if}~\tau_N \leq \mathit{CurPot}_N(\mathit{inp}, \mathit{len}), \\
 \mathit{potential}(w_N, i, \mathit{len})\ + & \mathit{otherwise}.\\
\quad \mathit{lk}_N \cdot \mathit{CurPot}_N(\mathit{inp}, \mathit{len}) & \\
\end{cases}$.\\[2pt]
We remark that at a time $n+ 1$, there may be a residual potential from the neuron at time $n$, but only in the case when the potential was not greater than the threshold. 
For $N$, a single-input neuron, and $\mathit{id}$, the identifier of the source of input of $N$, this definition can be simplified and expressed as the lemma below.
\end{rem}

\begin{lem}[Simplified $\mathit{CurPot}_N$ Function for Single-Input Neurons]\ \\
~~\\[-5pt]
$\mathit{CurPot_N}(i::\mathit{inp}, \mathit{len}) \equiv
\begin{cases}
 w_N(\mathit{id}) & \mathit{if}~\tau_N \leq \mathit{CurPot}_N(\mathit{inp}, \mathit{len})\\
 &\quad\mathit{and}~i = \mathit{true}, \\
  0 & \mathit{if}~\tau_N \leq \mathit{CurPot}_N(\mathit{inp}, \mathit{len})\\
 &\quad\mathit{and}~ i =\mathit{false}, \\
 w_N(\mathit{id})~ + & \mathit{if}~ \mathit{CurPot}_N(\mathit{inp}, \mathit{len}) < \tau_N \\
\quad \mathit{lk}_N \cdot \mathit{CurPot}_N(\mathit{inp}, \mathit{len})  &\quad\mathit{and}~ i =\mathit{true},\\ \mathit{lk}_N \cdot \mathit{CurPot}_N(\mathit{inp}, \mathit{len}) & \mathit{otherwise}.\\
\end{cases}$.
\label{lem:curpotoiint}
\end{lem}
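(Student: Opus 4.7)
The plan is to derive this lemma essentially by unfolding: apply Lemma \ref{lem:nsteps-props}(1) to split $\mathit{CurPot}_N(i::\mathit{inp},\mathit{len})$ into the firing and non-firing cases, then use Lemma \ref{lem:potoi} within each case to simplify the $\mathit{potential}$ contribution according to whether $i$ is true or false. Combining the two binary splits yields exactly the four displayed cases.

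More concretely, first I would invoke Lemma \ref{lem:nsteps-props}(1) to rewrite
\[
\mathit{CurPot}_N(i::\mathit{inp},\mathit{len})\equiv
\begin{cases}
\mathit{potential}(w_N,i,\mathit{len}) & \text{if } \tau_N\le \mathit{CurPot}_N(\mathit{inp},\mathit{len}),\\
\mathit{potential}(w_N,i,\mathit{len})+\mathit{lk}_N\cdot \mathit{CurPot}_N(\mathit{inp},\mathit{len}) & \text{otherwise}.
\end{cases}
\]
In each branch, I would then use the hypothesis $\mathit{One\_input}(N,\mathit{id},\mathit{len})$, together with Lemma \ref{lem:potoi}, to conclude that $\mathit{potential}(w_N,i,\mathit{len})\equiv w_N(\mathit{id})$ when $i=\mathit{true}$ and $\mathit{potential}(w_N,i,\mathit{len})\equiv 0$ when $i=\mathit{false}$. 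Doing a case analysis on the boolean value of $i$ inside each of the two branches produces the four cases in the statement, with the fourth one simplified by $w_N(\mathit{id})+\mathit{lk}_N\cdot \mathit{CurPot}_N(\mathit{inp},\mathit{len})\equiv \mathit{lk}_N\cdot \mathit{CurPot}_N(\mathit{inp},\mathit{len})$ when the first summand is $0$, and analogously the second case by $0 + 0 \equiv 0$.

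The only real subtlety, and what I expect to be the main obstacle, is the implicit type conversion flagged by the remark just before the lemma: Lemmas \ref{lem:nsteps-props} and \ref{lem:potoi} are stated with inputs of type $\mathit{nat}\to\mathit{bool}$, whereas here $i$ has been simplified to a single $\mathit{bool}$. To apply those lemmas cleanly, I would first treat $i$ as standing for any input function $f:\mathit{nat}\to\mathit{bool}$ with $f(\mathit{id})=i$, and observe via $\mathit{One\_input}(N,\mathit{id},\mathit{len})$ that $\mathit{potential}(w_N,f,\mathit{len})$ depends only on $f(\mathit{id})$ — this is precisely what Lemma \ref{lem:potoi} provides. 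Once the two lemmas are applied under this identification, the algebraic combination of cases is immediate and no induction is needed, since the recursion on inputs has already been absorbed into $\mathit{CurPot}_N(\mathit{inp},\mathit{len})$ on the right-hand side.
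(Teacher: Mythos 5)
Your proposal is correct and follows essentially the same route as the paper: the paper derives this lemma in the remark immediately preceding it by unfolding via Lemma~\ref{lem:nsteps-props}(1) and then simplifying the $\mathit{potential}$ term using the single-input property (Lemma~\ref{lem:potoi}), exactly as you describe. Your handling of the boolean-versus-function typing of $i$ matches the paper's convention of silently identifying the single input value with the value of the input function at $\mathit{id}$.
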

If we suppose that $\tau_N \leq w_N(id)$ and \texttt{N} is initial, then there is never a residue of potential. Indeed, at time $0$, there is no residue potential since we start at a potential equal to $0$. At time $n$, if the neuron does not fire and there is no residual potential from before time $n$, that means that the potential at $n$ was null and thus there is no residual potential at time $n+1$.

The first property of Lemma~\ref{lem:sioutcurprops} is a summary of this result.
\begin{lem}[Simplified $\mathit{CurPot}_N$ and $\mathit{Output}_N$ Functions, Weight Reaching Threshold]\ \\
~~\\[-5pt]
\begin{enumerate}
\item $\forall (N:\mathit{Neuron}) (\mathit{id}:\mathit{nat}) (\mathit{inp}:\mathit{list}~\mathit{bool}) (\mathit{len}:\mathit{nat}) (\mathit{i}:\mathit{bool}),\\
  \quad \mathit{One\_input}(N, \mathit{id}, \mathit{len})\land \mathit{is\_initial}_\mathit{Neur}(N,\mathit{len})\land \tau_N\leq w_N(\mathit{id})\rightarrow\\
   \quad\quad \mathit{CurPot}_N(i::\mathit{inp}, \mathit{len})\equiv\begin{cases}
  w_N (\mathit{id}) & \mathit{if}~i = \mathit{true}, \\
  0 & \mathit{otherwise}.\\
  \end{cases}.$
  \\ \texttt{[CurPot\_cons\_w\_greater\_tau\_oi]}
  \item $\forall (N:\mathit{Neuron}) (\mathit{id}:\mathit{nat}) (\mathit{inp}:\mathit{list}~\mathit{bool}) (\mathit{len}:\mathit{nat}) (\mathit{i}:\mathit{bool}),\\
  \quad \mathit{One\_input}(N, \mathit{id}, \mathit{len})\land \mathit{is\_initial}_\mathit{Neur}(N,\mathit{len})\land \tau_N\leq w_N(\mathit{id})\rightarrow\\
   \quad\quad \mathit{Output}_N(i::\mathit{inp}, \mathit{len})= i :: \mathit{Output}_N(\mathit{inp}, \mathit{len})$
  \\ \texttt{[Output\_cons\_w\_greater\_tau\_oi]}
\end{enumerate}
\label{lem:sioutcurprops}
\end{lem}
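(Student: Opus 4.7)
My plan is to prove Part (1) by induction on the list $\mathit{inp}$, and then derive Part (2) directly from Part (1) together with Lemma~\ref{lem:nsteps-props}(2). The Remark preceding the lemma already explains the intuition: under the hypothesis $\tau_N \leq w_N(\mathit{id})$, the current potential after processing any non-empty input list takes only the two values $0$ or $w_N(\mathit{id})$, so there is never any leaky residue carried across steps.

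For the base case $\mathit{inp} = [\ ]$, I would compute $\mathit{CurPot}_N(i::[\ ],\mathit{len})$ via Lemma~\ref{lem:nsteps-props}(1). Since $N$ is initial, Lemma~\ref{lem:circprops}(3) gives $\mathit{CurPot}_N([\ ],\mathit{len}) = \mathit{CurPot}(N) \equiv 0$, and by $\mathit{PosTau}$ we have $\tau_N > 0$, so we fall into the non-firing branch. This branch yields $\mathit{potential}(w_N, i, \mathit{len}) + \mathit{lk}_N \cdot 0 \equiv \mathit{potential}(w_N, i, \mathit{len})$. Applying Lemma~\ref{lem:potoi} then gives exactly $w_N(\mathit{id})$ if $i = \mathit{true}$ and $0$ otherwise.

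For the inductive step $\mathit{inp} = j::\mathit{rest}$, I would apply the induction hypothesis to $j::\mathit{rest}$, obtaining $\mathit{CurPot}_N(j::\mathit{rest},\mathit{len}) \equiv w_N(\mathit{id})$ if $j = \mathit{true}$ and $\equiv 0$ otherwise. Then I split on $j$: if $j = \mathit{true}$, the hypothesis $\tau_N \leq w_N(\mathit{id})$ places us in the firing branch of Lemma~\ref{lem:nsteps-props}(1) (equivalently, the first two lines of Lemma~\ref{lem:curpotoiint}), giving $\mathit{CurPot}_N(i::j::\mathit{rest},\mathit{len}) \equiv \mathit{potential}(w_N,i,\mathit{len})$, which Lemma~\ref{lem:potoi} simplifies to the desired dichotomy. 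If $j = \mathit{false}$, then $\mathit{CurPot}_N(j::\mathit{rest},\mathit{len}) \equiv 0 < \tau_N$, placing us in the non-firing branch, where the residual term $\mathit{lk}_N \cdot 0$ vanishes and we again reduce to $\mathit{potential}(w_N,i,\mathit{len})$, handled by Lemma~\ref{lem:potoi}.

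Part (2) follows by unfolding with Lemma~\ref{lem:nsteps-props}(2), so that the head of $\mathit{Output}_N(i::\mathit{inp},\mathit{len})$ is $\tau_N \leq_? \mathit{CurPot}_N(i::\mathit{inp},\mathit{len})$. Plugging in Part (1): when $i = \mathit{true}$ this comparison becomes $\tau_N \leq_? w_N(\mathit{id})$, which is $\mathit{true}$ by hypothesis; when $i = \mathit{false}$ it becomes $\tau_N \leq_? 0$, which is $\mathit{false}$ since $\tau_N > 0$. Either way the head equals $i$, which matches the target. The main obstacle I anticipate is simply bookkeeping: I must take care that the induction hypothesis is applied to the non-empty list $j::\mathit{rest}$ rather than to $\mathit{rest}$ (otherwise the base case would not provide the dichotomy required), and I must be careful with the distinction between rational equivalence $\equiv$ and Coq's syntactic equality when rewriting inside $\mathit{CurPot}$ expressions, since the paper's statement uses $\equiv$ for $\mathit{CurPot}$ and $=$ for $\mathit{Output}$.
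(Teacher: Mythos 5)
Your proposal is correct and follows essentially the same route as the paper: the paper's justification is precisely the preceding remark's induction-on-time argument that no residual potential ever survives under $\tau_N\leq w_N(\mathit{id})$ (your Part (1) induction, with Lemma~\ref{lem:nsteps-props}(1), Lemma~\ref{lem:potoi}, and $\mathit{PosTau}$ discharging the two branches), followed by deriving Part (2) from Part (1) via Lemma~\ref{lem:nsteps-props}(2) exactly as you do. Your closing cautions about instantiating the induction hypothesis at $j::\mathit{rest}$ and about rewriting with $\equiv$ under $\leq_?$ are the right bookkeeping points for the Coq formalization.
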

\noindent
The second property of Lemma~\ref{lem:sioutcurprops} is a consequence of the first one.
By Lemma~\ref{lem:nsteps-props}, we know that the firing of the neuron $N$ at time $n+1$ depends only of the value of $w_N(id)$ and the value of $i$. Since $w_N(id) > \tau_N$, if the value of $i$ is true, the neuron fires. If $i$ is false, since the threshold is strictly positive, the threshold is not reached. The neuron does not fire.

\subsubsection{Delayer Effect}

An important property of a single-input neuron is called the \emph{delayer effect}.
It assumes that an initialized single-input neuron $N$ has a source identified by $id$, and the weight of the neuron $N$ for $id$ is greater than or equal than the neuron’s activation threshold.
In those conditions, the output of $N$ correspond to the inputs provided to $N$ by a delay of one time unit. 
For example, if $N$ receives the input sequence $1001101010$ (written in backwards order), the output produced is $10011010100$.
Neurons with this property primarily act as signal transmitters.
Humans have neurons of this type in their auditory system, associated with chemical synapses.
This property is formalized as Proposition~\ref{lem:delay} below.

\begin{prop}[Delayer Effect for a Single-Input Neuron]\texttt{[Delayer\_Effect]}\\\ \\
$\begin{array}{l}
  \forall (N:\mathit{neuron}) (\mathit{id}:\mathit{nat}) (\mathit{inp}:\mathit{list}~\mathit{bool}) (\mathit{len}:\mathit{nat}),\\
  \quad \mathit{One\_input}(N, \mathit{id}, \mathit{len}) ~\land ~
  \mathit{is\_initial}_{\mathit{Neur}}(N, \mathit{len}) ~\land ~
   w_N(\mathit{id})\geq \tau(N)\rightarrow\\
   \quad\quad\mathit{Output}_N(\mathit{inp}, \mathit{len})=\mathit{inp}~\mathop{++}~[0].
  \end{array}$
\label{lem:delay}
\end{prop}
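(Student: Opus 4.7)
The plan is to prove this by induction on the structure of the input list $\mathit{inp}$, leveraging Lemma~\ref{lem:sioutcurprops}(2) which already does most of the heavy lifting by giving a clean recursive description of $\mathit{Output}_N$ when the weight reaches the threshold. Essentially the hypotheses of Proposition~\ref{lem:delay} are exactly the hypotheses of that lemma, so the delayer effect will fall out by peeling off one input at a time and appealing to the induction hypothesis.

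For the \textbf{base case} $\mathit{inp} = [\,]$, I would unfold $\mathit{AfterNstepsNeuron}$ on the empty list, which returns $N$ itself, so $\mathit{Output}_N([\,], \mathit{len}) = \mathit{Output}(N)$. Then since $\mathit{is\_initial}_{\mathit{Neur}}(N,\mathit{len})$ holds, Lemma~\ref{lem:circprops}(3) gives $\mathit{Output}(N) = [0]$, which is precisely $[\,] \mathop{++} [0]$.

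For the \textbf{inductive case}, assume the property holds for $\mathit{inp}$ and consider the extended input $i :: \mathit{inp}$. By Lemma~\ref{lem:sioutcurprops}(2), using the assumptions $\mathit{One\_input}(N,\mathit{id},\mathit{len})$, $\mathit{is\_initial}_{\mathit{Neur}}(N,\mathit{len})$, and $w_N(\mathit{id}) \geq \tau_N$, we obtain
\[
\mathit{Output}_N(i::\mathit{inp}, \mathit{len}) = i :: \mathit{Output}_N(\mathit{inp}, \mathit{len}).
\]
Applying the induction hypothesis to the tail gives $\mathit{Output}_N(\mathit{inp}, \mathit{len}) = \mathit{inp} \mathop{++} [0]$, so
\[
\mathit{Output}_N(i::\mathit{inp}, \mathit{len}) = i :: (\mathit{inp} \mathop{++} [0]) = (i :: \mathit{inp}) \mathop{++} [0],
\]
closing the induction.

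The hard part has already been absorbed into Lemma~\ref{lem:sioutcurprops}, whose justification required the careful observation that under the hypothesis $\tau_N \leq w_N(\mathit{id})$ and initial state, no residual potential ever accumulates (since each firing resets the potential and each non-firing step must have had a null potential to begin with). Once that invariant is available, the delayer effect itself is essentially a one-line induction. I would therefore expect the only potential subtlety in the Coq formalization to be minor bookkeeping around the fact that equality of outputs is syntactic list equality in Coq (not the rational-number equivalence $\equiv$), so the appeal to Lemma~\ref{lem:sioutcurprops}(2) must be made in its Boolean/list form rather than through $\mathit{CurPot}_N$-level equivalences.
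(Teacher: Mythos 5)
Your proposal is correct and follows essentially the same route as the paper's own proof: induction on the input list, with the base case discharged via Lemma~\ref{lem:circprops}(3) and the inductive step reduced to the unfolding $\mathit{Output}_N(i::\mathit{inp},\mathit{len}) = i::\mathit{Output}_N(\mathit{inp},\mathit{len})$ supplied by Lemma~\ref{lem:sioutcurprops}(2). Your closing observation that the real work lives in that auxiliary lemma (the no-residual-potential invariant) matches the paper's structure exactly.
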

\begin{proof}
The proof proceeds by induction on the structure of the input sequence.\\
\textbf{Base case}: $\mathit{inp}=[\ ]$ (the empty list).

Since the input sequence is empty and by the properties on $\mathit{is\_initial}_\mathit{Neur}$ in Lemma~\ref{lem:circprops} (3): $\mathit{Output}_N([\ ], \mathit{len}) = \mathit{Output}(N) = [0] = \mathit{inp} ~\mathop{++}~ [0]$.\\
\textbf{Induction case}: We assume that the property holds for $\mathit{inp}$, and we must show that it also holds for $\mathit{i::inp}$, where $i$ is an additional input value.\\
By Lemma \ref{lem:sioutcurprops},
$\mathit{Output}_N(i::\mathit{inp}, \mathit{len}) = 
i:: \mathit{Output}_N (\mathit{inp}, \mathit{len}).$
By the induction hypothesis, $\mathit{Output}_N (\mathit{inp}, \mathit{len}) = \mathit{inp} ~\mathop{++}~ [0]$. With it, the property is verified : $\mathit{Output}_N(i::\mathit{inp}, \mathit{len}) = i:: \mathit{inp} ~\mathop{++}~ [0]$.
\end{proof}

\subsubsection{Filtering Effect}
The second property on a single-input neuron is the \emph{filtering effect}. In contrast to the delayer effect theorem, this property examines what happens when the weights for the external source of input does not reach the threshold of the neuron of interest. In such a case, a single input alone is insufficient to generate a potential that reaches the threshold at that specific time. Consequently, firing may only occur if there are additional positive residual potentials accumulated from previous times. As a result, the neuron cannot fire in two consecutive time steps—this is the essence of this property.
\begin{prop}[Filtering Effect for a Single-Input Neuron]\texttt{[Filtering\_Effect]}\\\ \\
$\begin{array}{l}
  \forall (N:\mathit{Neuron}) (\mathit{id}:\mathit{nat}) (\mathit{inp}:\mathit{list}~\mathit{bool}) (\mathit{len}:\mathit{nat}),\\
  \quad \mathit{One\_input}(N, \mathit{id}, \mathit{len}) ~\land ~
  \mathit{is\_initial}_{\mathit{Neur}}(\mathit{N}, \mathit{len}) ~\land ~
   w_N(\mathit{id}) < \tau(N)\rightarrow\\
   \quad \forall (a_1~a_2 : \mathit{bool}) (l_1~l_2 :\mathit{list}~\mathit{bool}), \\
   \quad\quad \mathit{Output}_N(\mathit{inp}, \mathit{len})=l_1~\mathop{++}~ [a_1; a_2] ~\mathop{++}~ l_2 \rightarrow \\
   \quad\quad\quad a_1 = 0 \lor a_2 = 0.
  \end{array}$
\label{lem:filter}
\end{prop}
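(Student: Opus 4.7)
The plan is to proceed by induction on the input sequence $\mathit{inp}$. The intuition is that once the neuron fires at some time step, the LI\&F reset rule discards any residual potential; at the following step only the fresh weighted input $w_N(\mathit{id})$ (or $0$, when the fresh input is $\mathit{false}$) can contribute, and since $w_N(\mathit{id}) < \tau_N$ by hypothesis, the neuron cannot fire twice in a row.

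For the base case $\mathit{inp} = [\,]$, unfolding $\mathit{AfterNstepsNeuron}$ and using Lemma~\ref{lem:circprops}(3) yields $\mathit{Output}_N([\,],\mathit{len}) = [0]$, a list too short to be split as $l_1 \mathop{++} [a_1;a_2] \mathop{++} l_2$, so the statement holds vacuously. For the inductive step with $\mathit{inp} = i::\mathit{inp}'$, I would unfold the output using Lemma~\ref{lem:nsteps-props}(2) and case-split on whether $l_1$ is empty. When $l_1 = h :: l_1'$, the pair $[a_1;a_2]$ lies entirely inside $\mathit{Output}_N(\mathit{inp}',\mathit{len})$, so the induction hypothesis applies directly.

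The key sub-case is $l_1 = [\,]$, where $a_1$ equals $\tau_N \leq_? \mathit{CurPot}_N(i::\mathit{inp}',\mathit{len})$ and $a_2$ is the head of $\mathit{Output}_N(\mathit{inp}',\mathit{len})$. Arguing contrapositively, assume $a_2 = 1$; applying Lemma~\ref{lem:nsteps-props}(2) one step deeper gives $\tau_N \leq \mathit{CurPot}_N(\mathit{inp}',\mathit{len})$, so the firing/reset branch in the definition of $\mathit{CurPot}_N(i::\mathit{inp}',\mathit{len})$ is activated. Invoking the simplified single-input formula of Lemma~\ref{lem:curpotoiint} (its first two cases), we obtain $\mathit{CurPot}_N(i::\mathit{inp}',\mathit{len}) \equiv w_N(\mathit{id})$ when $i = 1$ and $\mathit{CurPot}_N(i::\mathit{inp}',\mathit{len}) \equiv 0$ when $i = 0$. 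Both values are strictly below $\tau_N$, using $w_N(\mathit{id}) < \tau_N$ together with the \texttt{PosTau} constraint $\tau_N > 0$, so $a_1 = 0$ as required.

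The main obstacle will be the bookkeeping in this last sub-case: one must first show that $\mathit{Output}_N(\mathit{inp}',\mathit{len})$ is non-empty in order to identify $a_2$ with its head (this follows because initial neurons start with $\mathit{Output}(N) = [0]$ and Lemma~\ref{lem:nsteps-props}(2) only prepends elements), and then convert the boolean equation $a_2 = 1$ into the arithmetic inequality $\tau_N \leq \mathit{CurPot}_N(\mathit{inp}',\mathit{len})$ needed to trigger the reset branch of Lemma~\ref{lem:curpotoiint}. An alternative approach that avoids some of this list surgery would be to strengthen the statement to a conjunction saying that the output has no adjacent $1$s \emph{and} whenever the current head is $1$ the current potential is at most $w_N(\mathit{id})$; the strengthened invariant would then propagate cleanly through the induction.
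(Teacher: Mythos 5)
Your proposal is correct and follows essentially the same route as the paper's proof: structural induction on the input, reduction to the case $l_1=[\,]$ via the induction hypothesis, and then the observation that if the neuron fired at the previous step the reset branch leaves only $\mathit{potential}(w_N,i,\mathit{len})\in\{0,w_N(\mathit{id})\}$, both strictly below $\tau_N$. The only cosmetic difference is that the paper handles the length-one list as a second base case and invokes Lemma~\ref{lem:sineurprops}(2) on the potential, where you fold the same content into the inductive step via Lemma~\ref{lem:curpotoiint}; the bookkeeping you flag (non-emptiness of the previous output, converting $a_2=1$ into $\tau_N\leq\mathit{CurPot}_N(\mathit{inp}',\mathit{len})$) is exactly what the paper's extra base case discharges.
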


\begin{proof}
The proof proceeds by induction on the structure of the input sequence.\\
\textbf{Base case 1}: $\mathit{inp}=[\ ]$ (the empty list).\\
Since the input sequence is empty and by the properties on $\mathit{is\_initial}_\mathit{Neur}$ in Lemma~\ref{lem:circprops} (3): $\mathit{Output}_N([\ ], \mathit{len}) = \mathit{Output}(N) = [0]$. This is in contradiction with one of the hypotheses, thus the property is verified for this case.\\
\textbf{Base case 2}: $\mathit{inp}=[i]$ (one element list).\\
With the same reasoning as previous case and the unfolding property of Lemma~\ref{lem:nsteps-props} (2), we have: $\mathit{Output}_N([i], \mathit{len}) = a\mathop{::} \mathit{Output}_N([\ ],\mathit{len}) = a \mathop{::} [0]$ where $a$ is a boolean value. This verifies the property with $a_1$ as $a$ and $a_2$ as $0$.\\
\textbf{Induction case}: We assume that the property holds for $i_2 :: \mathit{inp}$, and we must show that it also holds for $i_1 :: i_2 ::\mathit{inp}$, where $i_1$ and $i_2$ are input values and $\mathit{inp}$ is an input sequence.\\
By Lemma~\ref{lem:nsteps-props} (2), \\
$\mathit{Output}_N(i_1 :: i_2 ::\mathit{inp}, \mathit{len}) = 
(\tau_N \leq_? \mathit{CurPot}_N(i_1:: i_2 ::\mathit{inp}, \mathit{len})):: \mathit{Output}_N (i_2 :: \mathit{inp}, \mathit{len})$.\\
If $l_1$ contains at least an element, the property is verified with the induction hypothesis. Moving forward, $l_1$ is assumed empty.
By Lemma~\ref{lem:nsteps-props} (2):\\
$\mathit{Output}_N(i_1 :: i_2 ::\mathit{inp}, \mathit{len}) = 
(\tau_N \leq_? \mathit{CurPot}_N(i_1:: i_2 ::\mathit{inp}, \mathit{len}))::(\tau_N \leq_? \mathit{CurPot}_N(i_2 ::\mathit{inp}, \mathit{len})):: \mathit{Output}_N (\mathit{inp}, \mathit{len})$.\\
$a_1$ and $a_2$ are in this case respectively : $(\tau_N \leq_? \mathit{CurPot}_N(i_1:: i_2 ::\mathit{inp}, \mathit{len}))$ and $(\tau_N \leq_? \mathit{CurPot}_N(i_2 ::\mathit{inp}, \mathit{len}))$, and $l_2$ is $\mathit{Output}_N (\mathit{inp}, \mathit{len}) $.

The proof is split in the two sub-cases on the value of $\tau_N \leq_? \mathit{CurPot}_N(i_2 ::\mathit{inp}, \mathit{len})$. If the value is false, the property is verified.

\textbf{Sub-case $\tau_N \leq_? \mathit{CurPot}_N(i_2 ::\mathit{inp},\mathit{len}) = \mathit{true}$}.
By Lemma~\ref{lem:nsteps-props} (1), 
\[
\mathit{CurPot}_N(i_1::i_2::\mathit{inp}, \mathit{len}) \equiv 
\begin{cases}
  potential(w_N, i_1, \mathit{len}) & \text{if } \tau_N \leq \mathit{CurPot}_N(i_2::\mathit{inp}, \mathit{len}), \\
  potential(w_N, i_1,\mathit{len})\ + & \text{otherwise}.\\
\quad lk_N \cdot \mathit{CurPot}_N(i_2::\mathit{inp}, \mathit{len}) & \\
\end{cases}
\]
With the hypothesis of this subcase, we have:\\
$\mathit{CurPot}_N(i_1::i_2::\mathit{inp}, \mathit{len}) \equiv potential(w_N, i_1, \mathit{len})$.\\
By Lemma~\ref{lem:sineurprops} (2), using $\tau_N$ as $m$, we have $\tau_N > \mathit{potential}(w_N, i_1, \mathit{len})$.
Thus $(\tau_N \leq_?\mathit{CurPot}_N(i_1::i_2::\mathit{inp}, \mathit{len})) = (\tau_N \leq_? \mathit{potential}(w_N, i_1, \mathit{len})) = \mathit{false}$. The property is verified.
\end{proof}

\subsubsection{Single-Input General Behaviors}
With Propositions~\ref{lem:delay} and~\ref{lem:filter}, we gain a more generalized understanding of the behavior of an initialized single-input neuron. This property encompasses both previous properties: an initialized single-input neuron either fires with a one-unit time delay or is unable to fire in two consecutive time steps.
\begin{cor}[General Behavior for a Single-Input Neuron]\texttt{[One\_input\_generic]}\\\ \\
$\begin{array}{l}
  \forall (N:\mathit{Neuron}) (\mathit{id}:\mathit{nat}) (\mathit{inp}:\mathit{list}~\mathit{bool}) (\mathit{len}:\mathit{nat}),\\
  \quad \mathit{One\_input}(N, \mathit{id}, \mathit{len}) ~\land ~
  \mathit{is\_initial}_{\mathit{Neur}}(N, \mathit{len}) \rightarrow \\
  \quad\quad\mathit{Output}_N(\mathit{inp}, \mathit{len})=\mathit{inp}~\mathop{++}~[0] ~ \lor\\
   \quad \quad \forall (a_1~a_2 : \mathit{bool}) (l_1~l_2 :\mathit{list}~\mathit{bool}), \\
   \quad\quad \quad \mathit{Output}_N(\mathit{inp}, \mathit{len})=l_1~\mathop{++}~ [a_1; a_2] ~\mathop{++}~ l_2 \rightarrow \\
   \quad\quad\quad\quad a_1 = 0 \lor a_2 = 0.
  \end{array}$
\label{lem:genebehav}
\end{cor}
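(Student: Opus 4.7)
The plan is to observe that Corollary~\ref{lem:genebehav} follows from Propositions~\ref{lem:delay} and~\ref{lem:filter} by a simple case analysis on how the weight $w_N(\mathit{id})$ compares with the firing threshold $\tau_N$. The hypotheses of the two propositions are exactly complementary: Proposition~\ref{lem:delay} assumes $w_N(\mathit{id}) \geq \tau_N$, while Proposition~\ref{lem:filter} assumes $w_N(\mathit{id}) < \tau_N$. Since $\tau_N$ and $w_N(\mathit{id})$ are rationals, the trichotomy on $\mathbb{Q}$ gives that one of these must hold, so the two propositions cover the entire parameter space.

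More concretely, the proof proceeds as follows. First, I would introduce all the universally quantified variables and the two hypotheses $\mathit{One\_input}(N, \mathit{id}, \mathit{len})$ and $\mathit{is\_initial}_{\mathit{Neur}}(N, \mathit{len})$. Then, I would perform case analysis on whether $w_N(\mathit{id}) \geq \tau_N$ or $w_N(\mathit{id}) < \tau_N$; in Coq this is available as \texttt{Qlt\_le\_dec} or a similar decidability lemma on rationals. In the first branch I choose the left disjunct of the goal and apply Proposition~\ref{lem:delay} with the current hypotheses, which yields $\mathit{Output}_N(\mathit{inp}, \mathit{len}) = \mathit{inp} \mathop{++} [0]$ directly. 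In the second branch I choose the right disjunct, introduce the remaining quantifiers $a_1, a_2, l_1, l_2$ and the equational assumption, and then apply Proposition~\ref{lem:filter} under the hypothesis $w_N(\mathit{id}) < \tau_N$, which closes the goal.

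There is essentially no obstacle here beyond invoking the appropriate decidability lemma on $\mathbb{Q}$ to perform the case split; no induction is required, and no further properties of $\mathit{potential}$, $\mathit{CurPot}_N$, or $\mathit{Output}_N$ need to be unfolded, since they are hidden inside the two propositions already proved. The only mild care needed is to make sure the case analysis uses a comparison whose two branches match the precise inequalities $\geq$ and $<$ appearing in the hypotheses of Propositions~\ref{lem:delay} and~\ref{lem:filter}, so that the hypotheses of each proposition are available without further rewriting.
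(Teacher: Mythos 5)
Your proposal is correct and matches the paper's intent exactly: the result is stated as a corollary of Propositions~\ref{lem:delay} and~\ref{lem:filter}, obtained by the dichotomy $w_N(\mathit{id}) \geq \tau_N$ versus $w_N(\mathit{id}) < \tau_N$ on rationals, with each branch discharged by the corresponding proposition. No further commentary is needed.
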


\subsubsection{Property: Spike Decreasing}

The final property for single-input neurons concerns the total number of spikes produced by an initialized single-input neuron, that is, the number of times the neuron fires.
When a single-input neuron receives a $0$ input, its potential does not increase, preventing it from firing.
This property, called the \emph{spike decreasing property}, states that the number of firings is at most equal to the number of $1$s in the input sequence.
We denote the number of occurrences of $1$ in a list $l$ by the notation $\mathit{number\_occ}(l, 1)$.\footnote{This function is \texttt{count\_occ} in Rocq's list library.}

\begin{prop}[Spike Decreasing Behavior for a Single-Input Neuron]\texttt{[Spike\_Decreasing]}\\\ \\
$\begin{array}{l}
  \forall (N:\mathit{Neuron}) (\mathit{id}:\mathit{nat}) (\mathit{inp}:\mathit{list}~\mathit{bool}) (\mathit{len}:\mathit{nat}),\\
  \quad \mathit{One\_input}(N, \mathit{id}, \mathit{len}) ~\land ~
  \mathit{is\_initial}_{\mathit{Neur}}(N, \mathit{len}) \rightarrow \\
  \quad\quad\mathit{number\_occ}(\mathit{Output}_N(\mathit{inp}, \mathit{len}), 1) \leq \mathit{number\_occ}(\mathit{inp}, 1).
  \end{array}$
\label{lem:spike}
\end{prop}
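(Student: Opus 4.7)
The plan is to induct on the input list $\mathit{inp}$ and rely on a key single-input fact: a $0$ input can never cause the neuron to fire. Once that sub-claim is in hand, each firing in $\mathit{Output}_N$ can be charged to a distinct $1$ in $\mathit{inp}$, so the count inequality follows routinely.

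For the base case $\mathit{inp}=[\ ]$, Lemma~\ref{lem:circprops}(3) gives $\mathit{Output}_N([\ ],\mathit{len})=[0]$, which has no $1$s, matching the empty input. For the inductive step on input $i::\mathit{inp}'$, Lemma~\ref{lem:nsteps-props}(2) rewrites the output as
\[
  (\tau_N \leq_? \mathit{CurPot}_N(i::\mathit{inp}',\mathit{len})) :: \mathit{Output}_N(\mathit{inp}',\mathit{len}).
\]
If $i=1$, then $\mathit{number\_occ}(i::\mathit{inp}',1)$ grows by one, the output count grows by at most one, and the induction hypothesis closes the case. If $i=0$, both counts should be preserved, so it suffices to show that the head of the output is $0$, i.e., $\tau_N > \mathit{CurPot}_N(0::\mathit{inp}',\mathit{len})$.

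To prove this sub-claim, I would apply Lemma~\ref{lem:potoi} to obtain $\mathit{potential}(w_N,0,\mathit{len})\equiv 0$, then unfold $\mathit{CurPot}_N$ via Lemma~\ref{lem:nsteps-props}(1) and split on whether the neuron fired at the previous step. If it fired, the new potential equals $0<\tau_N$ (using $\tau_N>0$ from \texttt{PosTau}). If it did not fire, the new potential equals $\mathit{lk}_N\cdot \mathit{CurPot}_N(\mathit{inp}',\mathit{len})$; since $\mathit{lk}_N\in[0,1]$ and $\mathit{CurPot}_N(\mathit{inp}',\mathit{len})<\tau_N$, a short case split on the sign of $\mathit{CurPot}_N(\mathit{inp}',\mathit{len})$ yields the bound: when non-negative, scaling by a factor in $[0,1]$ preserves the strict inequality $< \tau_N$; when negative, the product is at most $0 < \tau_N$.

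The main obstacle I anticipate is exactly this sign split: unlike Propositions~\ref{lem:delay} and~\ref{lem:filter}, the statement does not assume anything about the sign of $w_N(\mathit{id})$ relative to $\tau_N$, so the previous potential may be negative, and the no-fire sub-case must be handled carefully using the range constraints on $\mathit{lk}_N$ together with the positivity of $\tau_N$. Everything else is routine unfolding, induction on the list, and bookkeeping with $\mathit{number\_occ}$ on $\mathit{bool}$ lists.
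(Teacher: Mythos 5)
Your proposal is correct and follows essentially the same route as the paper: induction on the input list, unfolding the output via Lemma~\ref{lem:nsteps-props}(2), and for the $i=0$ case showing the new output bit is $0$ by case-splitting on whether the neuron fired at the previous step and using $0\le \mathit{lk}_N\le 1$ together with $\tau_N>0$. The only cosmetic difference is that you re-derive inline (via Lemma~\ref{lem:potoi} and Lemma~\ref{lem:nsteps-props}(1)) what the paper invokes as Lemma~\ref{lem:curpotoiint}, and you make explicit the sign split on $\mathit{CurPot}_N(\mathit{inp}',\mathit{len})$ that the paper passes over silently.
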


\begin{proof}
The proof proceeds by induction on the structure of the input sequence.\\
\textbf{Base case}: $\mathit{inp}=[\ ]$ (the empty list). \\ Since the input sequence is empty and by the properties on $\mathit{is\_initial}_\mathit{Neur}$ in Lemma~\ref{lem:circprops} (3): $\mathit{number\_occ}(\mathit{Output}_N([\ ], \mathit{len}),1) = \mathit{number\_occ}(\mathit{Output}(N),1) = \mathit{number\_occ}([0],1) = 0 \leq 0 = \mathit{number\_occ}([\ ],1)$. The property is verified for this case.\\
\textbf{Induction case}: We assume that the property holds for $\mathit{inp}$, and we must show that it also holds for $\mathit{i::inp}$, where $i$ is an additional input value.\\
By Lemma \ref{lem:nsteps-props} (2),
$\mathit{Output}_N(i::\mathit{inp}, \mathit{len}) = 
(\tau_N \leq_? \mathit{CurPot}_N(i ::\mathit{inp}, \mathit{len})):: \mathit{Output}_N (\mathit{inp}, \mathit{len}).$
By the induction hypothesis, 
$\mathit{number\_occ}(\mathit{Output}_N(\mathit{inp}, \mathit{len}),1) \leq \mathit{number\_occ}(\mathit{inp},1)$.\\
If we prove that $\mathit{number\_occ}((\tau_N \leq_? \mathit{CurPot}_N(i ::\mathit{inp}, \mathit{len})),1) \leq \mathit{number\_occ}(i,1)$, the property is verified.\\
\textbf{Sub-case 1 : $i = 1$}.\\
$\mathit{number\_occ}(i,1) =1$ so the inequality is verified.\\
\textbf{Sub-case 2 : $i = 0$}.\\
$\mathit{number\_occ}(i,1) =0$.
By Lemma~\ref{lem:curpotoiint} with $i = 0$, we have :\\ $\mathit{CurPot_N}(i::\mathit{inp}, \mathit{len}) \equiv
\begin{cases}
   0 & \mathit{if}~\tau_N \leq \mathit{CurPot}_N(\mathit{inp}, \mathit{len})\\
\mathit{lk}_N \cdot \mathit{CurPot}_N(\mathit{inp}, \mathit{len}) & \mathit{otherwise}.\\
\end{cases}$.\\
If $\tau_N \leq \mathit{CurPot}_N(\mathit{inp},\mathit{len})$, the property is verified.\\
\textbf{Sub-sub-case : $\tau_N > \mathit{CurPot}_N(\mathit{inp},\mathit{len})$}.\\
Since $0 \leq lk_N \leq 1$, we know that $\tau_N > lk_N \cdot \mathit{CurPot}_N(\mathit{inp},\mathit{len})$.
We thus have:\\
$\mathit{number\_occ}((\tau_N \leq_? \mathit{CurPot}_N(i ::\mathit{inp}, \mathit{len})),1)=\\
\mathit{number\_occ}(\tau_N \leq_? \mathit{lk}_N \cdot \mathit{CurPot}_N(\mathit{inp},\mathit{len}),1) = 0$.
The property is verified.
\end{proof}

\section{Properties of Circuits and their Proofs}
\label{sec:circuitprops}

All the properties from Section~\ref{sec:neuronprops} are translatable to the circuit.  
In this section, we do not prove properties specific to circuits, but instead prove an equivalence that allows us to conclude that the properties from the previous section on single neurons hold for every neuron in a circuit.
The main functions for which this equivalence is defined are the \texttt{AfterNstepsNeuron} function in Figure~\ref{fig:updateneuron} and the \texttt{curpot\_neuron} and \texttt{output\_neuron} functions from Figure~\ref{fig:circuitfuns}; the latter two call the \texttt{Nstepscircuit} function from Figure~\ref{fig:circuitinputs} to do the main work.
The properties expressing this equivalence appear in Figure~\ref{fig:Coqcuroutequiv}.
\begin{figure}[htb]
\begin{verbatim}
Lemma curpot_neuron_nstep : 
    forall (nc : Neurocircuit) (inp : list (nat -> bool)) (n : Neuron),
        In n (ListNeuro nc) ->
        curpot_neuron nc inp (Id (Feature n)) == 
        CurPot (AfterNstepsNeuron n (inp_mult inp nc) 
                (length (ListNeuro nc) + SupplInput nc)).
    
Lemma output_neuron_nstep : forall nc inp n,
    forall (nc : Neurocircuit) (inp : list (nat -> bool)) (n : Neuron),
        In n (ListNeuro nc) ->
        output_neuron nc inp (Id (Feature n)) = 
        Output (AfterNstepsNeuron n (inp_mult inp nc) 
                (length (ListNeuro nc) + SupplInput nc)).
\end{verbatim}
\caption{Equivalence between current potential and output definition lemmas in Rocq}
\label{fig:Coqcuroutequiv}
\end{figure}
We state these properties directly in Rocq instead of using our mathematical notation, mainly because they are direct statements about Rocq functions given in the figures just mentioned, but also because they provide an example illustration of the properties expressed directly in Rocq.
The essence of the equivalence is that processing of inputs done by \texttt{AfterNstepsNeuron} is the same as that done by \texttt{Nstepscircuit} for all the neurons in a circuit, modulo some pre-processing of the input.
In particular, the \texttt{inp\_mult} function appearing in the lemmas in the figure (whose definition is omitted)
adjusts the input sequence, originally defined for a neuron within a circuit, to an equivalent input sequence in a context where circuit boundaries are not specified. Indeed, in the case of a circuit, the input sequence must provide the values for inputs originating from external sources for the functions \texttt{curpot\_neuron} and \texttt{output\_neuron}. Internal input sources correspond to the outputs of other neurons within the circuit at the previous unit of time and are thus already precomputed at each time step.
In contrast to the case of an individual neuron treated as a distinct unit, the input sequence for \texttt{AfterNstepsNeuron} includes inputs from every neuron in the environment. Here, there is no distinction between internal and external inputs, as no circuit boundaries are defined. The function call \texttt{(inp\_mult\ inp\ nc)} incorporates inter-neuron inputs within the circuit by using each neuron's output from the previous time step.

We give a few examples of corollaries that follow from the lemmas in Figure~\ref{fig:Coqcuroutequiv}.
First, consider Lemma~\ref{lem:nonnegcurr} stating that for a multiple-input neuron $N$ in an environment of size $\mathit{len}$, if all values of inputs $\mathit{inp}$ are non-negative, then the value of the $\mathit{CurPot}_N(\mathit{inp},\mathit{len})$ is also non-negative.
Recall that $\mathit{CurPot}_N(\mathit{inp},\mathit{len})$ is notation for \texttt{(CurPot (AfterNstepsNeuron N inp len))}.
Using the lemmas in Figure~\ref{fig:Coqcuroutequiv}, the circuit version of this lemma, stated below, follows as a corollary.
\begin{cor}[$\mathit{curpot}_\mathit{NC}$ Always Non-Negative for Multiple-Input Neuron] \texttt{[AlwaysNNegNC]}\\\ \\
$\begin{array}{l}
  \forall (N:\mathit{Neuron})  (\mathit{inp}:\mathit{list} ~ (\mathit{nat}\rightarrow\mathit{bool})) (\mathit{NC}:\mathit{NeuroCircuit}),\\
  \quad\mathit{is\_initial_\mathit{Cir}}(\mathit{NC}) \land
  N\in\mathit{ln}_\mathit{NC} \land{}\\
  \quad(\forall (\mathit{id}:\mathit{nat}), \mathit{id} < \mathit{length}(\mathit{ln}_\mathit{NC})+\mathit{si}_\mathit{NC} \rightarrow w_N(\mathit{id})\geq 0)\rightarrow{}\\
  \quad\quad \mathit{curpot}_\mathit{NC}(N,\mathit{inp})\geq0.
  \end{array}$
\label{cor:nonnegcurr}
\end{cor}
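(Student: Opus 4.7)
The plan is to reduce Corollary 4.11 to Lemma 4.2 via the equivalence lemmas of Figure~\ref{fig:Coqcuroutequiv}. The target quantity $\mathit{curpot}_\mathit{NC}(N,\mathit{inp})$ unfolds, by definition and by \texttt{curpot\_neuron\_nstep}, to $\mathit{CurPot}_N(\mathit{inp\_mult}(\mathit{inp},\mathit{NC}),\mathit{length}(\mathit{ln}_\mathit{NC})+\mathit{si}_\mathit{NC})$. So the first step is to rewrite the goal with this equivalence; since $N \in \mathit{ln}_\mathit{NC}$ by hypothesis, the side condition of the equivalence lemma is satisfied, and we can identify the neuron by its identifier $\mathit{id}_N$ without ambiguity (using Lemma~\ref{lem:circprops}(2) if the identifier-matching bookkeeping needs to be made explicit).

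After this rewriting, the goal becomes exactly an instance of Lemma~\ref{lem:nonnegcurr} applied with environment size $\mathit{len} = \mathit{length}(\mathit{ln}_\mathit{NC}) + \mathit{si}_\mathit{NC}$ and input list $\mathit{inp\_mult}(\mathit{inp},\mathit{NC})$. Lemma~\ref{lem:nonnegcurr} requires two hypotheses: (i) $N$ is initial at the appropriate environment size, and (ii) all weights $w_N(\mathit{id})$ for $\mathit{id} < \mathit{len}$ are non-negative. Hypothesis (ii) is supplied directly by the assumption of Corollary~\ref{cor:nonnegcurr}. Hypothesis (i) follows by unfolding $\mathit{is\_initial}_\mathit{Cir}(\mathit{NC})$: the definition in Figure~\ref{fig:circuit} states that every $N \in \mathit{ln}_\mathit{NC}$ is equivalent (at size $\mathit{length}(\mathit{ln}_\mathit{NC}) + \mathit{si}_\mathit{NC}$) to some $\mathit{SetNeuron}(\mathit{nf})$, which is precisely the definition of $\mathit{is\_initial}_\mathit{Neur}$ at that environment size.

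The only real subtlety, and the step I expect to require most care, is ensuring that the environment size used in all three places agrees: Lemma~\ref{lem:nonnegcurr} is stated for an arbitrary $\mathit{len}$, but when extracted from the circuit the natural environment size is $\mathit{length}(\mathit{ln}_\mathit{NC}) + \mathit{si}_\mathit{NC}$, and the initiality proof for $N$ holds only at that same size. As long as we instantiate $\mathit{len}$ consistently, no further work is needed. Once these pieces are aligned, the corollary is an immediate composition of \texttt{curpot\_neuron\_nstep}, the unfolding of $\mathit{is\_initial}_\mathit{Cir}$, and Lemma~\ref{lem:nonnegcurr}, with no induction required at the circuit level since the heavy lifting has already been done in the single-neuron proof.
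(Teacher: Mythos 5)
Your proposal is correct and matches the paper's approach exactly: the paper states that this corollary ``follows'' from the equivalence lemmas of Figure~\ref{fig:Coqcuroutequiv} (specifically \texttt{curpot\_neuron\_nstep}) applied to Lemma~\ref{lem:nonnegcurr}, which is precisely the reduction you carry out. The paper gives no further detail, and the details you supply --- instantiating $\mathit{len}$ as $\mathit{length}(\mathit{ln}_\mathit{NC})+\mathit{si}_\mathit{NC}$ and extracting $\mathit{is\_initial}_\mathit{Neur}$ from $\mathit{is\_initial}_\mathit{Cir}$ --- are the right ones.
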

\noindent
Recall that $\mathit{curpot}_\mathit{NC}(N,\mathit{inp})$ is notation for \texttt{(curpot\_neuron NC inp (Id (Feature N)))}.

In the case of properties about single-input neurons, the corollaries distinguish between scenarios where the single-input source is another neuron within the circuit or an external source.   
For example, the delayer effect for a single-input neuron is expressed in Proposition~\ref{lem:delay} and the following corollary expresses this property for the two cases for circuits.
\begin{cor}[Delayer Effect for Internal and External Input Sources]\ \\
~~\\[-5pt]
\begin{enumerate}
\item $\forall (N_1~N_2:\mathit{Neuron}) (\mathit{inp}:\mathit{list} ~ (\mathit{nat}\rightarrow\mathit{bool})) (\mathit{NC}:\mathit{NeuroCircuit}),\\
  \qquad\mathit{is\_initial_\mathit{Cir}}(\mathit{NC}) \land
  N_1,N_2\in\mathit{ln}_\mathit{NC} \land{}\\
  \mathit{One\_input}(N_1,\mathit{id}_{N_2},\mathit{length}(\mathit{ln}_\mathit{NC})+\mathit{si}_\mathit{NC})\land
  \tau_{N_1}\le w_{N_1}(\mathit{id}_{N_2})\rightarrow{}\\
  \mathit{output}_\mathit{NC}(N_1,\mathit{inp})=
  \mathit{tl}(\mathit{output}_\mathit{NC}(N_2,\mathit{inp}))~\mathop{++}~[0]$
  \\ \texttt{[One\_Input\_NC\_delay\_Int]}
\item $\forall (N:\mathit{Neuron})~(m:\mathit{nat})~(\mathit{inp}:\mathit{list} ~ (\mathit{nat}\rightarrow\mathit{bool})) (\mathit{NC}:\mathit{NeuroCircuit}),\\
  \mathit{is\_initial_\mathit{Cir}}(\mathit{NC}) \land
  N\in\mathit{ln}_\mathit{NC} \land
  \mathit{length}(\mathit{ln}_\mathit{NC}) \le m < \mathit{length}(\mathit{ln}_\mathit{NC})+\mathit{si}_\mathit{NC} \land{}\\
  \mathit{One\_input}(N,m,\mathit{length}(\mathit{ln}_\mathit{NC})+\mathit{si}_\mathit{NC})\land
  \tau_N\le w_N(m)\rightarrow{}\\
  \mathit{output}_\mathit{NC}(N,\mathit{inp})=
  \mathit{map}~(\mathit{fun}~f\Rightarrow f\ m)~\mathit{inp} ~\mathop{++}~ [0]
  $\\ \texttt{[One\_Input\_NC\_delay\_Ext]}
\end{enumerate}
\label{cor:delay}
\end{cor}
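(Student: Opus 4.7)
The plan is to reduce each part to the Delayer Effect (Proposition~\ref{lem:delay}) via the equivalence lemmas of Figure~\ref{fig:Coqcuroutequiv}. In both parts I first apply \texttt{output\_neuron\_nstep} to rewrite $\mathit{output}_\mathit{NC}(N,\mathit{inp})$ as $\mathit{Output}(\mathit{AfterNstepsNeuron}(N,\mathit{inp\_mult}(\mathit{inp},\mathit{NC}),L))$ where $L=\mathit{length}(\mathit{ln}_\mathit{NC})+\mathit{si}_\mathit{NC}$. Because $\mathit{NC}$ is initial, unfolding the definitions of \texttt{is\_initial} and \texttt{is\_initial\_neuro} in Figure~\ref{fig:circuit} gives $\mathit{is\_initial}_\mathit{Neur}(N,L)$ (resp.\ $\mathit{is\_initial}_\mathit{Neur}(N_1,L)$) for the target neuron, supplying the remaining hypothesis of Proposition~\ref{lem:delay}.

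Proposition~\ref{lem:delay} is stated for inputs in $\mathit{list}~\mathit{bool}$, but its inductive proof depends on each input only through the projection at the single-input identifier $k$: Lemma~\ref{lem:potoi} already collapses the weighted sum to $w_N(k)$ when $(\mathit{inp}~k)=\mathit{true}$ and $0$ otherwise, and the rest of the argument uses only Lemmas~\ref{lem:nsteps-props} and~\ref{lem:sioutcurprops}. The same induction, now over a $\mathit{list}~(\mathit{nat}\to\mathit{bool})$, proves
\[
\mathit{Output}_N(\mathit{Inp},L)\;=\;\mathit{map}~(\mathit{fun}~f\Rightarrow f~k)~\mathit{Inp}~\mathop{++}~[0].
\]
I would either cite the corresponding library lemma or add it as a short auxiliary. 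Specialising $\mathit{Inp}:=\mathit{inp\_mult}(\mathit{inp},\mathit{NC})$ with $k:=m$ or $k:=\mathit{id}_{N_2}$ then reduces each goal to identifying the projection of $\mathit{inp\_mult}$ at the appropriate identifier.

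For part 2, $k=m$ lies in the external range $\mathit{length}(\mathit{ln}_\mathit{NC})\le m<L$, where by construction \texttt{inp\_mult} passes $\mathit{inp}$ through unchanged at argument $m$; hence the projection equals $\mathit{map}~(\mathit{fun}~f\Rightarrow f~m)~\mathit{inp}$ and the result follows immediately. For part 1, $k=\mathit{id}_{N_2}<\mathit{length}(\mathit{ln}_\mathit{NC})$, and \texttt{inp\_mult} at such internal positions is defined so that at each step of the fold it supplies the most recent output of the circuit's neuron with identifier $k$. Hence the projection is the sequence of outputs of $N_2$ in the circuit shifted by one time unit, i.e.\ the tail of $\mathit{Output}(\mathit{AfterNstepsNeuron}(N_2,\mathit{inp\_mult}(\mathit{inp},\mathit{NC}),L))$. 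A second application of \texttt{output\_neuron\_nstep}, this time in the reverse direction for $N_2$, rewrites this as $\mathit{tl}(\mathit{output}_\mathit{NC}(N_2,\mathit{inp}))$, closing part 1.

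The main obstacle is the bookkeeping around \texttt{inp\_mult}, whose definition is deliberately omitted in the excerpt. The proof requires two auxiliary facts: (i) that \texttt{inp\_mult} is pointwise the identity on external identifiers (under projection), and (ii) that on internal identifiers its projection coincides with the tail of the circuit's output list for the corresponding neuron. Fact (ii) is where the time alignment matters; it should be established by its own induction on $\mathit{inp}$, using the definitions of \texttt{Nstepscircuit} and \texttt{NextNeuroInNC} to show that each recursive step records, at the internal position $k$, exactly the output produced by the previous step for the neuron with identifier $k$. Once these two facts are in place, the two parts of the corollary fall out as direct rewrites.
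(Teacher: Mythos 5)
Your proposal is correct and follows essentially the same route the paper takes: the paper presents this statement as a direct corollary of the equivalence lemmas \texttt{output\_neuron\_nstep} (Figure~\ref{fig:Coqcuroutequiv}) combined with the single-neuron Delayer Effect (Proposition~\ref{lem:delay}), which is exactly the reduction you carry out. Your additional bookkeeping about the projection of \texttt{inp\_mult} at external versus internal identifiers, and the generalization of Proposition~\ref{lem:delay} from $\mathit{list}~\mathit{bool}$ to projected $\mathit{list}~(\mathit{nat}\rightarrow\mathit{bool})$ inputs, correctly fills in the details the paper leaves implicit.
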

\noindent
Corollary~\ref{cor:delay} (1) considers the case where the output of a neuron inside the circuit serves as an input to another neuron in the circuit.  
In particular, the output of $N_2$ serves as input to $N_1$, which means $N_1$ has a delay of one time unit before producing the output generated by $N_2$; this is expressed by taking the tail of the output of $N_2$ and adding an additional $0$ at the end. 
Part (2) expresses the case where an external input to the circuit is connected directly to neuron $N$.
In this case, the output of $N$ is the same as the input with an additional $0$ at time $0$.

\section{Properties of Archetypes and their Proofs}
\label{sec:archetypeprops}
In each of the following subsections, we present representative properties for the majority of archetypes shown in Figure~\ref{fig:archetypes}, focusing on how the choice of weights and inputs affects the output. 
For \texttt{Series} and \texttt{ParallelComposition}, we examine the delayer effect on the neurons within these circuits. 
For \texttt{PositiveLoop}, we analyze the impact on the output of different input sequences coming from external sources.
Lastly, for \texttt{NegativeLoop} and \texttt{ContraInhib}, we focus on generating specific patterns in the output sequence by using external input sources set to $\mathit{true}$ and varying the weights of individual neurons.
The delayer effect (Corollary~\ref{cor:delay}) plays a significant role in the following properties since many neurons occurring in archetypes are single-input neurons.

\subsection{Simple Series with and without Multiple Outputs}
\label{subsec:serie}
Recall that the archetypes of simple series and series with multiple outputs are both represented by the record \texttt{Series}. 
As we have seen previously, if the weight of a single-input neuron reaches the threshold for the source of input, the input sequence is delayed by one time unit and given as output.  
In this section, we study the delayer effect of a series and consider the output of every neuron in these circuits.
Recall that by definition of \texttt{Series}, in a series of length $n$, the identifiers of the neurons will be in the range $0,\ldots,n-1$ and the output of each neuron (except the last) with identifier $\mathit{id}$ will serve as input to neuron $\mathit{id}+1$.
Specifically, in the delayer effect for a series, each neuron having identifier $\mathit{id}$ delays the external input sequence by $\mathit{id}+1$ time units if the non-null weight of each neuron reaches the threshold.
Here, if the series has three neurons, for example, the neuron with identifier $2$ has $3$ units of delay and thus 
the full input sequence will not appear as output; it will be truncated.
If we consider the example input sequence $011010111$ (in backward order as usual), the output of the third neuron is $1010111000$, and with the shorter input sequence $01$, the output is $000$.
As expressed in Proposition~\ref{lem:seriedelay} below, the truncation process either completely removes the input sequence $\mathit{inp}$—if the neuron's identifier exceeds the length of the sequence—resulting in an output sequence of $\mathit{length}(\mathit{inp}) + 1$ zeros, or partially removes a number of elements from the beginning of the input sequence equal to the neuron's identifier.

We simplify the type of the external input sequence in this subsection (and in~\ref{subsec:paracompo},~\ref{subsec:posloop}, and~\ref{subsec:negloop}), again using $\mathit{list}~\mathit{bool}$ instead of $\mathit{list}~(\mathit{nat}\rightarrow\mathit{bool})$, since the properties in these sections are about archetypes that have only one external source of input.
Going forward, we denote the neuron in an archetype having the identifier $\mathit{id}$ as $N_\mathit{id}$, 
and if a neuron $N_\mathit{id}$ has an external input source, we use the notation $ext_\mathit{id}$ to represent this external neuron.
For example, in a series of length $n$, the identifier of the external source is $n$ and it is connected to $N_0$, so $\mathit{ext}_0$ represents $N_n$.
The notation $\mathit{repeat}(v, n)$ denotes a list of $n$ elements where each element is $v$.
The notation $l[i:~]$, where $l$ is a list and $i$ is a natural number, denotes the sublist of $l$ starting at position $i$, assuming the first position is $0$.\footnote{The \texttt{repeat} function is in Rocq's list library. The $[~:~]$ function is also in the list library, called \texttt{afternlist}.}
\begin{prop}[Delayer Effect in Series] \texttt{[Series\_Delayer\_Effect]}\\\ \\
$\begin{array}{l}
  \forall (\mathit{NC}: \mathit{NeuroCircuit}) (N:\mathit{Neuron}) (\mathit{inp}:\mathit{list}~\mathit{bool}),\\
  \quad \mathit{Series}(\mathit{NC}) ~ \land ~ N \in \mathit{ln}_\mathit{NC} ~\land ~
  \mathit{is\_initial}_\mathit{Cir}(\mathit{NC}) ~\land ~ w_{N_{0}}(\mathit{ext}_0) \geq \tau_{N_{0}}~\land ~ \\
  \quad(\forall (i: \mathit{nat}), i + 1 < \mathit{length}({\mathit{ln}_\mathit{NC}}) \rightarrow w_{N_{i+1}}(i) \geq \tau_{N_{i+1}})\rightarrow \\
  \quad\mathit{output}_\mathit{NC}(N, \mathit{inp}) = 
  \begin{cases}
  \mathit{inp}[\mathit{id}_N:~]~\mathop{++}~\mathit{repeat}(0, \mathit{id}_N+1) & \mathit{if}~\mathit{id}_N \leq \mathit{length} (\mathit{inp}) \\
 \mathit{repeat}(0,\mathit{length}(\mathit{inp})+1)& \mathit{otherwise}.
  \end{cases}
  \end{array}$
\label{lem:seriedelay}
\end{prop}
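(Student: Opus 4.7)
The plan is to prove the statement by induction on the identifier $\mathit{id}_N$, exploiting the fact that a series is a chain of single-input neurons where each $N_{i+1}$ reads the output of its immediate predecessor $N_i$. First I would use Lemma~\ref{lem:circprops}~(1) to turn the universal quantification over $N \in \mathit{ln}_\mathit{NC}$ into an induction on the identifier $i$, ranging over $0,\ldots,\mathit{length}(\mathit{ln}_\mathit{NC})-1$, and rely on Lemma~\ref{lem:circprops}~(2) to identify any two neurons sharing an identifier. The engine of the induction is Corollary~\ref{cor:delay}: I would invoke the external case~(2) at the base and the internal case~(1) at each step.

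For the base case $i = 0$, the \texttt{FirstNeuronS} field of \texttt{Series} together with \texttt{OneSupplementS} guarantees that $N_0$ is a single-input neuron whose sole source is the external neuron $\mathit{ext}_0$, with a positive weight; combined with the extra hypothesis $w_{N_0}(\mathit{ext}_0) \geq \tau_{N_0}$, Corollary~\ref{cor:delay}~(2) yields $\mathit{output}_\mathit{NC}(N_0, \mathit{inp}) = \mathit{inp} \mathop{++} [0]$, which is exactly $\mathit{inp}[0:~] \mathop{++} \mathit{repeat}(0, 1)$. For the inductive step, \texttt{UniqueWeightS} makes $N_{i+1}$ a single-input neuron whose only source is $N_i$, and the weight hypothesis supplies $\tau_{N_{i+1}} \leq w_{N_{i+1}}(\mathit{id}_{N_i})$, so Corollary~\ref{cor:delay}~(1) yields $\mathit{output}_\mathit{NC}(N_{i+1}, \mathit{inp}) = \mathit{tl}(\mathit{output}_\mathit{NC}(N_i, \mathit{inp})) \mathop{++} [0]$.

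It then remains to simplify this expression using the induction hypothesis, which requires a case split on the position of $i$ relative to $\mathit{length}(\mathit{inp})$. When $i+1 \leq \mathit{length}(\mathit{inp})$, the IH output begins with the non-empty prefix $\mathit{inp}[i:~]$, whose tail is $\mathit{inp}[i+1:~]$; appending $[0]$ to the trailing $\mathit{repeat}(0, i+1)$ produces $\mathit{inp}[i+1:~] \mathop{++} \mathit{repeat}(0, i+2)$, which is the first branch for $N_{i+1}$. When $i = \mathit{length}(\mathit{inp})$, the IH reduces to $\mathit{repeat}(0, i+1)$, whose tail concatenated with $[0]$ is again $\mathit{repeat}(0, i+1) = \mathit{repeat}(0, \mathit{length}(\mathit{inp})+1)$, the second branch. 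When $i > \mathit{length}(\mathit{inp})$, the IH already equals $\mathit{repeat}(0, \mathit{length}(\mathit{inp})+1)$, on which tail-then-append-$[0]$ acts as the identity.

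I expect the main obstacle to be the administrative translation of the \texttt{FirstNeuronS}/\texttt{UniqueWeightS} constraints into the \texttt{One\_input} predicate demanded by Corollary~\ref{cor:delay}: the archetype record phrases weights as null for identifiers strictly below $\mathit{length}(\mathit{ln}_\mathit{NC})$ (respectively $\mathit{length}(\mathit{ln}_\mathit{NC}) + 1$), while \texttt{One\_input} speaks of the full environment of size $\mathit{length}(\mathit{ln}_\mathit{NC}) + \mathit{si}_\mathit{NC}$, so I would need \texttt{OneSupplementS} and Lemma~\ref{lem:circprops}~(1) to line the two up. The list-level boundary case $i = \mathit{length}(\mathit{inp})$, where the piecewise formula switches branches, is the most delicate arithmetic step but should be routine once the case split is in place.
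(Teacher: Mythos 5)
Your proposal is correct and follows essentially the same route as the paper's proof: induction on the neuron identifier, with Corollary~\ref{cor:delay}~(2) for $N_0$ and Corollary~\ref{cor:delay}~(1) for the step, followed by the same three-way case split on the position of the identifier relative to $\mathit{length}(\mathit{inp})$. The only cosmetic difference is that the paper handles the empty-input case separately in the inductive step and writes the tail as $\mathit{output}_\mathit{NC}(N_{\mathit{id}},\mathit{inp})[1:\ ]$, whereas your general case analysis absorbs that case; your closing remarks about reconciling the \texttt{Series} record constraints with \texttt{One\_input} match the bookkeeping the paper leaves implicit.
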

\begin{proof}
Note that the delayer effect property (Corollary~\ref{cor:delay})
applies to each neuron in the series $\mathit{NC}$, as all inter-neuron weights and those from the external source exceed their respective thresholds and each neuron are initial single-input neurons.
We also remark that the proof is for a specific external input sequence $\mathit{inp}$ and thus for a specific time
$t_\mathit{NC}$, which is the time step after all input has been processed.
Recall that constraint \texttt{TimeNeuro} on circuits ensures that $t_\mathit{NC}$ is one less that the length of the output of all of the neurons in the circuit.
We proceed by induction on the identifier number of each neuron.

\noindent
\textbf{Base case: Neuron $N_0$ with identifier $0$}.
We apply the delayer effect property (Corollary~\ref{cor:delay}) to the first neuron of the series: $output_{\mathit{NC}}(N_0, inp) = inp ~\mathop{++}~ [0]$. Since $0 \leq length(inp)$, the property is verified.

\noindent
\textbf{Induction case}: We assume that the property holds for the neuron $N_{id}$ whose identifier is $\mathit{id}$, and we must show that it also holds for the neuron $N_{id+1}$ whose identifier is $\mathit{id} +1$.

If the time $t_\mathit{NC}$ is equal to $0$ ($\mathit{inp} =[\ ]$), the output of $N_{id + 1}$ is $[0]$ since $N_{id + 1}$ is an initial neuron and thus the property is verified.

We suppose now that $t_\mathit{NC} \neq 0$.
The input sequence for neuron $N_{id+1}$ at time $t_\mathit{NC}$ is $output_{\mathit{NC}}(N_\mathit{id}, inp)[1:\ ]$. The last element of the output list of neuron $N_\mathit{id}$ is not processed by neuron $N_{\mathit{id}+1}$. Indeed, this element is produced at time $t_\mathit{NC}$ by $N_\mathit{id}$ and cannot be provided at the same time to $N_{id+1}$.
We apply the delayer effect property (Corollary~\ref{cor:delay}) to neuron $N_{id+1}$:

$\begin{array}{l}
output_{\mathit{NC}}(N_{id+1}, inp) = output_{\mathit{NC}}(N_{id}, inp)[1:\ ] ~\mathop{++}~ [0].
\end{array}$

\noindent
By the induction hypothesis:

$\mathit{output}_\mathit{NC}(N_{id}, \mathit{inp}) = 
  \begin{cases}
  \mathit{inp}[\mathit{id}:~]~\mathop{++}~\mathit{repeat}(0, \mathit{id}+1) & \mathit{if}~\mathit{id} \leq \mathit{length} (\mathit{inp}) \\
 \mathit{repeat}(0,\mathit{length}(\mathit{inp})+1)& \mathit{otherwise}.
  \end{cases}$

\noindent
If $\mathit{id} + 1\leq \mathit{length} (\mathit{inp})$, then we have:

$\begin{array}{l}
(\mathit{inp}[\mathit{id}:~]~\mathop{++}~\mathit{repeat}(0, \mathit{id}+1))[1:\ ] ~\mathop{++}~ [0] = {}\\
\quad\mathit{inp}[\mathit{id} +1:~]~\mathop{++}~\mathit{repeat}(0, \mathit{id}+1) ~\mathop{++}~ [0] = {}\\
\quad\quad\mathit{inp}[\mathit{id} +1:~]~\mathop{++}~\mathit{repeat}(0, \mathit{id}+2).
\end{array}$

\noindent
If $\mathit{id} = \mathit{length} (\mathit{inp})$, then we have:

$\begin{array}{l}(\mathit{inp}[\mathit{id}:~]~\mathop{++}~\mathit{repeat}(0, \mathit{id}+1))[1:\ ] ~\mathop{++}~ [0] = {}\\
\quad(\mathit{inp}[\mathit{length}(\mathit{inp}):~]~\mathop{++}~\mathit{repeat}(0, \mathit{length}(\mathit{inp}) +1))[1:\ ] ~\mathop{++}~ [0] = {}\\
\quad\quad[\ ]~\mathop{++}~\mathit{repeat}(0, \mathit{length}(\mathit{inp})+1)[1:\ ]~\mathop{++}~[0] = {}\\
\quad\quad\quad\mathit{repeat}(0, \mathit{length}(\mathit{inp})+1).
\end{array}$

\noindent
Thus, we have:

$\mathit{output}_\mathit{NC}(N_{id+1}, \mathit{inp}) = 
  \begin{cases}
  \mathit{inp}[\mathit{id+1}:~]~\mathop{++}~\mathit{repeat}(0, \mathit{id}+2) & \mathit{if}~\mathit{id} + 1 \leq \mathit{length} (\mathit{inp}) \\
 \mathit{repeat}(0,\mathit{length}(\mathit{inp})+1)& \mathit{otherwise}.
  \end{cases}$

\noindent  
Note that the case when $\mathit{id} > \mathit{length} (\mathit{inp})$ falls into the ``otherwise'' case.  Thus, the property is verified.
\end{proof}

\subsection{Parallel Composition} 
\label{subsec:paracompo}
As mentioned, the property we consider about parallel composition also concerns the delayer effect. Unlike in the series, the delayer effect in parallel composition impacts the circuit's outputs as follows: the first neuron delays the input sequence by one time unit, while the other neurons introduce a delay of two time units.
\begin{prop}[Delayer Effect in Parallel Composition] 
~~\\[-5pt]
\begin{enumerate}
\item $\forall (\mathit{NC}: \mathit{NeuroCircuit}) (N:\mathit{Neuron}) (\mathit{inp}:\mathit{list}~\mathit{bool}),\\
  \mathit{ParallelComposition}(\mathit{NC}) ~\land ~ N \in \mathit{ln}_\mathit{NC} ~\land ~
  \mathit{is\_initial}_\mathit{Cir}\mathit{(NC)} ~\land ~ \mathit{id}_N = 0 ~\land\\ ~ w_{N_{0}}(\mathit{ext}_0) \geq \tau_{N_{0}}~\rightarrow~
  \mathit{output}_\mathit{NC}\mathit{(N, inp)} = inp ~\mathop{++}~ [0].$\\
  \texttt{[ParalComp\_Delayer\_Effect\_0]}
\item $\forall (\mathit{NC}: \mathit{NeuroCircuit}) (N:\mathit{Neuron}) (\mathit{inp}:\mathit{list}~\mathit{bool}),\\
  \quad \mathit{ParallelComposition}(\mathit{NC}) ~ \land ~ N \in \mathit{ln}_\mathit{NC} ~\land ~
  \mathit{is\_initial}_\mathit{Cir}\mathit{(NC)} ~\land ~ ~ \mathit{id}_N \neq 0 ~ \land\\
  \quad w_{N_{0}}(\mathit{ext}_0) \geq \tau_{N_{0}}~\land 
  w_{N}(0) \geq \tau_{N}\rightarrow\\
   \mathit{output}_\mathit{NC}\mathit{(N, inp)} =
  \begin{cases}
  inp[1:~]~\mathop{++}~[0;0] & \mathit{if}\ 0 < len (inp)\\
  [0]& \mathit{otherwise}\\
  \end{cases}$\\
  \texttt{[ParalComp\_Delayer\_Effect\_Succ]}
  \end{enumerate}
\label{prop:paralcomp}
\end{prop}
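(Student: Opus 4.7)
The plan is to derive both parts as direct applications of the single-input delayer effect established in Corollary~\ref{cor:delay}, exploiting the fact that every neuron in a parallel composition is a single-input neuron: $N_0$ receives input only from $\mathit{ext}_0$, while every other neuron in the circuit receives input only from $N_0$. These two single-input facts fall out by unfolding the \texttt{FirstNeuronPC} and \texttt{N1NmPC} constraints of the \texttt{ParallelComposition} record, which zero out all weights except the ones explicitly named.

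For part (1), I would apply Corollary~\ref{cor:delay} (2) to $N_0$ taking $m = \mathit{ext}_0 = \mathit{length}(\mathit{ln}_{\mathit{NC}})$. The hypotheses $\mathit{is\_initial}_{\mathit{Cir}}(\mathit{NC})$, $N_0 \in \mathit{ln}_{\mathit{NC}}$, and $\tau_{N_0} \le w_{N_0}(\mathit{ext}_0)$ are given directly, and the $\mathit{One\_input}(N_0, \mathit{ext}_0, \mathit{length}(\mathit{ln}_{\mathit{NC}})+\mathit{si}_{\mathit{NC}})$ condition follows from \texttt{FirstNeuronPC}. After projecting the list of input functions onto its $\mathit{ext}_0$ component — which is precisely the identification behind the type simplification to $\mathit{list}~\mathit{bool}$ — the corollary yields $\mathit{output}_{\mathit{NC}}(N_0, \mathit{inp}) = \mathit{inp} \mathop{++} [0]$.

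For part (2), I would apply Corollary~\ref{cor:delay} (1) with $N_1 := N$ and $N_2 := N_0$. The constraint \texttt{N1NmPC} combined with \texttt{FirstNeuronPC} gives that $N$ is a single-input neuron whose sole input source is $N_0$ (whose identifier is $0$), and the hypothesis $w_N(0) \ge \tau_N$ supplies the threshold condition. The corollary then gives $\mathit{output}_{\mathit{NC}}(N, \mathit{inp}) = \mathit{tl}(\mathit{output}_{\mathit{NC}}(N_0, \mathit{inp})) \mathop{++} [0]$. Substituting the result of part (1) reduces this to $\mathit{tl}(\mathit{inp} \mathop{++} [0]) \mathop{++} [0]$, and a case split on $\mathit{inp}$ finishes the argument: if $\mathit{inp} = [\ ]$ then $\mathit{tl}([0]) = [\ ]$ gives $[0]$; if $\mathit{inp} = h :: t$ then $\mathit{tl}(\mathit{inp} \mathop{++} [0]) = t \mathop{++} [0] = \mathit{inp}[1:~] \mathop{++} [0]$, producing $\mathit{inp}[1:~] \mathop{++} [0;0]$.

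I expect the main friction to be administrative rather than mathematical: discharging the $\mathit{One\_input}$ side conditions from the archetype constraints (in particular, verifying that no other internal identifier contributes a nonzero weight and that the external identifier lies within the environment size $\mathit{length}(\mathit{ln}_{\mathit{NC}}) + \mathit{si}_{\mathit{NC}}$), and reconciling the abstract list-of-functions input signature of Corollary~\ref{cor:delay} with the simplified $\mathit{list}~\mathit{bool}$ convention adopted in this subsection. The residual list manipulations — $\mathit{tl}$ distributing over $\mathop{++}$ on nonempty lists, and the definition of $[1:~]$ — are routine and directly available from Coq's list library.
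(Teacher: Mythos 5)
Your proposal is correct and follows essentially the same route as the paper: both parts are obtained by instantiating the two cases of Corollary~\ref{cor:delay} (external source for $N_0$, internal source $N_0$ for every other neuron), with the \texttt{FirstNeuronPC} and \texttt{N1NmPC} constraints discharging the $\mathit{One\_input}$ side conditions and the final list manipulation on $\mathit{tl}(\mathit{inp}\mathop{++}[0])\mathop{++}[0]$ handled by a case split on $\mathit{inp}$. The paper phrases part (2) as the degenerate step of an induction on identifiers mirroring the Series proof, but the content is identical to your direct two-case argument.
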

\begin{proof}
The reasoning for the archetype \texttt{ParallelComposition} is similar to that for the archetype \texttt{Series}. The proof is by induction on the identifier numbers of neurons. The difference occurs for neurons other than the first one. We take such a neuron $N_{id}$ with identifier $id$ distinct from $0$.
If time $t_\mathit{NC}$ is equal to $0$ ($\mathit{inp} =[\ ]$), the output of $N_{id}$ is $[0]$ since $N_{id}$ is an initial neuron, and thus the property is verified.

If time $t_\mathit{NC} \neq 0$, the input sequence for neuron $N_{id}$ at time $t_\mathit{NC}$ is the output list of the first neuron without the last element which corresponds to $\mathit{inp}[1:\ ]~\mathop{++}~[0]$.
When we apply the delayer effect theorem (Corollary~\ref{cor:delay}), we obtain a delay of 2 units of time: $\mathit{inp}[1:\ ]~\mathop{++}~[0;0]$.
\end{proof}

\subsection{Positive Loop}
\label{subsec:posloop}
A positive loop introduces positive feedback into the circuit. Throughout this subsection, we assume that all non-zero neuron weights exceed their respective thresholds. We analyze various types of input sequences and their effects on the outputs of each neuron within the circuit.

The first property focuses on the output behavior of neurons in a positive loop when the external inputs are consistently $0$s. In this scenario, the circuit cannot generate positive feedback and consequently, none of the neurons fire at any time. Indeed, there are no $1$-inputs that increase the potential of any neuron enough to reach its threshold.
Recall that the identifier of the external source of input in this archetype is $2$. 

\begin{prop}[Output Behaviors with an Input Sequence of 0s in a Positive Loop]\ \\ 
\texttt{[PL\_Amplifier\_input\_false]}\\\ \\
$\begin{array}{l}
  \forall (\mathit{NC}: \mathit{NeuroCircuit}) (\mathit{inp}:\mathit{list}~\mathit{bool}),\\
  \quad \mathit{PositiveLoop}(\mathit{NC}) ~ \land  ~
  \mathit{is\_initial}_\mathit{Cir}\mathit{(NC)} ~\land ~ (\forall (b: \mathit{bool}), b \in \mathit{inp} \rightarrow b = 0)~\land \\
  \quad w_{N_{0}}(2) \geq \tau_{N_{0}}~\land ~ w_{N_{0}}(1) \geq \tau_{N_{0}}~\land ~ w_{N_{1}}(0) \geq \tau_{N_{1}}~\rightarrow \\
  \quad\quad \mathit{output}_\mathit{NC}(N_{0}, \mathit{inp}) = \mathit{repeat}(0, \mathit{length} (\mathit{inp})+1) ~ \land ~\\
  \quad\quad\mathit{output}_\mathit{NC}(N_{1}, \mathit{inp}) = \mathit{repeat}(0, \mathit{length} (\mathit{inp})+1).
  \end{array}$
\label{lem:posloop1}
\end{prop}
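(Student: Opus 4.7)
The plan is to prove this by induction on $\mathit{inp}$, but to strengthen the invariant so the induction closes. Since the LI\&F update of the potential at each step depends on both the weighted sum of inputs and the previous potential, the bare assertion about outputs is not strong enough on its own; I would therefore prove by induction the conjunction that, for every $\mathit{inp}$ consisting only of $0$s, we simultaneously have $\mathit{curpot}_\mathit{NC}(N_0,\mathit{inp}) \equiv 0$, $\mathit{curpot}_\mathit{NC}(N_1,\mathit{inp}) \equiv 0$, and both output lists equal $\mathit{repeat}(0, \mathit{length}(\mathit{inp}) + 1)$.

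For the base case $\mathit{inp} = [\,]$, Lemma~\ref{lem:nstepscorrect} rewrites $\mathit{curpot}_\mathit{NC}(N_k,[\,])$ and $\mathit{output}_\mathit{NC}(N_k,[\,])$ as $\mathit{CurPot}(N_k)$ and $\mathit{Output}(N_k)$ respectively, and Lemma~\ref{lem:circprops}(3) applied via $\mathit{is\_initial}_\mathit{Cir}(\mathit{NC})$ then gives $\mathit{CurPot}(N_k) \equiv 0$ and $\mathit{Output}(N_k) = [0]$ for each of the two neurons of the circuit (whose existence is guaranteed by the \texttt{ListNeuroLengthPL} constraint). This matches $\mathit{repeat}(0,1)$ and closes the base case.

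For the inductive step with $i :: \mathit{inp}$, where the hypothesis forces $i = 0$, I would transfer to the per-neuron \texttt{AfterNstepsNeuron} view via the equivalences of Figure~\ref{fig:Coqcuroutequiv} and unfold the new potentials using Lemma~\ref{lem:nsteps-props}(1). The weighted sum of inputs to $N_0$ at this step involves the head outputs of $N_0$ and $N_1$ (both $0$ by the IH), the external value (also $0$), and the self-weight $w_{N_0}(0) = 0$ enforced by \texttt{WId}; it therefore evaluates to $0$. Combined with the IH $\mathit{CurPot}(N_0) \equiv 0 < \tau_{N_0}$, the new potential computes to $0 + \mathit{lk}_{N_0} \cdot 0 \equiv 0$, and by Lemma~\ref{lem:nsteps-props}(2) together with $\tau_{N_0} > 0$, a $0$ is prepended to the output. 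The symmetric argument for $N_1$ additionally appeals to the constraint \texttt{SecondNeuroPL} which gives $w_{N_1}(2) = 0$, ensuring that the external value does not enter $N_1$'s sum even though $i$ is still evaluated there.

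The main obstacle is the mutual dependence created by the positive feedback loop: neither neuron can be analysed in isolation because each consumes the other's current output. This is resolved precisely by strengthening the invariant to a joint claim over both neurons and over both the potential and the output, so that the $0$-valued feedback edges at each step are justified inductively rather than assumed. A minor technical nuisance will be the bookkeeping needed to move between the circuit-level $\mathit{output}_\mathit{NC}$, $\mathit{curpot}_\mathit{NC}$ interface used in the statement and the per-neuron \texttt{AfterNstepsNeuron} interface in which Lemma~\ref{lem:nsteps-props} is phrased; this is handled uniformly via the equivalences of Figure~\ref{fig:Coqcuroutequiv}.
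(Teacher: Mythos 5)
Your proposal is correct and follows essentially the same route as the paper: induction on the input list, with the key invariant that the current potential of both neurons remains $0$ at every step (the paper states this implicitly as ``the potential of each neuron always stays the same and is equal to $0$ at every time unit''), so the threshold is never reached and each step prepends a $0$ to both outputs. Your explicit strengthening of the induction hypothesis to a joint claim about both potentials and both outputs is exactly what makes the paper's informal argument go through in Coq; the only superfluous detail is the appeal to $w_{N_1}(2)=0$, which is not needed since a $\mathit{false}$ input contributes nothing to the weighted sum regardless of its weight.
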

\begin{proof}
The proof is by induction on the length of the input. If $t_\mathit{NC}$ is 0 ($\mathit{inp} = [\ ]$), the current potential of both neurons is null. Otherwise, when the external input sequence is only zeros, the potential of each neuron always stays the same and is equal to $0$ at every time unit. Thus, the threshold of each neuron is never reached, and thus at each time unit, the output is $0$.
\end{proof}

The second property, which we call the \emph{amplifier property}, focuses on input sequences that begin with zeros, followed by two consecutive ones, and may include additional inputs afterward. The pair of consecutive ones activates the positive loop, causing both neurons to enter a state of perpetual firing, regardless of any subsequent external inputs.

\begin{prop}[Output Behaviors with Input Containing Two Successive $1$s in a Positive Loop]\texttt{[PL\_Amplifier\_input\_2\_true]}\\\ \\
$\begin{array}{l}
  \forall (\mathit{NC}: \mathit{NeuroCircuit}) (\mathit{inp}_1 \mathit{inp}_2:\mathit{list}~\mathit{bool}),\\
  \quad \mathit{PositiveLoop}(\mathit{NC}) ~\land ~
  \mathit{is\_initial}_\mathit{Cir}(\mathit{C}) ~\land ~ (\forall (b: bool), b \in \mathit{inp}_2 \rightarrow b = 0)~\land \\
  \quad w_{N_{0}}(2) \geq \tau_{N_{0}}~\land ~ w_{N_{0}}(1) \geq \tau_{N_{0}}~\land ~ w_{N_{1}}(0) \geq \tau_{N_{1}}~\rightarrow \\
  \quad\quad \mathit{output}_\mathit{NC}(N_{0}, \mathit{inp}_1\mathop{++}[1;1]\mathop{++}\mathit{inp}_2) = \\
  \quad\quad\quad\quad \mathit{repeat}(1, \mathit{length} (\mathit{inp}_1)+2)~\mathop{++}~\mathit{repeat}(0, \mathit{length} (\mathit{inp}_2)+1) ~ \land ~\\
  \quad\quad\mathit{output}_\mathit{NC}(N_{1}, \mathit{inp}_1\mathop{++}[1;1]\mathop{++}\mathit{inp}_2) = \\
  \quad\quad\quad\quad \mathit{repeat}(1, \mathit{length} (\mathit{inp}_1)+1)~\mathop{++}~repeat(0, \mathit{length} (\mathit{inp}_2)+2).
  \end{array}$
\label{prop:posloop2}
\end{prop}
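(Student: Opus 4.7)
The plan is to decompose the input sequence $\mathit{inp}_1 \mathop{++} [1;1] \mathop{++} \mathit{inp}_2$ into three phases and analyze each in turn, using the fact that $\mathit{Nstepscircuit}$ distributes over list concatenation (processing rightmost sublists first, since the tail of the list represents older times). Since lists are stored with the head as the most recent time, processing proceeds in the natural temporal order: $\mathit{inp}_2$ first, then $[1;1]$, then $\mathit{inp}_1$. The three segments of each output list will be produced by the three corresponding phases: a silent phase with $\mathit{inp}_2$ contributing the $\mathit{length}(\mathit{inp}_2)+1$ trailing zeros (including the initial time), a triggering phase in which $[1;1]$ activates the loop, and an amplifier phase in which $\mathit{inp}_1$ contributes a block of ones of the appropriate length.

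For the silent phase, I would first prove a strengthening of Proposition~\ref{lem:posloop1}: starting from $\mathit{is\_initial}_\mathit{Cir}(\mathit{NC})$ and processing any all-zero sequence leaves both $\mathit{curpot}$ values equal to $0$ and each output list of the form $\mathit{repeat}(0, \mathit{length}(\mathit{inp}_2)+1)$. This follows by induction on $\mathit{inp}_2$ using Lemma~\ref{lem:nsteps-props}(1): with external input $0$, previous potential $0$, and previous neighbour output $0$, the new potential is $0 + \mathit{lk}_N \cdot 0 = 0$, so neither neuron fires and neither potential grows. The key consequence is that the circuit at the end of phase 1 is effectively back at a zero-potential state, so phase 2 can be reduced to starting from such a state.

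For the triggering phase, I would compute the effect of $[1;1]$ by two explicit applications of the single-step update (Lemma~\ref{lem:nsteps-props} combined with the equivalences of Figure~\ref{fig:Coqcuroutequiv} to move between the circuit and single-neuron views). After the first $1$, $N_0$ sees $N_1$'s previous output $0$ and external input $1$, so its new potential is $w_{N_0}(2) \geq \tau_{N_0}$ and it fires, while $N_1$ sees $N_0$'s previous output $0$ and stays silent. After the second $1$, $N_0$'s previous firing causes a reset, but its new potential is again $w_{N_0}(2) \geq \tau_{N_0}$ (since $N_1$ was still silent at the intermediate step), so it fires once more; now $N_1$ sees $N_0$'s previous output $1$, giving $\mathit{CurPot} = w_{N_1}(0) \geq \tau_{N_1}$, so $N_1$ fires as well. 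The decisive outcome is that both neurons end phase 2 with $\mathit{CurPot}$ above their respective thresholds.

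For the amplifier phase, I would prove a separate invariant lemma: for any positive-loop circuit state in which both $N_0$ and $N_1$ have $\mathit{CurPot} \geq \tau_{N_0}, \tau_{N_1}$ respectively, processing any input list $L$ keeps both neurons firing throughout, prepends $\mathit{length}(L)$ ones to each output list, and preserves the invariant. Induction on $L$ supplies the step: the IH gives that both neurons are above threshold and hence both currently output $1$, so the next update for $N_0$ yields $w_{N_0}(1) \cdot 1 + w_{N_0}(2) \cdot \mathit{ext} \geq w_{N_0}(1) \geq \tau_{N_0}$ using positivity of $w_{N_0}(2)$ from $\mathtt{FirstNeuronPL}$, while $N_1$'s update yields $w_{N_1}(0) \cdot 1 = w_{N_1}(0) \geq \tau_{N_1}$. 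Chaining the three phases and aligning the resulting output lists then produces exactly the stated equalities. The main obstacle will be this amplifier invariant: although intuitively evident, it requires tracking a joint two-neuron invariant through induction while bridging the circuit-level $\mathit{output}_\mathit{NC}$ view and the neuron-level $\mathit{NextNeuron}$ updates via the equivalences in Figure~\ref{fig:Coqcuroutequiv}, because the two neurons read each other's previous outputs synchronously through $\mathit{NextNeuroInNC}$, and the reasoning must be careful not to confuse outputs computed at the same circuit step with outputs from the preceding one.
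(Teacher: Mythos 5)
Your proposal is correct, and the computations in each phase check out against the stated output shapes, but it is organized rather differently from the paper's proof. The paper performs a single induction on the structure of $\mathit{inp}_1$ applied directly to the final statement: the base case handles $[1;1]\mathop{++}\mathit{inp}_2$ by citing Proposition~\ref{lem:posloop1} for the all-zero suffix and then arguing the two trigger steps explicitly, and the induction step shows $N_0$ keeps firing because the head of $N_1$'s output is $1$ by the induction hypothesis; crucially, $N_1$ is never analyzed via its potential at all --- its output is obtained in both cases from the delayer-effect corollary (Corollary~\ref{cor:delay}(1)), i.e.\ as $\mathit{tl}(\mathit{output}_\mathit{NC}(N_0,\cdot))\mathop{++}[0]$. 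You instead decompose the run into three phases via distributivity of $\mathit{Nstepscircuit}$ over concatenation and prove a standalone amplifier invariant tracking \emph{both} neurons' $\mathit{CurPot}$ values jointly. Your route buys a reusable invariant lemma (``once both neurons are above threshold, the loop fires forever regardless of external input'') that isolates the amplification phenomenon cleanly, at the cost of needing a strengthened version of Proposition~\ref{lem:posloop1} that also tracks the potentials, a lemma that $\mathit{PositiveLoop}$ is preserved by $\mathit{NextStepC}$, and the state-threading bookkeeping you already flag; the paper's route avoids all of that by letting the delayer effect do the work for $N_1$ and by folding the trigger into the base case, which is why its induction hypothesis can be exactly the statement being proved. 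Neither approach has a gap; yours is somewhat heavier to formalize but more modular.
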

\begin{proof}
The neuron with identifier $1$ is an initialized single-input neuron, with the weight of its sole input source meeting the threshold $\tau_{N_1}$. These properties fulfill the requirements to apply the filtering effect theorem.
The proof proceeds by induction on the structure of the input sequence $\mathit{inp}_1$.

\noindent
\textbf{Base case}: $\mathit{inp}_1 = [\ ]$.

\noindent
\textbf{Neuron $N_0$}: By Proposition~\ref{lem:posloop1}, $\mathit{output}_\mathit{NC}(N_{0}, \mathit{inp}_2) = repeat(0, \mathit{length} (\mathit{inp}_2)+1)$.\\ 
We remark that if the external input at a time unit is $1$, the current potential of $N_0$ reaches its threshold and $N_0$ fires since $w_{N_{0}}(2) \geq \tau_{N_{0}}$ and $w_{N_{0}}(1) \geq 0$.
Thus, we have:

$\begin{array}{l}
\mathit{output}_\mathit{NC}(N_{0}, 
[1;1]~ \mathop{++}~\mathit{inp}_2) = 
 [1;1] ~\mathop{++}~ repeat(0, \mathit{length} (\mathit{inp}_2)+1).
\end{array}$

\noindent
\textbf{Neuron $N_1$}: By the delayer effect theorem (Corollary~\ref{cor:delay}):

$\begin{array}{l}
\mathit{output}_\mathit{NC}(N_{1}, [1;1] ~\mathop{++}~\mathit{inp}_2) = [1] ~\mathop{++}~ repeat(0, \mathit{length} (\mathit{inp}_2)+2).
\end{array}$

\noindent
\textbf{Induction case}: We assume that the property holds for $\mathit{inp_1}$, and we must show that it also holds for $\mathit{i::inp_1}$, where $i$ is an additional input value.
By the induction hypothesis:

$\begin{array}{l}
\mathit{output}_\mathit{NC}(N_{0}, \mathit{inp}_1 ~\mathop{++}~[1;1]~\mathop{++}~\mathit{inp}_2) = {}\\
\quad\mathit{repeat}(1, \mathit{length} (\mathit{inp}_1)+2)~\mathop{++}~\mathit{repeat}(0, \mathit{length} (\mathit{inp}_2)+1) \mbox{ and } \\
\mathit{output}_\mathit{NC}(N_{1}, \mathit{inp}_1 ~\mathop{++}~[1;1]~\mathop{++}~\mathit{inp}_2) = {} \\
\quad\mathit{repeat}(1, \mathit{length} (\mathit{inp}_1)+1)~\mathop{++}~\mathit{repeat}(0, \mathit{length} (\mathit{inp}_2)+2).
\end{array}$

\noindent
\textbf{Neuron $N_0$}: Since $w_{N_{0}}(2) \geq 0$, $w_{N_{0}}(1) \geq \tau_{N_{0}}$, and the last output of $N_1$ is $1$ after processing the input sequence $\mathit{inp}_1~\mathop{++}~[1;1]~\mathop{++}~\mathit{inp}_2$, we can conclude that the current potential of $N_0$ reaches its threshold and the neuron fires. We have: 

$\begin{array}{l}
\mathit{output}_\mathit{NC}(N_{0}, i::\mathit{inp}_1 ~\mathop{++}~[1;1]~\mathop{++}~\mathit{inp}_2) = {}\\
\quad\mathit{repeat}(1, \mathit{length} (\mathit{inp}_1)+3)~\mathop{++}~\mathit{repeat}(0, \mathit{length} (\mathit{inp}_2)+1).
\end{array}$

\noindent
\textbf{Neuron $N_1$}: By the delayer effect theorem (Corollary~\ref{cor:delay}), we have:

$\begin{array}{l}
\mathit{output}_\mathit{NC}(N_{1}, i::\mathit{inp}_1 ~\mathop{++}~[1;1]~\mathop{++}~\mathit{inp}_2) = {}\\
\quad\mathit{repeat}(1, \mathit{length} (\mathit{inp}_1)+2)~\mathop{++}~\mathit{repeat}(0, \mathit{length} (\mathit{inp}_2)+2).
\end{array}$
\end{proof}

The third property is an \emph{oscillation property} that occurs with an input sequence consisting entirely of zeros, except for a single $1$.
In this case, when the input $1$ is processed, the first neuron fires while the second does not.
In the following time unit, the second neuron fires while the first does not.
This firing alternates between the two neurons with each subsequent time unit.

We define $\mathit{repeat\_pattern}(l, n)$ as a function that creates a list of $n$ elements by repeatedly cycling through the elements of $l$ in reverse order.
The first element of $l$ becomes the last in the new list, the second becomes the second-to-last, and so on, until the list reaches $n$ elements.\footnote{This function is called \texttt{repeat\_seq} in our Rocq code.}
For example, $\mathit{repeat\_pattern} ([1;0],3) = [1;0;1]$ and $\mathit{repeat\_pattern} ([1;0],4) = [0;1;0;1]$.

\begin{prop}[Output Behaviors with Input Containing One $1$ in a Positive Loop]~\texttt{[PL\_Oscillation\_1\_true]}\\\ \\
$\begin{array}{l}
  \forall (\mathit{NC}: \mathit{NeuroCircuit}) (\mathit{inp}_1 \mathit{inp}_2:\mathit{list}~\mathit{bool}),\\
  \quad \mathit{PositiveLoop}(\mathit{NC}) ~ \land  ~
  \mathit{is\_initial}_\mathit{Cir}(\mathit{NC}) ~\land ~
  (\forall (b: \mathit{bool}), b \in \mathit{inp}_1 ~\mathop{++}~ \mathit{inp}_2 \rightarrow b = 0)~\land \\
  \quad w_{N_{0}}(2) \geq \tau_{N_{0}}~\land ~ w_{N_{0}}(1) \geq \tau_{N_{0}}~\land ~ w_{N_{1}}(0) \geq \tau_{N_{1}}~\rightarrow \\
  \quad\quad \mathit{output}_\mathit{NC}(N_{0}, \mathit{inp}_1~\mathop{++}~[1]~\mathop{++}~\mathit{inp}_2) = \\
  \quad\quad\quad\quad \mathit{repeat\_pattern}([1;0], \mathit{length} (\mathit{inp}_1)+1)~\mathop{++}~\mathit{repeat}(0, \mathit{length} (\mathit{inp}_2)+1) ~ \land ~\\
  \quad\quad\mathit{output}_\mathit{NC}(N_{1}, \mathit{inp}_1~\mathop{++}~[1]~\mathop{++}~\mathit{inp}_2) = \\
  \quad\quad\quad\quad \mathit{repeat\_pattern}([1;0], \mathit{length} (\mathit{inp}_1))~\mathop{++}~\mathit{repeat}(0, \mathit{length} (\mathit{inp}_2)+2).
  \end{array}$
\label{prop:posloop3}
\end{prop}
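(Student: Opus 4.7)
The plan is to decompose processing of the input $\mathit{inp}_1 \mathop{++} [1] \mathop{++} \mathit{inp}_2$ into three successive phases and handle each in turn. For the first phase (processing $\mathit{inp}_2$ alone), I would invoke Proposition~\ref{lem:posloop1}, strengthened slightly (by the same induction used to prove it) to also record that $\mathit{curpot}_\mathit{NC}(N_0, \mathit{inp}_2) \equiv 0$ and $\mathit{curpot}_\mathit{NC}(N_1, \mathit{inp}_2) \equiv 0$. Feeding the single $1$ on top of this state and unfolding one step via Lemma~\ref{lem:nsteps-props} then shows that $N_0$ sees $1$ from $\mathit{ext}_0$ and $0$ from $N_1$'s previous output, so its $\mathit{CurPot}$ becomes $w_{N_0}(2) \geq \tau_{N_0}$ and it fires, while $N_1$ sees $0$ from $N_0$'s previous output and therefore keeps $\mathit{CurPot} \equiv 0$ and does not fire.

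The core of the argument is an induction on the length $k$ of $\mathit{inp}_1$ with a strengthened invariant that tracks outputs and current potentials together, since the one-step update for each neuron consults the other neuron's most recent output. The auxiliary claim to prove is: after feeding $k$ additional $0$-inputs on top of the state reached at the end of the second phase, if $k$ is even then the head of $\mathit{Output}(N_0)$ is $1$, $\mathit{CurPot}(N_0)$ equals $w_{N_0}(2)$ when $k=0$ and $w_{N_0}(1)$ when $k>0$, while the head of $\mathit{Output}(N_1)$ is $0$ and $\mathit{CurPot}(N_1) \equiv 0$; dually, if $k$ is odd then the head of $\mathit{Output}(N_0)$ is $0$ with $\mathit{CurPot}(N_0) \equiv 0$, and the head of $\mathit{Output}(N_1)$ is $1$ with $\mathit{CurPot}(N_1) \equiv w_{N_1}(0)$. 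The inductive step unfolds both $\mathit{CurPot}$ updates one time unit using Lemma~\ref{lem:nsteps-props}(1) and simplifies: the neuron that just fired has its accumulated potential cleared, so the fresh potential depends only on whether its sole incoming active edge is $1$, which by the alternation is exactly the edge that drives it above threshold on this step.

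Specialising $k$ to $\mathit{length}(\mathit{inp}_1)$ and concatenating the alternating prefix with the outputs produced by the first two phases gives two lists that must now be rewritten in the closed form $\mathit{repeat\_pattern}([1;0], \cdot) \mathop{++} \mathit{repeat}(0, \cdot)$; this is a uniform syntactic identity about $\mathit{repeat\_pattern}$ that cleanly covers both parities.

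The main obstacle will be the bookkeeping inherent in the coupling between the two neurons: updating $N_0$ at step $k+1$ consults $N_1$'s output at step $k$, and vice versa, so the invariants for the two neurons cannot be proved independently and must be carried together, case-split on the parity of $k$. I expect the parity case analysis inside the inductive step to be the most tedious part, even though each individual sub-case reduces to a routine rational calculation using the threshold inequalities and the observation that the leak-factor term vanishes exactly when the other neuron has just fired.
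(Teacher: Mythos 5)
Your proposal is correct, and its backbone --- induction on $\mathit{inp}_1$ after first disposing of $\mathit{inp}_2$ via Proposition~\ref{lem:posloop1} and then of the single $1$ by a one-step unfolding --- matches the paper's proof. The genuine difference is in how the two neurons' coupling is managed. You carry a joint invariant recording, for each parity of $k$, both the head of each neuron's output \emph{and} each neuron's current potential, and you must prove the two halves simultaneously. The paper instead never tracks $N_1$'s potential at all: since $N_1$ is an initial single-input neuron whose sole weight $w_{N_1}(0)$ reaches $\tau_{N_1}$, the delayer-effect corollary (Corollary~\ref{cor:delay}(1)) immediately gives $\mathit{output}_\mathit{NC}(N_1,\cdot)$ as the tail of $\mathit{output}_\mathit{NC}(N_0,\cdot)$ appended with $[0]$, so the whole induction only has to reason about $N_0$, reading off ``what $N_1$ output last step'' from the induction hypothesis on the full output lists rather than from a separate potential invariant. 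Your parity case split on $k$ and the paper's case split on the last output of $N_0$ are of course equivalent. What your route buys is self-containedness and explicit control of the potentials (your observation that $\mathit{CurPot}(N_0)$ is $w_{N_0}(2)$ at $k=0$ but $w_{N_0}(1)$ for even $k>0$ is accurate and is exactly the kind of detail the Coq development must discharge); what the paper's route buys is a substantially lighter invariant, at the cost of depending on the previously established delayer-effect machinery. Both will require the same final rewriting step into the $\mathit{repeat\_pattern}$ closed form.
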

\begin{proof}
The proof proceeds by induction on the structure of the input sequence $\mathit{inp}_1$.

\noindent
\textbf{Base case}: $\mathit{inp}_1 = [\ ]$.

\noindent
\textbf{Neuron $N_0$}: By Proposition~\ref{lem:posloop1}, $\mathit{output}_\mathit{NC}(N_{0}, \mathit{inp}_2) = repeat(0, \mathit{length} (\mathit{inp}_2)+1)$.\\
We remark that if the external input at a time unit is $1$, the current potential of $N_0$ reaches its threshold and $N_0$ fires since $w_{N_{0}}(2) \geq \tau_{N_{0}}$ and $w_{N_{0}}(1) \geq 0$. Thus, we have:

$\begin{array}{l}
\mathit{output}_\mathit{NC}(N_{0}, [1] \mathop{++}\mathit{inp}_2) = [1] \mathop{++} repeat(0, \mathit{length} (\mathit{inp}_2)+1).
\end{array}$

\noindent
\textbf{Neuron $N_1$}: By the delayer effect theorem (Corollary~\ref{cor:delay}):

$\begin{array}{l}
\mathit{output}_\mathit{NC}(N_{1}, [1] \mathop{++}\mathit{inp}_2) = repeat(0, \mathit{length} (\mathit{inp}_2)+2).
\end{array}$

\noindent
\textbf{Induction case}: We assume that the property holds for $\mathit{inp_1}$, and we must show that it also holds for $\mathit{i::inp_1}$, where $i$ is an additional input value.
By the induction hypothesis:
$\begin{array}{l}
\mathit{output}_\mathit{NC}(N_{0}, \mathit{inp}_1 ~\mathop{++}~[1]~\mathop{++}~\mathit{inp}_2) = {}\\
\quad\mathit{repeat\_pattern}([1;0], \mathit{length} (\mathit{inp}_1)+1)~\mathop{++}~\mathit{repeat}(0, \mathit{length} (\mathit{inp}_2)+1) \mbox{ and } \\\mathit{output}_\mathit{NC}(N_{1}, \mathit{inp}_1 ~\mathop{++}~[1]~\mathop{++}~\mathit{inp}_2) = {} \\
\quad\mathit{repeat\_pattern}([1;0], \mathit{length} (\mathit{inp}_1))~\mathop{++}~\mathit{repeat}(0, \mathit{length} (\mathit{inp}_2)+2).
\end{array}$

\noindent
\textbf{Neuron $N_0$}:

\noindent
\textbf{Sub-case 1}:
\textit{the last output of $N_0$ is $1$ for the input sequence $\mathit{inp}_1\mathop{++}[1]\mathop{++}\mathit{inp}_2$}.\\
In this case, the last output of $N_1$ is $0$ by the induction hypothesis. Since both $N_1$ and the external input source provide $0$ as input and the fact that $N_0$ fires at the previous time unit, the current potential is $0$. Thus, $N_0$ does not fire and the new output is $0$.

\noindent\textbf{Sub-case 2}:
\textit{the last output of $N_0$ is $0$ for the input sequence $\mathit{inp}_1\mathop{++}[1]\mathop{++}\mathit{inp}_2$}.\\
The last output of $N_1$ is $1$. Thus, since $w_{N_{0}}(1) \geq \tau_{N_{0}}$ and $w_{N_{0}}(2) \geq 0$, the current potential reaches the threshold, $N_0$ fires and the new output is $1$.
We deduce that:

$\begin{array}{l}
\mathit{output}_\mathit{NC}(N_{0}, i::\mathit{inp}_1 ~\mathop{++}~[1]~\mathop{++}~\mathit{inp}_2) = {} \\
\quad\mathit{repeat\_pattern}([1;0], \mathit{length} (i::\mathit{inp}_1)+1)~\mathop{++}~\mathit{repeat}(0, \mathit{length} (\mathit{inp}_2)+1).
\end{array}$

\noindent
\textbf{Neuron $N_1$}: By the delayer effect theorem (Corollary~\ref{cor:delay}):

$\begin{array}{l}
\mathit{output}_\mathit{NC}(N_{1}, i::\mathit{inp}_1 ~\mathop{++}~[1]~\mathop{++}~\mathit{inp}_2) = {}\\
\quad\mathit{repeat\_pattern}([1;0], \mathit{length} (i::\mathit{inp}_1))~\mathop{++}~\mathit{repeat}(0, \mathit{length} (\mathit{inp}_2)+2).
\end{array}$
\end{proof}
\subsection{Negative Loop}
\label{subsec:negloop}
The properties discussed in this section pertain to the \texttt{NegativeLoop} archetype. 
A negative loop prevents repetitive firing.
This archetype combines the interplay of positive and negative influences.
The properties we consider here assume a consistent pattern of only $1$s as input, and examine under what conditions the output will exhibit
a repeating pattern of two zeros followed by two ones.

To initiate firing, the weight of $N_0$ for the external input source, $w_{N_0}(2)$, must reach the threshold; otherwise, the circuit remains inactive. Similarly, the weight $w_{N_1}(0)$ must reach the threshold for $N_1$ to fire at time 2, as $N_1$ only receives input from $N_0$, and it takes at least 2 time units for $N_1$ to activate.
We provide two distinct hypotheses under which these properties hold.

The first hypothesis involves a simple constraint on the weights, which appears on the fourth line in the statement below. Specifically, we assume that $w_{N_0}(1)$ is the opposite of $w_{N_0}(2)$. This constraint cancels out the positive potential created by the external input source when $N_1$ fires at a previous time step.  

\begin{prop}[Output Behaviors with Input Containing $1$s in a Negative Loop (a)]~ \texttt{[NL\_Output\_Oscil\_case1]}\\\ \\
$\begin{array}{l}
  \forall (\mathit{NC}: \mathit{NeuroCircuit}) (\mathit{inp}:\mathit{list}~\mathit{bool}),\\
  \quad \mathit{NegativeLoop}(\mathit{NC}) ~ \land  ~
  \mathit{is\_initial}_\mathit{Cir}(\mathit{NC}) ~\land ~ (\forall (b: \mathit{bool}), b \in \mathit{inp} \rightarrow b = 1)~\land \\
  \quad w_{N_{0}}(2) \geq \tau_{N_{0}}~\land ~ w_{N_{1}}(0) \geq \tau_{N_{1}} ~\land {}\\
  \quad w_{N_{0}}(1) = - w_{N_{0}}(2)~\rightarrow \\
  \quad\quad \mathit{output}_\mathit{NC}(N_{0}, \mathit{inp}) = \mathit{repeat\_pattern}([0;1;1;0], \mathit{length} (\mathit{inp})+1) ~ \land ~\\
  \quad\quad\mathit{output}_\mathit{NC}(N_{1}, \mathit{inp}) = \mathit{repeat\_pattern}([0;0;1;1], \mathit{length} (\mathit{inp})+1).
  \end{array}$
\label{prop:negloop1}
\end{prop}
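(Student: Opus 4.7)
The plan is to proceed by induction on $n = \mathit{length}(\mathit{inp})$, exploiting the fact that the target patterns for $N_0$ and $N_1$ are cyclic of period $4$. The claim for $N_1$ can be reduced to the claim for $N_0$: since $N_1$ is a single-input neuron fed exclusively by $N_0$ with $w_{N_1}(0) \geq \tau_{N_1}$ (by the definition of \texttt{NegativeLoop} together with the \texttt{WId} constraint that kills self-loops and the zero weight to the external source), the internal delayer effect of Corollary~\ref{cor:delay}(1) gives $\mathit{output}_\mathit{NC}(N_1,\mathit{inp}) = \mathit{tl}(\mathit{output}_\mathit{NC}(N_0,\mathit{inp})) \mathop{++} [0]$. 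What remains on the $N_1$ side is then the routine list-level identity $\mathit{tl}(\mathit{repeat\_pattern}([0;1;1;0],n+1)) \mathop{++} [0] = \mathit{repeat\_pattern}([0;0;1;1],n+1)$, provable by a small sub-induction on $n$.

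For the core claim about $N_0$, I would strengthen the induction hypothesis so that it records, at each time $n$, both the output list and the current potential: the potential equals $0$ when $n \equiv 0$ or $n \equiv 3 \pmod 4$, and equals $w_{N_0}(2)$ when $n \equiv 1$ or $n \equiv 2 \pmod 4$, while the head of the output matches the entry of $[0;1;1;0]$ indexed by $n \bmod 4$. In the inductive step I unfold $\mathit{CurPot}_{N_0}$ via Lemma~\ref{lem:nsteps-props}(1) and branch on whether the previous potential crossed $\tau_{N_0}$. The previous output of $N_1$, which contributes through the weight $w_{N_0}(1)$, is read off from $N_0$'s earlier output via the same delayer shift; thus the new weighted input of $N_0$ is either $w_{N_0}(2)$ (when $N_1$ was silent) or $w_{N_0}(2) + w_{N_0}(1) = 0$ (when $N_1$ fired, using the cancellation hypothesis). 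A comparison with $\tau_{N_0}$ then yields the next output, and a short calculation confirms that the new potential agrees with the cycle.

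The main obstacle is the careful accounting across the four residue classes, in particular the two ``silencing'' transitions. At $n \equiv 2 \to n \equiv 3$, $N_0$ has just fired, so its potential resets to the fresh weighted sum, which---thanks to $w_{N_0}(1) = -w_{N_0}(2)$---is exactly $0$; this is what silences $N_0$. At $n \equiv 3 \to n \equiv 0$, $N_0$ did not fire, but its previous potential is $0$, so the leak branch $\mathit{lk}_{N_0} \cdot 0$ combined with the still-cancelling weighted input keeps it at $0$ and below threshold, while $N_1$ in turn sees a $0$ from $N_0$ and so also stays silent. The other two transitions ($n \equiv 0 \to 1$ and $n \equiv 1 \to 2$) are comparatively straightforward applications of $w_{N_0}(2) \geq \tau_{N_0}$. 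The base case $n = 0$ is immediate from $\mathit{is\_initial}_\mathit{Cir}$ by Lemma~\ref{lem:circprops}(3), giving both outputs $[0]$ and matching $\mathit{repeat\_pattern}$ of length $1$.
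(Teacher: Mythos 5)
Your proposal is correct and follows the same overall skeleton as the paper's proof: induction on the input list, the delayer effect for $N_1$, and a mod-$4$ case split for $N_0$ driven by the cancellation $w_{N_0}(1) = -w_{N_0}(2)$. The differences are worth noting. First, you factor $N_1$ out once and for all via Corollary~\ref{cor:delay}(1) plus a standalone list identity on $\mathit{repeat\_pattern}$, whereas the paper re-applies the delayer shift inside each inductive step; both are fine, and your version cleanly separates the dynamics from the list bookkeeping. Second, and more substantively, you strengthen the induction hypothesis to carry the value of $\mathit{CurPot}_{N_0}$ ($0$ at times $\equiv 0,3$ and $w_{N_0}(2)$ at times $\equiv 1,2 \pmod 4$). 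The paper's prose instead argues that the potential is ``always non-negative'' and that firing resets it to $0$ by cancellation; this covers the transitions into times $\equiv 1$, $\equiv 2$, and $\equiv 3$, but at the transition from time $\equiv 3$ to time $\equiv 0$ the neuron did \emph{not} fire at the previous step, so the new potential is the (zero) weighted sum plus $\mathit{lk}_{N_0}$ times the residual potential, and one genuinely needs to know that residual is exactly $0$ rather than merely non-negative. Your strengthened invariant supplies precisely this fact, which the paper's informal argument leaves implicit (and which the Coq development must track in some form). So your route is a slightly more careful rendering of the same proof, at the cost of a bulkier induction statement.
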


\begin{proof}
The neuron with identifier $1$ is an initialized single-input neuron, with the weight of its sole input source meeting the threshold $\tau_{N_1}$. These properties fulfill the requirements to apply the filtering effect theorem.
The proof proceeds by induction on the structure of the input sequence $\mathit{inp}$.

\noindent
\textbf{Base case}: $\mathit{inp} = [\ ]$.\\ 
The output lists of both neuron are $[0]$ by design of our model.

\noindent
\textbf{Induction case}: We assume that the property holds for $\mathit{inp}$, and we must show that it also holds for $\mathit{i::inp}$, where $i$ is an additional input value.
By the induction hypothesis:

$\begin{array}{l}
\mathit{output}_\mathit{NC}(N_{0}, \mathit{inp}) = \mathit{repeat\_pattern}([0;1;1;0], \mathit{length} (\mathit{inp})+1) \mbox{ and }\\
\mathit{output}_\mathit{NC}(N_{1}, \mathit{inp}) = \mathit{repeat\_pattern}([0;0;1;1], \mathit{length} (\mathit{inp})+1).
\end{array}$

\noindent
\textbf{Neuron $N_0$}:
We remark that the current potential of $N_0$ is always non-negative as the external input provides $1$ at any time and any negative weight is compensated by the weight $w_{N_0}(2)$.
By the design of the circuit and the input sequence, $N_0$ fires only if the input from the external source is $1$ and from $N_1$ is $0$ at the previous time step.
To verify this statement, we only need to consider whether $N_1$ fired or not in the previous step, as the external input is always $1$.
If both $N_1$ and the external input source provide a $1$ as input, the output of $N_0$ at the previous step is $1$ by the induction hypothesis, and thus $N_0$ has fired at the previous step. The current potential of $N_0$ is equivalent to $0$ since $ w_{N_{0}}(1) = - w_{N_{0}}(2)$ and $N_0$ does not fire at the current time.
If $N_1$ does not fire at the previous step, the current potential of $N_0$ reaches the threshold $\tau_{N_0}$ since the weight $w_{N_0}(2)$ is greater than $\tau_{N_0}$ and any residual potential from previous steps are greater than or equal at 0. As a result, $N_0$ fires.

\noindent
\textbf{Sub-case 1:} \textit{$\mathit{length}(\mathit{inp})$ modulo $4$ is $0$ or $1$}:

\noindent
The output of $N_1$ is $0$ at time $\mathit{length}(\mathit{inp})$. (At that time, the input sequence processed is $\mathit{inp}$.) Thus, $N_0$ fires at time $\mathit{length}(\mathit{inp})+1$. Thus the property is verified.

\noindent
\textbf{Sub-case 2:} \textit{$\mathit{length}(\mathit{inp})$ modulo $4$ is $2$ or $3$:}

\noindent
The output of $N_1$ is $1$ at time $\mathit{length}(\mathit{inp})$. $N_0$ does not fire at time $\mathit{length}(\mathit{inp})+1$ and the property is verified.

\noindent
\textbf{Neuron $N_1$}: By the delayer effect theorem (Corollary~\ref{cor:delay}):

$\begin{array}{l}
\mathit{output}_\mathit{NC}(N_{1}, i::\mathit{inp}) = \mathit{repeat\_pattern}([0;1;1;0], \mathit{length} (\mathit{inp})+2)[1:\ ]:: [0] = {} \\
\quad\mathit{repeat\_pattern}([0;1;1;0], \mathit{length} (\mathit{inp})+1):: [0]= {}\\
\quad\quad\mathit{repeat\_pattern}([0;0;1;1], \mathit{length} (\mathit{inp})+2)
\end{array}$
\end{proof}

The second property is based on more complex hypotheses regarding the weights. To ensure that $N_0$ does not fire during two successive time steps when $N_1$ fires after having been inactive the previous step, we impose the condition:
$(1 + \mathit{lk}_{N_0})\cdot(w_{N_{0}}(1) + w_{N_{0}}(2)) < \tau_{N_0}$.
The left-hand side of the strict inequality corresponds to the current potential of neuron $N_0$ in the case where it did not fire in the previous time step, but did fire in the one before that.
In addition, $N_1$ must have fired at both time steps.
This condition prevents $N_0$ from firing during the specified time step. Additionally, it also prevents firing if both $N_0$ and $N_1$ fired in the previous step, since $1 \leq 1 + \mathit{lk}_{N_0}$.

We further add the hypothesis $w_{N_0}(1) + w_{N_0}(2) \geq 0$. This ensures that the current potential is never negative. As a result, $N_0$ does not need to compensate for a negative potential to fire.  Both hypotheses appear on the fourth line in the theorem statement below.

\begin{prop}[Output Behaviors with Input Containing $1$s in a Negative Loop (b)]~ \texttt{[NL\_Output\_Oscil\_case2]}\\\ \\
$\begin{array}{l}
  \forall (\mathit{NC}: \mathit{NeuroCircuit}) (\mathit{inp}:\mathit{list}~\mathit{bool}),\\
  \quad \mathit{NegativeLoop}(\mathit{NC}) ~ \land ~ 
  \mathit{is\_initial}_\mathit{Cir}(\mathit{NC}) ~\land ~ (\forall (b: \mathit{bool}), b \in \mathit{inp} \rightarrow b = 1)~\land \\
  \quad  w_{N_{0}}(2) \geq \tau_{N_{0}}~ \land ~ w_{N_{1}}(0) \geq \tau_{N_{1}}~ \land {}\\
  \quad  w_{N_{0}}(1) + w_{N_{0}}(2) \geq 0 ~\land~
  (1 + \mathit{lk}_{N_0})\cdot(w_{N_{0}}(1) + w_{N_{0}}(2)) < \tau_{N_0} ~\rightarrow \\
  \quad\quad \mathit{output}_\mathit{NC}(N_{0}, \mathit{inp}) = \mathit{repeat\_pattern}([0;1;1;0], \mathit{length} (\mathit{inp})+1) ~ \land ~\\
  \quad\quad\mathit{output}_\mathit{NC}(N_{1}, \mathit{inp}) = \mathit{repeat\_pattern}([0;0;1;1], \mathit{length} (\mathit{inp})+1).
  \end{array}$
\label{prop:negloop2}
\end{prop}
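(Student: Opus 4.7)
The plan is to proceed by induction on $\mathit{inp}$, paralleling the proof of Proposition~\ref{prop:negloop1} but with a strengthened invariant that also tracks the current potential of $N_0$. The behavior of $N_1$ will follow directly from the delayer effect (Corollary~\ref{cor:delay}(1)) applied to $N_0$, since $N_1$ is an initial single-input neuron whose only input comes from $N_0$ with $w_{N_1}(0) \geq \tau_{N_1}$. Thus the essential work is to establish $\mathit{output}_\mathit{NC}(N_0,\mathit{inp}) = \mathit{repeat\_pattern}([0;1;1;0],\mathit{length}(\mathit{inp})+1)$; the pattern for $N_1$ is then obtained by threading the $\mathit{tl}(\cdot) \mathop{++} [0]$ operation through the four-periodic sequence.

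Unlike case~(a), here $w_{N_0}(1) + w_{N_0}(2)$ need not cancel to zero, so residual potentials accumulate across consecutive non-firing steps and must be tracked explicitly. I would strengthen the induction hypothesis so that at each time $T = \mathit{length}(\mathit{inp})$, in addition to the output pattern, $\mathit{CurPot}_{N_0}$ takes a specific value indexed by $T \bmod 4$: namely $w_{N_0}(2)$, $w_{N_0}(1) + w_{N_0}(2)$, $(1 + \mathit{lk}_{N_0})(w_{N_0}(1) + w_{N_0}(2))$, and $w_{N_0}(2) + \mathit{lk}_{N_0}\cdot(1 + \mathit{lk}_{N_0})(w_{N_0}(1) + w_{N_0}(2))$ for $T \bmod 4 \in \{2,3,0,1\}$ respectively, with the understanding that the startup values at $T = 0, 1$ (namely $0$ and $w_{N_0}(2)$) are handled as separate base computations and are washed out by the reset that occurs at $T = 2$.

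In the inductive step, I would unfold $\mathit{CurPot}_{N_0}(i::\mathit{inp},\mathit{len})$ using Lemma~\ref{lem:nsteps-props}(1) and compute the new output via Lemma~\ref{lem:nsteps-props}(2). The input to $N_0$ coming from $N_1$ at the previous time step is read off from the delayer-effect equation applied to the prior outputs of $N_0$, so it is known explicitly in each of the four subcases of $T \bmod 4$. The fourth hypothesis $(1 + \mathit{lk}_{N_0})(w_{N_0}(1) + w_{N_0}(2)) < \tau_{N_0}$ directly blocks firing in the $T \bmod 4 = 0$ subcase, and combined with $\mathit{lk}_{N_0} \geq 0$ and $w_{N_0}(1) + w_{N_0}(2) \geq 0$ it also yields $w_{N_0}(1) + w_{N_0}(2) < \tau_{N_0}$, which blocks firing in the $T \bmod 4 = 3$ subcase. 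The assumption $w_{N_0}(2) \geq \tau_{N_0}$, together with non-negativity of the residual term, forces firing in the other two subcases.

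The main obstacle is the bookkeeping of the four-cycle: in each subcase one must verify not only the next output but also that the threshold-check in Definition~\ref{def.snn} selects the correct branch (reset versus leaky accumulation), so that the new potential lands in the predicted slot. The most delicate point is the $T \bmod 4 = 1$ subcase, where the residual $\mathit{lk}_{N_0} \cdot (1 + \mathit{lk}_{N_0})(w_{N_0}(1) + w_{N_0}(2))$ must be carried precisely and then shown to be harmlessly discarded by the reset at $T+1$ when $N_0$ fires. Once the invariant is established for $T \geq 2$, the output conclusion for $N_0$ follows mechanically from the four-way dispatch, and the conclusion for $N_1$ drops out by a single application of Corollary~\ref{cor:delay}(1).
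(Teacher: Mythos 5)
Your proposal is correct and follows essentially the same route as the paper's proof: induction on the input list with a case split on $\mathit{length}(\mathit{inp}) \bmod 4$ for $N_0$ and a single application of the delayer effect for $N_1$, using the hypotheses $w_{N_0}(2)\geq\tau_{N_0}$, $w_{N_0}(1)+w_{N_0}(2)\geq 0$, and $(1+\mathit{lk}_{N_0})(w_{N_0}(1)+w_{N_0}(2))<\tau_{N_0}$ in exactly the same places. The only difference is cosmetic: you carry the four cyclic values of $\mathit{CurPot}_{N_0}$ explicitly in a strengthened induction hypothesis, whereas the paper recovers those same values on the fly from the output pattern at the previous one or two time steps (its Sub-cases~2 and~3).
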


\begin{proof}
The neuron with identifier $1$ is an initialized single-input neuron, with the weight of its sole input source meeting the threshold $\tau_{N_1}$. These properties fulfill the requirements to apply the filtering effect theorem.

We note that since $w_{N_{0}}(1) + w_{N_{0}}(2) \geq 0 $ and that the external source of inputs only contains $1$s, the current potential is never negative.
The proof proceeds by induction on the structure of the input sequence $\mathit{inp}$.

\noindent
\textbf{Base case}: $\mathit{inp} = [\ ]$.

\noindent
The output list of both neurons is $[0]$ by design of our model.

\noindent
\textbf{Induction case}: We assume that the property holds for $\mathit{inp}$, and we must show that it also holds for $\mathit{i::inp}$, where $i$ is an additional input value.
By the induction hypothesis: 

$\begin{array}{l}
\mathit{output}_\mathit{NC}(N_{0}, \mathit{inp}) = \mathit{repeat\_pattern}([0;1;1;0], \mathit{length} (\mathit{inp})+1) \mbox{ and } \\
\mathit{output}_\mathit{NC}(N_{1}, \mathit{inp}) = \mathit{repeat}\_pattern([0;0;1;1], \mathit{length} (\mathit{inp})+1).
\end{array}$

\noindent
\textbf{Neuron $N_0$}:

\noindent
\textbf{Sub-case 1:} \textit{$\mathit{length}(\mathit{inp})$ modulo $4$ is $0$ or $1$}:

\noindent
The output of $N_1$ is $0$ at time $\mathit{length}(\mathit{inp})$. Since the current potential is never negative and $w_{N_{0}}(2) \geq \tau_{N_0}$, the threshold is reached and $N_0$ fires at time $\mathit{length}(\mathit{inp})+1$. The property is verified.

\noindent
\textbf{Sub-case 2:} \textit{$\mathit{length}(\mathit{inp})$ modulo $4$ is $2$}:

\noindent
The output of $N_0$ is $1$ and of $N_1$ is $1$ at time $\mathit{length}(\mathit{inp})$. Thus, $\mathit{curpot}_\mathit{NC}(N_0,i::\mathit{inp})\equiv w_{N_0}(1) + w_{N_0}(2)$. Since $(1 + \mathit{lk}_{N_0})\cdot(w_{N_{0}}(1) + w_{N_{0}}(2)) < \tau_{N_0}$ and $(1 + \mathit{lk}_{N_0}) \geq 1$, we deduce that:  $\mathit{curpot}_\mathit{NC}(N_0,i::\mathit{inp}) < \tau_{N_0}$. We have that $N_0$ does not fire and the property is verified.

\noindent
\textbf{Sub-case 3:} \textit{$\mathit{length}(\mathit{inp})$ modulo $4$ is $3$}:

\noindent
The output of $N_0$ is $0$ and of $N_1$ is $1$ at time $\mathit{length}(\mathit{inp})$. Since $\mathit{length}(\mathit{inp})$ modulo $4$ is $3$ implies that $\mathit{length}(\mathit{inp})\geq 3$, the output of $N_0$ is $1$ and of $N_1$ is $1$ at time $\mathit{length}(\mathit{inp}) - 1$. By the same reasoning as Sub-case 2, $\mathit{curpot}_\mathit{NC}(N_0,\mathit{inp})\equiv w_{N_0}(1) + w_{N_0}(2)$ and thus, $\mathit{curpot}_\mathit{NC}(N_0,i::\mathit{inp})\equiv (1 + lk_{N_0}) \cdot (w_{N_0}(1) + w_{N_0}(2))$. By the second hypothesis on the fourth line of the theorem statement,  $\mathit{curpot}_\mathit{NC}(N_0,i::\mathit{inp}) < \tau_{N_0}$. $N_0$ does not fire and the property is verified.

\noindent
\textbf{Neuron $N_1$}: By the delayer effect theorem (Corollary~\ref{cor:delay}):

$\begin{array}{l}
\mathit{output}_\mathit{NC}(N_{1}, i::\mathit{inp}) = \mathit{repeat\_pattern}([0;1;1;0], \mathit{length} (\mathit{inp})+2)[1:\ ]:: [0]={} \\
\quad \mathit{repeat\_pattern}([0;1;1;0], \mathit{length} (\mathit{inp})+1):: [0]={}\\ \quad\quad\mathit{repeat\_pattern}([0;0;1;1], \mathit{length} (\mathit{inp})+2).
\end{array}$
\end{proof}

\subsection{Contralateral Inhibition}
\label{subsec:continhib}
This archetype, also known as mutual inhibition,  plays an important role in several behavioural mechanisms of different living creatures. For instance, active and passive fear responses are mediated by distinct and mutually inhibitory central amygdala neurons \cite{FK2017Nature}.
For this archetype, the property of interest is known as \emph{winner takes all}. Under specific hypotheses, at a certain time step, one neuron dominates by continuously firing, thereby preventing the other neuron from firing. In this scenario, we assume that the external inputs are fixed at $1$.
This property holds when:
\begin{itemize}
\item the sum of the weights of the first neuron is greater than the threshold---$N_0$ fires under all circumstances;
\item the weight of the external inputs to the second neuron is greater than the threshold---$N_1$ fires when $N_0$ is inactive;
\item but the sum of the weights of the second neuron is negative---$N_1$'s firing is blocked when $N_0$ is active.
\end{itemize}
\begin{prop}[Output Behaviors in Contralateral Inhibition] \texttt{[CI\_Winner\_Takes\_All]}\\\ \\
$\begin{array}{l}
  \forall (\mathit{NC}: \mathit{NeuroCircuit}) (\mathit{inp}:\mathit{list}~(\mathit{nat} \rightarrow\mathit{bool})),\\
  \quad \mathit{ContraInhib}(\mathit{NC}) ~ \land ~ 
  \mathit{is\_initial}_\mathit{Cir}(\mathit{NC}) ~\land ~ \\
  \quad(\forall (b: \mathit{bool}), b \in \mathit{inp} \rightarrow b(2)= 1~\land~ b(3)= 1)~\land {} \\
  \quad w_{N_{0}}(1) + w_{N_{0}}(2) \geq \tau_{N_0}~ \land ~ w_{N_{1}}(3) \geq \tau_{N_{1}}~\land ~ w_{N_{1}}(0) + w_{N_{1}}(3) \leq 0~\rightarrow \\
  \quad\quad \mathit{output}_\mathit{NC}(N_{0}, \mathit{inp}) = \mathit{repeat}(1, \mathit{length} (\mathit{inp}))~\mathop{++}~[0] ~ \land ~\\
  \quad\quad\mathit{output}_\mathit{NC}(N_{1}, \mathit{inp}) =\begin{cases}
  [0]&\mathit{if}~\mathit{len} (\mathit{inp}) = 0 \\
  \mathit{repeat}(0, \mathit{length} (\mathit{inp})-1)\mathop{++} [1; 0] & \mathit{otherwise}
  \end{cases}
  \end{array}$
\label{prop:contra}
\end{prop}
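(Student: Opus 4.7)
The plan is to prove both output statements simultaneously by induction on the length of $\mathit{inp}$, but strengthening the induction hypothesis to also control $\mathit{curpot}_\mathit{NC}(N_1, \mathit{inp})$: once $|\mathit{inp}| \geq 2$, I claim additionally that $N_1$'s current potential is $\leq 0$. This strengthening is essential because Lemma~\ref{lem:nsteps-props}(1), lifted to circuits through the equivalences in Figure~\ref{fig:Coqcuroutequiv}, computes the next potential with a leak term $\mathit{lk}_{N_1}\cdot p_{N_1}(t-1)$ whenever $N_1$ did not fire at step $t-1$; without a sign bound on the previous potential one cannot conclude that the new potential stays below $\tau_{N_1}$. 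I will also use repeatedly that the constraint $w_{N_0}(1) < 0$ built into the \texttt{ContraInhib} record, combined with the hypothesis $w_{N_0}(1) + w_{N_0}(2) \geq \tau_{N_0}$, yields $w_{N_0}(2) > \tau_{N_0}$.

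The base case $\mathit{inp} = [\,]$ follows directly from Lemma~\ref{lem:circprops}(3). For the inductive step I would write $\mathit{inp} = i :: \mathit{inp}'$ and split on the size of $\mathit{inp}'$. If $|\mathit{inp}'| = 0$, both neurons start from potential $0$ and $\mathit{Output} = [0]$; $N_0$'s new potential is $w_{N_0}(2) > \tau_{N_0}$ and $N_1$'s is $w_{N_1}(3) \geq \tau_{N_1}$, so both fire, giving outputs $[1;0]$ as required. If $|\mathit{inp}'| = 1$, both neurons fired at the previous step by the IH, hence the leak term vanishes in Lemma~\ref{lem:nsteps-props}(1): $N_0$ gets $w_{N_0}(1) + w_{N_0}(2) \geq \tau_{N_0}$ and fires, while $N_1$ gets $w_{N_1}(0) + w_{N_1}(3) \leq 0 < \tau_{N_1}$ and goes silent with potential $\leq 0$, seeding the strengthened invariant. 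If $|\mathit{inp}'| \geq 2$, the strengthened IH says $N_0$ fired previously and $N_1$ is silent with potential $\leq 0$; then $N_0$'s new potential, with no leak because it fired, is $w_{N_0}(1)\cdot 0 + w_{N_0}(2)\cdot 1 = w_{N_0}(2) > \tau_{N_0}$ (using that $N_1$'s previous output is $0$), and $N_1$'s new potential, which does carry a leak term, is
\[
  w_{N_1}(0)\cdot 1 + w_{N_1}(3)\cdot 1 + \mathit{lk}_{N_1}\cdot \mathit{curpot}_\mathit{NC}(N_1, \mathit{inp}') \leq 0 + \mathit{lk}_{N_1}\cdot 0 = 0,
\]
which simultaneously keeps $N_1$ below $\tau_{N_1}$ and preserves the invariant.

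The main obstacle is identifying the right strengthened invariant: tracking only whether $N_1$ fires is not enough to sustain the induction, because the leak propagates the previous potential forward and, a priori, could push the potential back above $\tau_{N_1}$. Bounding $\mathit{curpot}_\mathit{NC}(N_1, \cdot)$ above by $0$ closes the loop, and the argument uses both the non-positivity of $w_{N_1}(0) + w_{N_1}(3)$ and the condition $\mathit{lk}_{N_1} \geq 0$ from the \texttt{LeakRange} constraint. A secondary technical burden, inherited from the circuit-level setup, is the careful routing of external inputs via \texttt{inp\_mult} when invoking the equivalences of Figure~\ref{fig:Coqcuroutequiv}, together with the distinction between internal and external identifiers when evaluating the weight functions at positions $0,1,2,3$.
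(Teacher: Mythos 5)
Your proposal is correct and follows essentially the same route as the paper's proof: both hinge on deriving $w_{N_0}(2) \geq \tau_{N_0}$ from $w_{N_0}(1) < 0$, on $N_1$ firing exactly at time $1$, and on the invariant that $N_1$'s current potential is non-positive from time $2$ onward (which the paper asserts in passing and you make explicit as the strengthened induction hypothesis needed to control the leak term). The only difference is presentational: the paper argues directly over the time steps $0$, $1$, and $n \geq 2$, while you package the same case split as an induction on the input list.
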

\begin{proof}
Recall that by the definition of $\mathit{ContraInhib}(\mathit{NC})$, the identifier of the external source of input to $N_0$ is $2$, and to $N_1$ is $3$.  
Also, $w_{N_0}(1)$ is negative.

\noindent
\textbf{Neuron $N_0$}: 

\noindent
We note that $w_{N_{0}}(1) + w_{N_{0}}(2) \geq \tau_{N_0}$ implies 
$w_{N_{0}}(2) \geq \tau_{N_0}$ since $w_{N_{0}}(1)$ is negative. 
As a result, if the external source of input provides $1$ as input to $N_0$, $N_{0}$ fires. 
By hypothesis, the external inputs are always $1$, so as soon as the circuit recieves an input, $N_0$ fires. Since the output list at time $0$ is $[0]$, we have: $ \mathit{output}_\mathit{NC}(N_{0}, \mathit{inp}) = \mathit{repeat}(1, \mathit{length} (\mathit{inp}))~\mathop{++}~[0]$.

\noindent
\textbf{Neuron $N_1$}:
The output list of $N_1$ at time $0$ is $[0]$ by construction of our model.

At time $1$, the last output of $N_0$ is $0$ and the external source of input provides $1$ as input. Since $w_{N_{1}}(3) \geq \tau_{N_{1}}$, $N_1$ fires.

For any time $n\geq 2$, the last output of $N_0$ is $1$ and the external source of input provides $1$ as input. By the hypothesis $w_{N_{1}}(0) + w_{N_{1}}(3) \leq 0$, we deduce that the current potential is always non-positive starting at time $2$. $N_1$ does not fire at time $n$.
Thus, we have:

$\mathit{output}_\mathit{NC}(N_{1}, \mathit{inp}) =\begin{cases}
  [0]&\mathit{if}~\mathit{len} (\mathit{inp}) = 0 \\
  \mathit{repeat}(0, \mathit{length} (\mathit{inp})-1)\mathop{++} [1; 0] & \mathit{otherwise}
  \end{cases}$
\end{proof}

\section{Conclusion}
\label{sec:concl}

In this work, we proposed a formal approach to model and validate Leaky Integrate-and-Fire neurons and archetypes.
In the literature, this is not the first attempt to the formal investigation of neural networks.
In~\cite{DMGRG16HSB,DLGRG17CSBIO}, the synchronous paradigm has been exploited to model neurons and some small neuronal circuits with a relevant topological structure and behavior and to prove some properties concerning their dynamics.
Our approach based on the use of the Rocq Prover (which is, to the best of our knowledge, the first one) turned out to be much more general.
As a matter of fact, we guarantee that the properties we prove are true in the general case, such as true for any input values, any length of input, and any amount of time.
As an example, let us consider the simple series.
In~\cite{DLGRG17CSBIO}, the authors were able to write a function (more precisely, a Lustre node) which encodes the expected behavior of the circuit.
Then, they could call a model checker to test whether the property at issue is valid for some input series with a fixed length.
Here we can prove that the desired behavior is true no matter what the length is and what the parameters of the series are.

As seen in the definition of the archetypes, the neurons are numbered in a specific order.
However, we want to ensure that a circuit is considered a particular archetype even if the identifiers of the neurons and external sources do not follow this specific numbering order.
To address this, we are working on defining an equivalence relation that verifies whether two circuits are equivalent by checking if there exists a rotation such that, after applying it to one of the circuits, both circuits become identical.
This equivalence relation will allow for more flexible classification of circuits, ensuring that the same archetype can be recognized regardless of the labeling or ordering of the neurons and external inputs.

One of our long-term goals is to verify that every neuronal circuit in nature is a composition of multiple archetypes.
A first step toward this goal is the definition of a method for composing two circuits together.
One approach is to compose circuits sequentially.
In this case, a neuron $N_1$ from the first circuit provides its output as input to a neuron $N_2$ in the second circuit, with $N_1$ replacing one of the external sources of $N_2$.
Another approach involves directly integrating two circuits.
In this case, a circuit $\mathit{NC}_1$ is plugged into another circuit $\mathit{NC}_2$, where one of the neurons in $\mathit{NC}_2$ is removed and replaced by the entire circuit $\mathit{NC}_1$.
The equivalence relation and circuit composition methods discussed are part of ongoing work and will be presented in a future paper.
This work builds directly on our library of lemmas and definitions used in this paper.

\bibliographystyle{alphaurl}
\bibliography{neuronal}

\appendix

\section{Rocq Definitions of Archetypes}
\label{app:archetypes}

Figures~\ref{fig:CoqNegloop},~\ref{fig:CoqInhib}, and~\ref{fig:CoqContra} contain the full definitions of the negative loop, inhibition, and contralateral inhibition archetypes, respectively, which were discussed at the end of Section~\ref{subsec:archetypes-props}.
\begin{figure}[htb]
\begin{verbatim}
Record NegativeLoop (c : NeuroCircuit) :=
  Make_NegLoop
    {
      OneSupplementNL : SupplInput c = 1;
      ListNeuroLengthNL : length (ListNeuro c) = 2;
      FirstNeuronNL : forall n,
        In n (ListNeuro c) -> Id (Feature n) = 0 ->
        Weights (Feature n) 1 < 0 /\ 0 < Weights (Feature n) 2;
      SecondNeuroNL : forall n,
        In n (ListNeuro c) -> Id (Feature n) = 1 ->
        0 < Weights (Feature n) 0 /\ Weights (Feature n) 2 == 0;
    }.
\end{verbatim}
\caption{Rocq representation of the negative loop archetype}
\label{fig:CoqNegloop}  
\end{figure}

\begin{figure}[htb]
\begin{verbatim}
Record Inhibition (c : NeuroCircuit) :=
  Make_Inhib
    {
      OneSupplementI : SupplInput c = 2;
      ListNeuroLengthI : length (ListNeuro c) = 2;
      FirstNeuronI : forall n,
        In n (ListNeuro c) -> Id (Feature n) = 0 ->
        Weights (Feature n) 1 == 0 /\ 0 < Weights (Feature n) 2 /\
        Weights (Feature n) 3 == 0;
      SecondNeuroI : forall n,
        In n (ListNeuro c) -> Id (Feature n) = 1 ->
        Weights (Feature n) 0 < 0 /\ Weights (Feature n) 2 == 0 /\
        0 < Weights (Feature n) 3;
    }.
\end{verbatim}
\caption{Rocq representation of the inhibition archetype}
\label{fig:CoqInhib}  
\end{figure}

\begin{figure}[htb]
\begin{verbatim}
Record ContraInhib (c : NeuroCircuit) :=
  Make_ContrInhib
    {
      OneSupplementCI : SupplInput c = 2;
      ListNeuroLengthCI : length (ListNeuro c) = 2;
      FirstNeuronCI : forall n,
        In n (ListNeuro c) -> Id (Feature n) = 0 ->
        Weights (Feature n) 1 < 0 /\ 0 < Weights (Feature n) 2 /\
        Weights (Feature n) 3 == 0;
      SecondNeuroCI : forall n,
        In n (ListNeuro c) -> Id (Feature n) = 1 ->
        Weights (Feature n) 0 < 0 /\ Weights (Feature n) 2 == 0 /\
        0 < Weights (Feature n) 3;
    }.
\end{verbatim}
\caption{Rocq representation of the contralateral inhibition archetype}
\label{fig:CoqContra}  
\end{figure}

\end{document}